\definecolor{codeblue}{rgb}{0.2,0.2,0.7}
\definecolor{codegreen}{rgb}{0,0.6,0}
\definecolor{codegray}{rgb}{0.5,0.5,0.5}
\definecolor{codepurple}{rgb}{0.58,0,0.82}
\lstdefinestyle{python}{
    language=Python,
    basicstyle=\ttfamily\footnotesize,
    keywordstyle=\color{codeblue}\bfseries,
    stringstyle=\color{codepurple},
    commentstyle=\color{codegreen}\itshape,
    numbers=left,
    numberstyle=\tiny\color{codegray},
    stepnumber=1,
    numbersep=10pt,
    backgroundcolor=\color{white},
    frame=single,
    breaklines=true,
    breakatwhitespace=false,
    showspaces=false,
    showstringspaces=false,
    showtabs=false,
    tabsize=4,
    captionpos=b
}
\newtheorem{definition}{Definition}
\newtheorem{thm}{Theorem}
\newtheorem{lem}{Lemma}
\newtheorem{pro}{Proposition}
\newtheorem{remark}{Remark}
\newtheorem{conj}{Conjecture}
\newmdtheoremenv{assumptionbox}{Assumption}
\newtheorem*{metatheorem}{Informal statement}
\newenvironment{rem}{\begin{remark}}{\qed\end{remark}}
\newcommand{\bd}{\begin{definition}} 
\newcommand{\ed}{\end{definition}} 
\newcommand{\bp}{\begin{pro}} 
\newcommand{\ep}{\end{pro}} 
\newcommand{\bt}{\begin{thm}} 
\newcommand{\et}{\end{thm}}
\newcommand{\blm}{\begin{lem}} 
\newcommand{\elm}{\end{lem}}
\let\eps\varepsilon
\newcommand{\R}{\mathbb{R}}
\newcommand{\mcR}{\mathcal{R}}
\DeclareMathOperator{\rank}{\operatorname{rank}}
\newcommand{\bi}{\begin{itemize}} 
\newcommand{\ei}{\end{itemize}} 
\newcommand{\bds}{\begin{description}} 
\newcommand{\eds}{\end{description}} 
\newcommand{\beq}{\begin{equation}} 
\newcommand{\eeq}{\end{equation}} 
\newcommand{\vecto}{\operatorname{vec}}
\newcommand{\tr}{\operatorname{tr}}
\newcommand{\norm}[1]{\|#1\|}
\newcommand{\normF}[1]{\|#1\|_\textup{F}}
\newcommand{\wnormF}[2]{\|#1\|_\textup{F,#2}}
\newcommand{\inprodF}[1]{\langle#1\rangle_\textup{F}}
\newcommand{\der}[2]{\frac{d #1}{d #2}} 
\newcommand{\trasp}[1]{#1^{\top}}
\newcommand{\diag}[1]{\operatorname{diag}(#1)}
\newcommand{\Bignorm}[1]{\Big\|#1\Big\|}
\newcommand{\until}[1]{\{1,\dots, #1\}}
\newcommand{\subscr}[2]{#1_{\textup{#2}}}
\newcommand{\setdef}[2]{\{#1 \; | \; #2\}}
\newcommand{\bigsetdef}[2]{\big\{#1 \; | \; #2\big\}}
\newcommand{\map}[3]{#1\colon #2 \rightarrow #3}
\DeclareMathOperator*{\argmin}{arg\,min}
\DeclareMathOperator*{\argmax}{arg\,max}
\newcommand{\bigO}{\mathcal{O}}
\newcommand{\ds}{\displaystyle}
\newcommand{\e}{\mathrm{e}}
\newcommand{\0}{\mbox{\fontencoding{U}\fontfamily{bbold}\selectfont0}}
\renewcommand{\S}{\bold{S}}
\definecolor{gnred}{RGB}{255,91,89}
\definecolor{gnred1}{RGB}{71,0,0} 
\definecolor{gnred2}{RGB}{117,0,0} 
\definecolor{gnred3}{RGB}{164,0,0} 
\definecolor{gnred4}{RGB}{211,0,0} 
\definecolor{gnred5}{RGB}{255,0,0} 
\definecolor{gnred6}{RGB}{255,42,34} 
\definecolor{gnred7}{RGB}{255,91,89} 
\definecolor{gnblue1}{RGB}{0,36,71}   
\definecolor{gnblue2}{RGB}{0,60,118}  
\definecolor{gnblue3}{RGB}{0,85,164}
\definecolor{gnblue4}{RGB}{0,108,212}
\definecolor{gnblue4}{RGB}{0,108,212}
\definecolor{gnblue5}{RGB}{0,133,255}  
\definecolor{gnblue6}{RGB}{35,156,255} 
\definecolor{gnblue7}{RGB}{88,177,255} 
\definecolor{gnbrown1}{RGB}{71,27,0}  
\definecolor{gnbrown2}{RGB}{117,45,0} 
\definecolor{gnbrown3}{RGB}{164,62,0} 
\definecolor{gnbrown4}{RGB}{211,80,0} 
\definecolor{gnbrown5}{RGB}{255,97,0} 
\definecolor{gnbrown6}{RGB}{255,127,26} 
\definecolor{gnbrown7}{RGB}{255,155,86} 
\title{Similarity Matching Networks: \\ Hebbian Learning
and Convergence Over Multiple Time Scales}
\author{
  Veronica Centorrino\thanks{Department of Information and Electric Engineering and Applied Mathematics, University of Salerno, Italy. {\tt\small \{vcentorrino, giovarusso\}@unisa.it}. VC and GR are supported by the European Union-Next Generation EU Mission 4 Component 1 CUP E53D23014640001.}
\and Francesco Bullo\thanks{Center for Control, Dynamical Systems, and Computation, UC Santa Barbara, CA, USA. {\tt\small bullo@ucsb.edu}. FB is supported in part by AFOSR grant FA9550-22-1-0059.}
\and Giovanni Russo\footnotemark[1]
}
\begin{document}
\maketitle
\begin{abstract}
A recent breakthrough in biologically-plausible normative frameworks for dimensionality reduction 
is based upon the similarity matching cost function and the low-rank matrix approximation problem.
Despite clear biological interpretation, successful application in several domains, and 
experimental validation, a formal complete convergence analysis remains elusive.
Building on this framework, we consider and analyze a continuous-time neural network, the \emph{similarity matching network}, for principal subspace projection. 
Derived from a min-max-min objective, this biologically-plausible network consists of three coupled dynamics evolving at different time scales:  neural dynamics, lateral synaptic dynamics, and feedforward synaptic dynamics at the fast, intermediate, and slow time scales, respectively. The feedforward and lateral synaptic dynamics consist of Hebbian and anti-Hebbian learning rules, respectively.
By leveraging a multilevel optimization framework, we prove convergence of the dynamics in the offline setting. 
Specifically, at the first level (fast time scale), we show strong convexity of the cost function and global exponential convergence of the corresponding gradient-flow dynamics. 
At the second level (intermediate time scale), we prove strong concavity of the cost function and exponential convergence of the corresponding gradient-flow dynamics within the space of positive definite matrices.
At the third and final level (slow time scale), we study a non-convex and non-smooth cost function, provide explicit expressions for its global minima, and prove almost sure convergence of the corresponding gradient-flow dynamics to the global minima. These results rely on two empirically motivated conjectures that are supported by thorough numerical experiments. Finally, we validate the effectiveness of our approach via a numerical example.
\end{abstract}

\section{Introduction}
\label{2024-sec:introduction}
Dimensionality reduction problems are ubiquitous in a wide range of domains spanning, e.g., neuroscience, signal processing, and machine learning~\cite{JW-YM:22}. Within this class of problems, which involve obtaining a lower-dimensional representation of a large input matrix while preserving the key features of the original data, here we consider principal subspace projections -- minimization problems with objectives that capture the discrepancy between the similarity of each pair of output vectors and the similarity of the corresponding input vectors. This optimization framework is known under various names, including low-rank matrix approximation~\cite{XL-ZW-YZ:15}, multidimensional scaling~\cite{TFC-MAC:00}, similarity matching~\cite{TH-CP-DBC:14}.

Over the past years, a growing body of theoretical and experimental evidence has supported the use of dimensionality reduction in neural systems~\cite{HBB:72, EO:82, BAO-DJF:96, LC-DYT:17} and this has ignited the quest for biologically plausible neural architectures capable of implementing this functionality.
A key challenge in neuroscience is to understand how such low-dimensional representations are formed in neural systems~\cite{DHH-TNW:68, HBB:72, BAO-DJF:96}, with far-reaching implications for, e.g., the design of artificial neural networks~\cite{AB-JR-CJR:12, WG-WMK-RN-LP:14, CP-DBC:19, JW-YM:22}. A promising approach to analyze/design neural networks is to establish a normative framework, see, e.g.,~\cite{BS-YB:17, CP-DBC:19, VC-AG-AD-GR-FB:23a}, which entails formulating biologically motivated optimization problems (OP) and transcribing these problems into biologically plausible networks able to compute the optimal solution of the original OP. 
In this context, a recent breakthrough in biologically-plausible normative frameworks for dimensionality reduction based upon the similarity matching cost function has been proposed in~\cite{TH-CP-DBC:14}, further developed in~\cite{CP-TH-DBC:15, CP-AMS-DBC:17} and subsequent works (see~\cite{CP-DBC:19} for a survey).
These networks derive from a cost function that embeds the similarity matching objective into a higher-dimensional space comprising neural, lateral synaptic, and feedforward synaptic variables, and implement local Hebbian and anti-Hebbian learning rules~\cite{DOH:49, WG-WK:02}. Additionally, the proposed network are closely related to the F\"{o}ldi{\'{a}}k's network~\cite{PF:90}, which features Hebbian updates for feedforward connections and anti-Hebbian updates for lateral connections. 
The framework proposed by~\cite{TH-CP-DBC:14, CP-TH-DBC:15, CP-AMS-DBC:17} -- which relies on discrete-time synaptic dynamics -- has been successfully applied to a number of dimensionality reduction problems~\cite{CP-DBC:14, CP-DC:15, CP-SM-DBC:17, DL-CP-DBC:22} and has been experimentally validated~\cite{NMC-CP-DBC:23}. However, convergence has been shown mostly empirically and a formal complete convergence analysis in offline and online settings remains elusive. Establishing convergence in the offline framework is not only key for theoretical validation but also provides a foundation for extending these guarantees to online computations, mimicking real-time neural computations.
Moreover, continuous-time modeling of both neural activity~\cite{JJH-DWT:86, MMC-JPC-at-all:12, SS-JPC:19} and synaptic plasticity~\cite{DWD-JJH:92, MG-OF-PB:12, LK-ML-JJES-EKM:20, VC-FB-GR:22k, DGC-LFA:24}, possibly co-evolving at different time scales, might hold an advantage in terms of biological plausibility over discrete-time networks. When taking this continuous-time modeling approach, tools from continuous-time dynamical systems can be used to characterize key network properties. For example, contraction theory can be used to characterize \emph{stability} and \emph{robustness}~\cite{WL-JJES:98, FB:24-CTDS, GR-MDB-EDS:10a, SX-GR-RHM:21}, also in dynamics with multiple time scales~\cite{LC-FB-EDA:23g}.

Motivated by these considerations, we consider and characterize the behavior of a continuous-time neural network inspired by that in~\cite{CP-AMS-DBC:17}, that features local Hebbian and anti-Hebbian learning rules. The network consists of three coupled dynamics -- neural, lateral synaptic, and feedforward synaptic -- co-evolving on fast, intermediate, and slow time scale, respectively. Specifically, our key technical contributions can be summarized as follows:
\begin{enumerate}
\item We introduce a modification of the network proposed by~\cite{CP-AMS-DBC:17}: the \emph{similarity matching network over three time scales}. This is a continuous-time neural network that naturally arises from the embedded similarity matching objective. This network consists of three dynamics co-evolving on three different time scales: neural dynamics, lateral synaptic dynamics, and feedforward synaptic dynamics at the fast, intermediate, and slow time scales, respectively. The feedforward and lateral synaptic dynamics consist of Hebbian and anti-Hebbian learning rules, respectively.
By leveraging a multilevel optimization framework, we prove convergence of the dynamics in the offline setting. In this framework, the OP from which the dynamics are derived is structured in three distinct levels, each corresponding to a specific time scale: (i) the first level addresses the OP with respect to the neural variables, (ii) the second level addresses the OP with respect to the lateral synaptic weights, and (iii) the third level focuses on the OP with respect to feedforward synaptic weights.

\item The three co-evolving dynamics follow from a three-level approach to solve the underlying OP. We thoroughly characterize the behavior of our proposed network and its interplay with the OP. Namely: (a) the first/fast level consists of a minimization problem with a strongly convex cost function. This ensures that a unique optimal solution for the OP, say $Y^\star$, exists -- we compute $Y^\star$ explicitly. At this level, we derive the dynamics for the neural variables, which we term \emph{continuous-time gradient-flow neural dynamics}. Leveraging contraction theory, we prove that this dynamics globally exponentially converge to $Y^\star$ -- that is, solutions of this dynamics exponentially converge to $Y^\star$ regardless of initial conditions. By proving contractivity, we also guarantee that the dynamics features a number of highly ordered transient and asymptotic behaviors; (b) the second/intermediate level consists of a maximization problem with concave cost. We prove strong concavity of the cost function, guaranteeing a unique optimal solution for the OP, say  $M^\star$, in the space of positive definite matrices. We provide the explicit expression of $M^\star$ and derive the dynamics of the lateral synapses. We show -- again using contraction tools -- that this \emph{continuous-time gradient-flow lateral synaptic dynamics} exponentially converges to $M^\star$;
(c) for the third/slow and final level, which involves a non-convex and non-smooth cost function and corresponds to feedforward synapses, we provide explicit expressions for its global minima, denoted by $W^\star$.
Then, we propose the corresponding \emph{continuous-time gradient-flow feedforward synaptic dynamics} and show that these dynamics converge to a global minimum $W^\star$ almost surely from random initialization. This result relies on two mild technical conjectures: (1) the forward invariance of the set of full-rank matrices -- remarkably, if this assumption is not satisfied, then the dynamics itself is not well-posed; and (2) a characterization of strict saddle points, which is standard in the literature although in this case analytically challenging to verify. Nevertheless, for both conjectures, we provide strong intuitive support and empirical validation in Appendix~\ref{apx:conjectures}. The code to replicate our numerical analysis is available at \url{https://shorturl.at/l1qVf}.
\item Moreover, for the continuous-time gradient-flow lateral synaptic dynamics, we show a key invariance property: the lateral weight matrix remains positive definite and symmetric if initialized so.
\item Finally, we illustrate the effectiveness of our results via numerical experiments.
\end{enumerate}

The rest of this paper is organized as follows. First, we provide the mathematical preliminaries (Section~\ref{sec:mathematical_preliminaries}) and (Section~\ref{sec:similarity_matching}) a self-contained review of the similarity matching problem together with the biologically plausible offline optimization algorithm from~\cite{CP-AMS-DBC:17}. In Section~\ref{sec:no_math}, we intuitively describe our technical {framework} and give the main results in Section~\ref{sec:convergence}. Proofs of the main and additional instrumental results, together with empirical evidence supporting our conjectures, are left to the appendices. In Section~\ref{sec:simulations}, we illustrate the effectiveness of our approach via a numerical example. A final discussion and concluding remarks are given in Section~\ref{sec:conclusions}. 
\section{Mathematical Preliminaries}
\label{sec:mathematical_preliminaries}
Given $x \in \R^n$, $\diag{x} \in \R^{n \times n}$ denotes the diagonal matrix with diagonal entries equal to $x$. $I_n$ and $\0_{n,m}$ are the $n \times n$ identity matrix and the $n \times m$ zero matrix, respectively. The symbol $\otimes$ denotes the Kronecker product. The set of $n \times n$ orthogonal matrices is $\mathcal{O}_n = \setdef{A \in \R^{n \times n}}{A^\top A = A A^\top = I_n}$. We let $\mathcal{P}_{n} \subset \mathcal{O}_n$ denote the set of $n \times n$ permutation matrices, that is the set of $n\times n$ binary matrix having exactly one entry equal to $1$ in each row and each column with all other entries equal to $0$. We let $\S^n$ (resp. $\S_{\succ 0}^n$) denote the set of real symmetric (resp. symmetric positive definite) $n\times n$ matrices. For $A, B \in \S^n$, we write $A \preceq B$ (resp. $A \prec B$) if $B-A$ is positive semidefinite (resp. definite). The vector of eigenvalues of a matrix $A \in \R^{n \times n}$ is denoted by $\lambda^A \in \R^n$, and $\subscr{\lambda}{min}(A)$ denotes the minimum eigenvalue of $A$. The trace of $A$ is $\tr(A) = \sum_{i=1}^{n} a_{ii}$, where $a_{ii}$ are the diagonal entries of $A$. A triple $(U_A, \Sigma_A, V_A)$ denotes the \emph{singular value decomposition} (SVD) of a matrix $A \in \R^{m \times n}$, that is $A = U_A \Sigma_A V_A^\top$, where 
\bi
\item $U_A = (u_1^A, \dots, u_m^A) \in \R^{m \times m}$, with $u_i^A \in \R^{m}$ being orthogonal and having the eigenvectors of $A A^\top$ as columns;
\item $V_A = (v_1^A, \dots, v_n^A) \in \R^{n \times n}$, with $v_j^A \in \R^n$ being orthogonal and having the eigenvectors of $A^\top A$ as columns;
\item the matrix $\Sigma_A$ is given by
\[
\Sigma_A =
\begin{bmatrix}
\Sigma^r_A & \0_{r,n-r} \\
\0_{m-r, r} & \0_{n-r,n-r} \\
\end{bmatrix} \in \R^{m \times n},
\]
with $\Sigma^r_A \in \R^{r \times r}$, $r \leq \min\{m,n\}$, being positive and diagonal. Namely, $\Sigma^r_A = \diag{\sigma_j^A} := \diag{\sqrt{\lambda_j}}$, $j \in \until{r}$, with $\lambda_j$ positive eigenvalues of $A A^\top$ and $A^\top A$.
\ei
When $m = n$, we have $U_A = V_A$, $\Sigma_A = \Lambda_A := \diag{\lambda^A}$, and refer to the SVD of $A$ by the pair $(U_A, \Sigma_A)$. Without loss of generality, the components of $\lambda^A$ are always considered ordered (i.e., $\lambda^A_1 \geq \dots \geq \lambda^A_n >0$).

Given $A$ and $B \in \R^{n\times m}$, their \emph{Frobenius inner product} is defined as $\inprodF{A, B} = \tr(\trasp{A}B)$. The corresponding  \emph{Frobenius norm} is given by $\normF{A}^2 := \tr (A\trasp{A}) = \tr (\trasp{A}A)$.
Given $P \in \S_{\succ 0}^n$, the \emph{weighted Frobenius norm} is $\wnormF{A}{P} = \tr(\trasp{A}PA)$.
The \emph{vectorization} of $A \in \R^{m\times n}$ -- denoted by $\vecto(A)$ -- is the vector formed by stacking the columns of $A$ into an $nm \times 1$ vector. For any matrices $A \in \R^{m\times n}$ and $B \in \R^{n\times m}$, we have that
\begin{enumerate}[label=\textup{(V\arabic*)}]
\item $\tr(AB) = \trasp{(\vecto(\trasp{A}))}\vecto(B)$; \label{prop:vec2}
\item $\vecto(AB) = (I_{m} \otimes A)\vecto(B)=(\trasp{B }\otimes I_{n})\vecto(A)$. \label{prop:vec3}
\end{enumerate}
\subsection{Convex Optimization and Contracting Dynamics}
\label{sec:contraction_theory}
We recall basic notions and results of convex optimization and contraction theory. 

\bd[Convex and strongly convex functions]
Let $\map{f}{\R^n}{\R}$ be a scalar function defined over a convex set $C \subseteq \R^n$. The function $f$ is (i) \emph{convex} if 
$$
f(\alpha x_1 + (1-\alpha) x_2) \leq \alpha f(x_1) + (1-\alpha) f(x_2), \quad \forall x_1 \neq x_2 \in C, \alpha \in {[0, 1]};
$$
(ii) \emph{strongly convex with parameter $m>0$} if 
$$
f(\alpha x_1 + (1-\alpha) x_2) \leq \alpha f(x_1) + (1-\alpha) f(x_2) + \frac{1}{2} m\alpha(1-\alpha)\norm{x-y}_2^2, \quad \forall x_1 \neq x_2 \in C, \alpha \in {[0, 1]}.
$$
\ed
Given a convex and differentiable function $\map{f}{\R^n}{\R}$, consider the \emph{optimization problem}
\beq\label{eq:unconstrained}
\min_{x \in \R^n} f(x).
\eeq

A paradigm that is becoming increasingly popular to solve OPs is to synthesize continuous-time dynamical systems that converge to an equilibrium, which corresponds to the optimal solution of the OP. In particular, a common approach to solve problem~\eqref{eq:unconstrained} is through the \emph{continuous-time gradient flow dynamics}, given by
\beq
\label{eq:grad_flow}
\dot x = - \nabla f(x).
\eeq
Significant efforts (see, e.g.,~\cite{RIK:60, WMH-SS:74, SL:84, PMW-JJES:20, ACBdO-MS-EDS:24}) have been directed toward characterizing the stability and convergence rates of the dynamics~\eqref{eq:grad_flow}, along with their robustness against uncertainty. In this context, a suitable tool to assess convergence is \emph{contraction theory}~\cite{WL-JJES:98, FB:24-CTDS}.
In contrast to standard convergence methods, rather than focusing on specific attractors, contraction theory focuses on characterizing how the distance between trajectories evolve. Intuitively, a system is contracting if any two trajectories converge (exponentially and without overshoot) towards each other.

Formally, consider a dynamical system $\dot{x}(t) = f\bigl(t,x(t)\bigr)$, where $\map{f}{\R_{\geq 0} \times C}{\R^n}$ is a smooth nonlinear function with $C\subseteq \R^n$ forward invariant set for the dynamics. The system is \emph{strongly contracting} if the distance, computed with respect to some norm $\norm{\cdot}$, between any two trajectories $x(\cdot)$ and $y(\cdot)$ rooted from initial conditions $x(0)$ and $y(0)$ diminishes exponentially with rate $c$, i.e.,
$$\|x(t) - y(t)\| \leq \e^{-ct}\|x(0) -y(0)\|, \quad \textup{ for all } t \geq 0.$$
If a system is contracting, then it exhibits ordered transient and asymptotic behaviors exhibited by contracting dynamics~\cite{WL-JJES:98, FB:24-CTDS, GR-MDB-EDS:10a, SX-GR-RHM:21}. In particular, initial conditions are exponentially forgotten, time-invariant dynamics admit a unique globally exponential stable equilibrium, and enjoy highly robust behaviors.

The following lemma, also known as Kachurovskii’s Theorem~\cite{RIK:60}, gives conditions under which the continuous-time gradient-flow dynamics~\eqref{eq:grad_flow} is strongly contracting.
\begin{lem}[Contractivity of continuous-time gradient flow dynamics]
\label{lem:gradient_flow}
Let $\map{f}{\R^n}{\R}$ be continuously differentiable. The following statements are equivalent:
\begin{enumerate}
\item $f$ is strongly convex with parameter $\nu>0$ (and problem~\eqref{eq:unconstrained} has a global minimum $x^\star \in \R^n$),
\item the gradient-flow dynamics~\eqref{eq:grad_flow} are strongly contracting with rate $\nu$ with respect to the $\ell_2$-norm (and has a unique equilibrium point $x^\star \in \R^n$).
\end{enumerate}
\end{lem}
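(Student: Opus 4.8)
The plan is to route both statements through a single two-point (``incremental'') inequality --- the $\nu$-monotonicity of the gradient:
\[
\langle \nabla f(x) - \nabla f(y),\, x - y\rangle \;\geq\; \nu\,\|x-y\|_2^2 \qquad \text{for all } x,y \in \R^n,
\]
which I will call $(\star)$. Concretely, I would establish the chain (1) $\Leftrightarrow$ $(\star)$ $\Leftrightarrow$ [one-sided Lipschitz estimate for $F:=-\nabla f$] $\Leftrightarrow$ (2). Passing through $(\star)$ rather than through the Hessian condition $\Hess f \succeq \nu I_n$ is essential here: $f$ is only assumed continuously differentiable, so a second derivative need not exist, whereas $(\star)$, strong convexity, and the contraction estimate are all statements about pairs of points that make sense under mere $C^1$ regularity.

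\emph{Step 1: (1) $\Leftrightarrow$ $(\star)$.} For the forward implication, I use the standard first-order characterization of $\nu$-strong convexity, $f(y) \geq f(x) + \langle \nabla f(x),\, y-x\rangle + \tfrac{\nu}{2}\|y-x\|_2^2$ for all $x,y$ (equivalent to the definition for $C^1$ functions): applying it at the ordered pairs $(x,y)$ and $(y,x)$ and adding the two inequalities yields $(\star)$. For the converse, fix $x,y$, set $g(t) := f\bigl(x + t(y-x)\bigr)$, write $f(y)-f(x) = \int_0^1 \langle \nabla f\bigl(x+t(y-x)\bigr),\, y-x\rangle\, dt$, and lower-bound the integrand by applying $(\star)$ to the pair $\bigl(x+t(y-x),\,x\bigr)$; this recovers the first-order inequality, hence the definitional one. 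These are classical computations that I would only sketch.

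\emph{Step 2: $(\star)$ $\Leftrightarrow$ (2).} With $F := -\nabla f$, $(\star)$ reads $\langle F(x)-F(y),\, x-y\rangle \leq -\nu\|x-y\|_2^2$, the incremental (two-point) form of the differential contraction condition. I first note that this estimate alone makes $\dot x = F(x)$ well-posed: continuity of $F$ gives local existence, the one-sided bound gives uniqueness, and the a~priori estimate $\langle F(x),x\rangle = \langle F(x)-F(0),x\rangle + \langle F(0),x\rangle \leq -\nu\|x\|_2^2 + \|F(0)\|_2\,\|x\|_2$ together with Gr\"onwall's inequality gives forward completeness --- so the trajectories appearing in the definition of strong contractivity genuinely exist for all $t \geq 0$. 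Given two such trajectories $x(\cdot), y(\cdot)$, set $V(t) := \tfrac12\|x(t)-y(t)\|_2^2$; then $\dot V(t) = \langle x(t)-y(t),\, F(x(t))-F(y(t))\rangle \leq -2\nu V(t)$, and Gr\"onwall's inequality gives $\|x(t)-y(t)\|_2 \leq \e^{-\nu t}\|x(0)-y(0)\|_2$, i.e.\ strong contractivity with rate $\nu$. Conversely, assume the contraction estimate; fix $x,y$ and let $x(\cdot), y(\cdot)$ be the trajectories with $x(0)=x$, $y(0)=y$, which are $C^1$ since $F$ is continuous. Then $\frac{d}{dt}\big|_{t=0}\|x(t)-y(t)\|_2^2 = 2\langle F(x)-F(y),\, x-y\rangle$, whereas the contraction bound forces this quantity to be $\leq \frac{d}{dt}\big|_{t=0}\,\e^{-2\nu t}\|x-y\|_2^2 = -2\nu\|x-y\|_2^2$; dividing by $2$ gives $(\star)$.

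\emph{Step 3 and the main obstacle.} It remains to address the parenthetical clauses. Under (1), strong convexity with $\nu>0$ gives $f(x) \geq f(0) + \langle\nabla f(0),\,x\rangle + \tfrac{\nu}{2}\|x\|_2^2 \to +\infty$ as $\|x\|_2 \to \infty$, so $f$ is coercive and attains a global minimum $x^\star$, and $(\star)$ forces $x^\star$ to be the unique critical point, hence the unique equilibrium of~\eqref{eq:grad_flow}. Under (2), strong contractivity supplies a unique, globally exponentially stable equilibrium $x^\star$ with $\nabla f(x^\star)=0$, and $(\star)$ --- now available from Step 2 --- upgrades this critical point to the global minimizer of~\eqref{eq:unconstrained}. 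The genuinely delicate points are all regularity-related: since $f$ is only $C^1$, $\nabla f$ need not be locally Lipschitz, so neither Picard--Lindel\"of nor the Hessian characterization applies, and the one-sided Lipschitz inequality must do double duty --- both as the bridge between $(\star)$ and (2) and as the guarantee that the trajectories referenced in the definition of contractivity are unique and forward-complete, without which statement (2) would be ill-posed. The only other point requiring care is the one-sided differentiation of $t \mapsto \|x(t)-y(t)\|_2^2$ in the ``(2) $\Rightarrow$ $(\star)$'' step, which is legitimate precisely because $x(\cdot)$ and $y(\cdot)$ are continuously differentiable; everything else is bookkeeping with the three inequalities above.
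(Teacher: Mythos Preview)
Your proof is correct and is precisely the classical argument behind Kachurovskii's theorem: route both directions through the monotonicity inequality $(\star)$, and handle the $C^1$-only regularity via one-sided Lipschitz bounds rather than a Hessian condition. The paper, however, does not supply its own proof of this lemma --- it is stated as a known result with a citation to Kachurovskii~\cite{RIK:60} and then invoked as a black box in the proofs of Theorems~\ref{lem:layer1} and~\ref{lem:layer2}. So there is no paper proof to compare against; your write-up stands on its own as a complete derivation of the cited fact, and in fact goes further than what the paper needs by carefully treating well-posedness of the flow under mere continuity of $\nabla f$.
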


\section{Similarity Matching}
\label{sec:similarity_matching}
Following the set-up from~\cite{CP-AMS-DBC:17}, consider $T$ centered input data samples\footnote{Data samples are \emph{centered} when adjusted to have zero mean.} in $\R^n$, denoted by $x_k \in \R^n$, $k \in \until{T}$. Let $X \in \R^{n \times T}$ be the input matrix obtained by concatenating these samples, i.e., $X := [x_1, \dots, x_T]$. The corresponding output representations in $\R^m$, $m \ll n$, are column-vectors $y_k \in \R^m$, concatenated into the matrix $Y := [y_1, \dots, y_T] \in \R^{m \times T}$.
The goal of the similarity matching problem is to find an output representation such that the similarity between each pair of output vectors matches the similarity between the corresponding input vectors. This \emph{similarity matching problem} can be formalized as:
\beq
\label{eq:similarity matching_matrix}
\min_{Y\in \R^{m \times T}} \frac{1}{T^2} \normF{\trasp{X}X - \trasp{Y}Y}^2 := \operatorname{SM(Y)},
\eeq
where: (i) the products $X^\top X$ and $Y^\top Y$ measure pairwise similarities between input and output samples, respectively; (ii) the cost is also known as similarity matching objective.
\subsection{Solution of the Similarity Matching Problem}
In \cite[Proposition 2.4]{XL-ZW-YZ:15} it is shown that the optimal solution of the OP~\eqref{eq:similarity matching_matrix} is given by the projections of input data into the principal subspace of their covariance matrix. Moreover, the problem has no local minima other than the principal subspace solution. First, we recall the following result, which establishes the equivalence between solving the OP~\eqref{eq:similarity matching_matrix} and computing
the dominant eigenspace of $\trasp{X}X$ associated with its non-negative eigenvalues and gives the explicit form of the minima of the OP~\eqref{eq:similarity matching_matrix}.
\begin{lem}
\label{lem:exact_sol_sm}
Assume that the matrix $\trasp{X}X \in \R^{T \times T}$ has $k>m$ positive eigenvalues. Let $\lambda_1 \geq \lambda_2 \geq \dots \geq \lambda_T \geq 0$ be the eigenvalues of $\trasp{X}X \in \R^{T \times T}$, and let $v_1, \dots, v_T \in \R^T$ be their corresponding unit eigenvectors.
Then, the point $Y^\star \in \R^{m \times T}$ is a minimizer of the OP~\eqref{eq:similarity matching_matrix} if and only if
$
Y^\star = U \Sigma^m \trasp{V_m},
$
where $V_m = (v_1, \dots, v_m)\in \R^{T\times m}$, $\Sigma^m = \diag{\sqrt{\lambda_1}, \dots ,\sqrt{\lambda_m}} \in \R^{m\times m}$, and $U \in \R^{m \times m}$ is an arbitrary orthogonal matrix.
\end{lem}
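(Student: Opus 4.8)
The plan is to recast problem~\eqref{eq:similarity matching_matrix} as a best low-rank positive-semidefinite approximation problem and then invoke the Eckart--Young--Mirsky theorem. Dropping the harmless constant $T^{-2}$, the first step is to observe that, as $Y$ ranges over $\R^{m\times T}$, the Gram matrix $\trasp{Y}Y$ ranges exactly over $\mcB_m:=\setdef{B\in\S^T}{B\succeq 0,\ \operatorname{rank}(B)\le m}$: the inclusion $\subseteq$ is immediate, and for $\supseteq$ one writes $B=\trasp{C}C$ with $C\in\R^{r\times T}$, $r=\operatorname{rank}(B)\le m$, from a reduced spectral factorization, and pads $C$ with $m-r$ zero rows to get $Y\in\R^{m\times T}$ with $\trasp{Y}Y=B$. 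Thus \eqref{eq:similarity matching_matrix} is equivalent to $\min_{B\in\mcB_m}\normF{\trasp{X}X-B}^2$.

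Second, I would diagonalize: writing the spectral decomposition $\trasp{X}X=V\Lambda\trasp{V}$ with $V=(v_1,\dots,v_T)\in\mathcal{O}_T$ and $\Lambda=\diag{\lambda_1,\dots,\lambda_T}$, orthogonal invariance of the Frobenius norm together with the substitution $Z:=YV$ (a bijection of $\R^{m\times T}$, with inverse $Z\mapsto Z\trasp{V}$) gives $\normF{\trasp{X}X-\trasp{Y}Y}^2=\normF{\Lambda-\trasp{Z}Z}^2$, so the problem reduces to the best rank-$\le m$ positive-semidefinite approximation of the diagonal matrix $\Lambda$. By the Eckart--Young--Mirsky theorem (cf.\ \cite[Proposition~2.4]{XL-ZW-YZ:15}), the best rank-$\le m$ approximation of $\Lambda$ over \emph{all} matrices is the truncation $\Lambda_m:=\diag{\lambda_1,\dots,\lambda_m,0,\dots,0}$, with optimal value $\sum_{i=m+1}^{T}\lambda_i^2$; since $\lambda_i\ge 0$ we have $\Lambda_m\in\mcB_m$, so the positive-semidefiniteness constraint is inactive and every minimizer $Z^\star$ obeys $\trasp{Z^\star}Z^\star=\Lambda_m$.

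Third, I would solve the matrix equation $\trasp{Z}Z=\Lambda_m$ over $Z\in\R^{m\times T}$ and pass back. Partition $Z=[\,Z_1\mid Z_2\,]$ with $Z_1\in\R^{m\times m}$: the last $T-m$ diagonal entries of $\trasp{Z}Z$, being the squared norms of the corresponding columns of $Z$, vanish, forcing those columns to be zero, so $Z_2=\0_{m,T-m}$ and $\trasp{Z_1}Z_1=(\Sigma^m)^2$. Since $\Sigma^m$ is invertible (this is where the hypothesis $k>m$ enters, via $\lambda_m>0$), the matrix $U:=Z_1(\Sigma^m)^{-1}$ satisfies $\trasp{U}U=I_m$, hence $U\in\mathcal{O}_m$ and $Z_1=U\Sigma^m$. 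Undoing $Z=YV$ gives $Y^\star=Z^\star\trasp{V}=U\Sigma^m\trasp{V_m}$ with $V_m=(v_1,\dots,v_m)$ and $U\in\mathcal{O}_m$ arbitrary; conversely, for any such $Y^\star$ one computes $\trasp{Y^\star}Y^\star=V_m(\Sigma^m)^2\trasp{V_m}=V\Lambda_m\trasp{V}$, whence $\normF{\trasp{X}X-\trasp{Y^\star}Y^\star}^2=\normF{\Lambda-\Lambda_m}^2$ attains the optimal value, so $Y^\star$ is a minimizer.

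The step carrying essentially all the content is the Eckart--Young--Mirsky reduction together with the observation that positive-semidefiniteness is inactive; the concluding ``matrix square-root'' argument is routine linear algebra, and it could instead be run directly (without the substitution $Z=YV$) by noting that any minimizer must satisfy $\operatorname{Im}(\trasp{Y^\star})=\spn\{v_1,\dots,v_m\}$. One point deserves care: the ``only if'' direction tacitly uses the spectral gap $\lambda_m>\lambda_{m+1}$; without it the dominant $m$-dimensional eigenspace---hence $V_m$---is not unique, and the minimizer set becomes the union of the sets $\setdef{U\Sigma^m\trasp{V_m}}{U\in\mathcal{O}_m}$ over all admissible $V_m$. (The hypothesis $k>m$ supplies only $\lambda_m>0$, not a gap, and is exactly what makes $\Sigma^m$ invertible in the last step.) An alternative that avoids Eckart--Young--Mirsky is to write out the stationarity condition $Y(\trasp{X}X-\trasp{Y}Y)=\0$ and classify its solutions via an interlacing/second-order argument, but the low-rank-approximation route is shorter.
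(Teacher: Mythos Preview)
The paper does not supply its own proof of this lemma; it is stated as a recalled result, attributed to \cite[Proposition~2.4]{XL-ZW-YZ:15}. Your Eckart--Young--Mirsky argument is correct and is the standard route to this characterization; the reduction to $\min_{B\in\mcB_m}\normF{\trasp{X}X-B}^2$, the observation that the PSD constraint is inactive because $\Lambda_m\succeq 0$, and the square-root extraction of $Y$ from $\trasp{Y}Y$ are all sound.

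Your remark about the spectral gap is apt and worth keeping: the ``only if'' direction, as the lemma is phrased, tacitly requires $\lambda_m>\lambda_{m+1}$ so that the dominant $m$-dimensional eigenspace (and hence $V_m$, up to signs in non-degenerate eigendirections) is uniquely determined. The stated hypothesis $k>m$ yields only $\lambda_m>0$, which is exactly what you use for the invertibility of $\Sigma^m$ in the last step but does not by itself pin down $V_m$. This is a genuine omission in the statement rather than in your argument.
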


\subsection{Biologically Plausible Model}
\label{sec:biologically_plausible}
Starting from problem~\eqref{eq:similarity matching_matrix}, in~\cite{CP-AMS-DBC:17} a single-layer discrete time Hebbian/anti-Hebbian neural network (also termed as algorithm in~\cite{CP-AMS-DBC:17}) for dimensionality reduction is derived. The algorithm arises from a nested optimization problem obtained by embedding the OP~\eqref{eq:similarity matching_matrix} into a higher-dimensional space. Since this formulation is at the basis of the continuous-time network we consider and its convergence analysis (Section~\ref{sec:no_math}), we next provide a self-contained derivation of the embedding.  Consequently, we then recall the biologically plausible offline algorithm from~\cite{CP-AMS-DBC:17}.

\paragraph{Embedding problem~\eqref{eq:similarity matching_matrix} into a higher-dimensional space.}
Consider the OP~\eqref{eq:similarity matching_matrix}. By expanding the square of the Frobenius norm, the OP~\eqref{eq:similarity matching_matrix} is equivalent to
\beq
\label{eq:sm_vec}
\min_{Y\in \R^{m \times T}} \frac{1}{T^2} \tr(- 2\trasp{X}X\trasp{Y}Y + \trasp{Y}Y\trasp{Y}Y).
\eeq
Starting from the minimization problem~\eqref{eq:sm_vec},~\cite{CP-AMS-DBC:17} derived the following multi-variable OP which has the same $Y$-minima as problem~\eqref{eq:sm_vec} (we refer to Appendix~\ref{apx:eqW_M} for a self-contained derivation)
\beq
\label{eq:sm_offline}
\min_{W \in \R^{m\times n}} \max_{M \in \R^{m\times m}} \min_{Y\in \R^{m \times T}} 2\normF{W}^2 - \normF{M}^2 + \frac{2}{T}\wnormF{Y}{M}^2 - \frac{4}{T}\inprodF{WX, Y}.
\eeq

\paragraph{Biologically plausible offline optimization algorithm proposed by~\cite{CP-AMS-DBC:17}.}
Initialize the matrices $W \in \R^{m\times n}$ and $M\in \R^{m\times m}$, ensuring that $M$ is positive definite. The algorithm consists of two iterative steps, which are repeated until convergence:
\begin{enumerate}[label=(\arabic*)]
\item \emph{Optimization with respect to $Y$:} for fixed $W$ and $M$, solve the OP~\eqref{eq:sm_offline} with respect to $Y$, yielding:
\[
Y = M^{-1}WX := Y^\star.
\]
\item 
\label{item:learning_rules}
\emph{Optimization with respect to $M$ and $W$:} for fixed $Y$, perform a gradient descent-ascent step with respect to $W$ and $M$:
\[
\begin{aligned}
W &\leftarrow W + 2\eta \left(\frac{1}{T}Y\trasp{X} - W \right), \\
M &\leftarrow M + \frac{\eta}{\tau} \left(\frac{1}{T}Y\trasp{Y} - M\right),
\end{aligned}
\]
where $\tau >0$, and the step size $\eta \in {]0,1[}$, may depend on the iteration.
\end{enumerate}
\begin{rem}
    In the algorithm, $Y$ denotes the neuronal activity, while $W$ and $M$ represent the weights of the feedforward and lateral connections, respectively. The learning rules in~\ref{item:learning_rules} are local, following Hebbian and anti-Hebbian rules, respectively.

\end{rem}
\begin{rem}
\label{rem:positive_M}
In~\cite{CP-AMS-DBC:17} and related works it is stated (but not formally proven) that $M$ stays positive definite if initialized as
such. This property is what ensures the existence of the optimal $Y^\star$.
\end{rem}

\section{The Similarity Matching Network over Multiple Time Scales}
\label{sec:no_math}
We introduce the continuous-time neural network over multiple time scales that solves the similarity matching problem. This network 
features three interacting dynamics, co-evolving on different time scales. We describe the working principle of the network and then analyze the simpler scalar case to build intuition for the higher-dimensional framework. Given input data $X \in \R^{n \times T}$, consider the OP~\eqref{eq:sm_offline} with weight matrix $M \in \S_{\succ 0}^m$, that is the following \emph{embedded similarity matching problem}
\beq
\label{eq:sm_offline_general}
\min_{W \in \R^{m\times n}} \max_{M \in \S_{\succ 0}^m} \min_{Y\in \R^{m \times T}} 2\normF{W}^2 - \normF{M}^2 + \frac{2}{T}\wnormF{Y}{M}^2 - \frac{4}{T}\inprodF{WX, Y},
\eeq
and let $\map{S}{\R^{m\times n} \times \S_{\succ 0}^m \times \R^{m \times T}}{\R}$ denote its cost function. 
Adopting a time scale separation between neural dynamics, lateral synaptic dynamics, and feedforward synaptic dynamics, to solve the OP~\eqref{eq:sm_offline_general} we consider the following continuous-time dynamics, the \emph{similarity matching network over three time scales} (which we sometimes abbreviate as similarity matching network):
\begin{subequations}
\label{eq:3_time_scales_system}
\begin{empheq}[left=\empheqlbrace]{align}
\eps_1 \eps_2 \dot Y & = \frac{4}{T} \bigl(WX - MY \bigr), \quad \quad Y(0) := Y_0,\label{eq:fast}\\
\eps_2 \dot M & = - 2M + \frac{2}{T} Y Y^\top, \quad \quad M(0) := M_0, \label{eq:slow}\\
\dot W & = -4W + \frac{4}{T}Y X^\top, \quad \quad W(0) := W_0, \label{eq:super_slow}
\end{empheq}
\end{subequations}
where $0 < \eps_1 \eps_2 \ll \eps_2 \ll 1$ are the time-scale inducing parameters, $Y_0 \in \R^{m \times T}$, $M_0 \in \mathbf{S}^{m}_{>0}$, and $W_0 \in \R^{m \times n}$ are the initial conditions. 
Note that the feedforward synaptic dynamics~\eqref{eq:super_slow} follows a Hebbian rule, while the lateral synaptic dynamics~\eqref{eq:slow} follow an anti-Hebbian learning rule (due to the minus sign in the neural dynamics~\eqref{eq:fast}). The dynamics~\eqref{eq:3_time_scales_system} are schematically illustrated in Figure~\ref{fig:network}. In the figure: (i) to streamline presentation, panel a) shows the block scheme of the network for a single data sample; (ii) panel b) illustrates the block scheme for $T$ data samples.

\begin{figure}[h!]
\centering
\includegraphics[width=.9\linewidth]{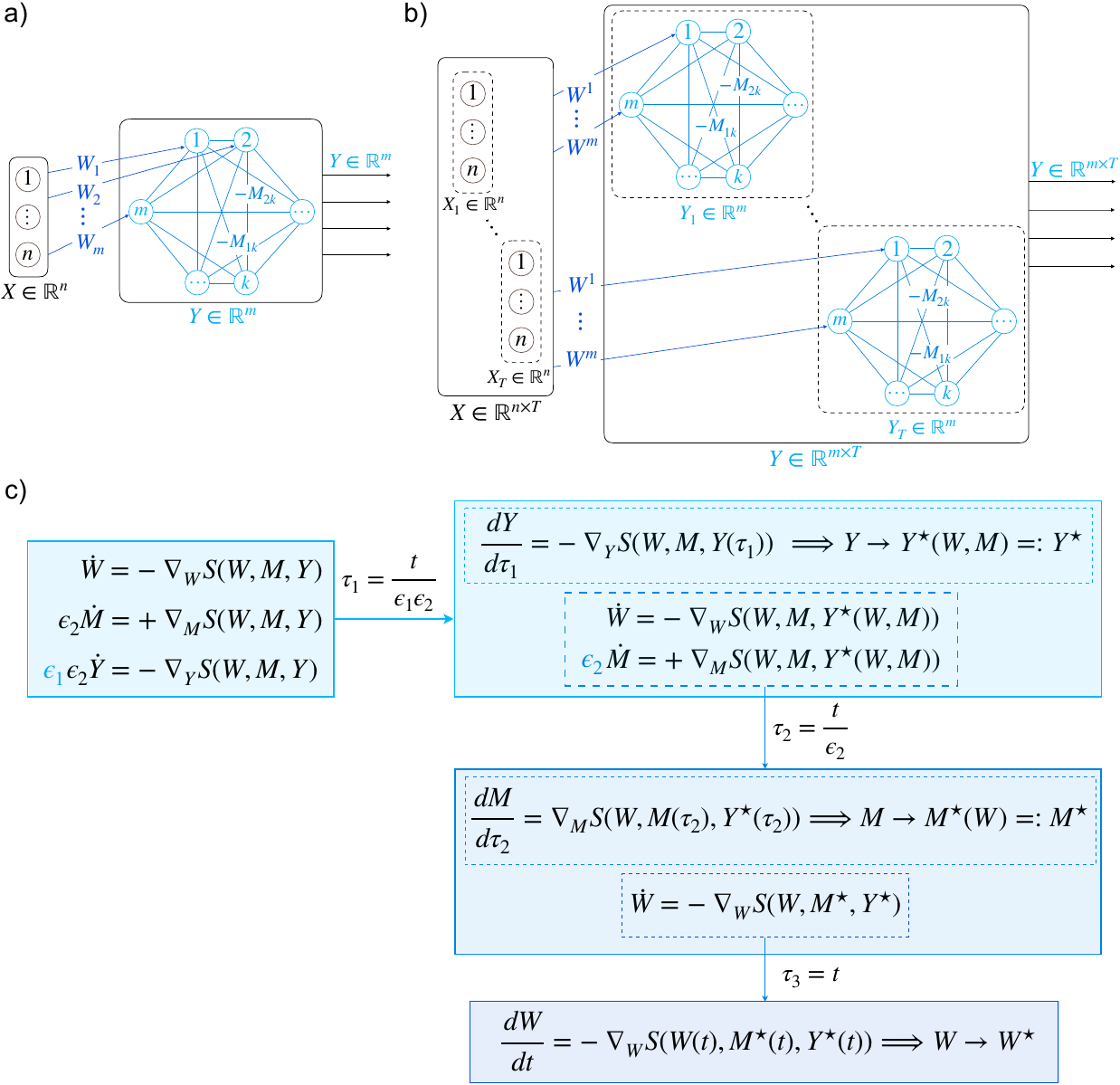}
\caption{
{Similarity matching network over three time scales. a) Single‐sample block scheme: the feedforward weights $W\in\R^{m\times n}$ project the input $X\in\R^n$ onto the hidden nodes $Y\in\R^m$, which evolve under the dynamics~\eqref{eq:fast} with lateral coupling $M\in\S^m_{>0}$. b) Block scheme of the considered network for $T$ data samples: batch extension the architecture described in panel a). c) Working principle of the network~\eqref{eq:3_time_scales_system}: on the fast scale $\tau_1$, the fast dynamics converges to the unique $Y^\star(W,M)$. Then, on the intermediate scale $\tau_2$, the intermediate dynamics converges to the unique $M^\star(W)$. Finally, on the slowest scale $\tau_3$, the slow dynamics converges to $W^\star$.}
}
\label{fig:network}
\end{figure}
Note that $\eps_1$ and $\eps_2$ induce time-scale separations between the three subsystems.
Specifically, equation~\eqref{eq:fast} is the \emph{fast subsystem}, equation~\eqref{eq:slow} is the \emph{intermediate subsystem}, and equation~\eqref{eq:super_slow} is the \emph{slow subsystem}.
Panel c) of Figure~\ref{fig:network} summarizes the {\em working principle} of the network. First, $\eps_1 \eps_2$ induces a time scale separation between the fast dynamics~\eqref{eq:fast} and the slower (synaptic) subsystems~\eqref{eq:slow} and~\eqref{eq:super_slow}. Essentially, considering the fast time scale $\tau_1 = \frac{t}{\eps_1\eps_2}$, the dynamics~\eqref{eq:fast} become
\beq
\label{eq:fast_time_1}
\frac{d Y}{d\tau_1} = - \nabla_Y S(W,M,Y(\tau_1)),
\eeq
where $W$ and $M$ are treated as fixed parameters. As $\epsilon_1 \to 0$, system~\eqref{eq:fast_time_1} converges to its equilibrium $Y^\star(W,M)$ (see Theorem~\ref{lem:layer1} for the formal result). The reduced slower subsystem is therefore
\begin{subequations}
\label{eq:3_time_scales_system_slower1}
\begin{empheq}[left=\empheqlbrace]{align}
\dot W & = - \nabla_W S(W,M,Y^\star(W,M)) \label{eq:super_slow_slow} \\
\eps_2 \dot M & = + \nabla_M S(W,M,Y^\star(W,M)).
\label{eq:slow_fast}
\end{empheq}
\end{subequations}
Again, for the dynamics in system~\eqref{eq:3_time_scales_system_slower1}  $\eps_2$ induces a time scale separation. This time, in the time scale $\tau_2 = \frac{t}{\eps_2}$, the dynamics~\eqref{eq:slow_fast} become
\beq
\label{eq:fast_time_2}
\frac{d M}{d\tau_2} = - \nabla_Y S(W,M(\tau_2),Y^\star(W,M(\tau_2))),
\eeq
where $W$ is treated as a fixed parameter. When $\epsilon_2 \to 0$, then system~\eqref{eq:fast_time_2} converges to $M^\star(W)$ (Theorem~\ref{lem:layer2} gives the formal result) and the slow system becomes
\beq
\label{eq:fast_time_3}
\dot W = - \nabla_W S(W,M^\star,Y^\star(W,M^\star)).
\eeq
Finally, system~\eqref{eq:fast_time_3} converges to $W^\star$ almost surely with random initialization (Theorem~\ref{lem:layer3}).
This time scale separation reflects the embedded similarity matching problem~\eqref{eq:sm_offline_general}, which can be written as the following three-level OP
\beq
\label{eq:multilevel_formulation}
\begin{aligned}
\min_{W \in \R^{m\times n}} & 
S(W, M^\star, Y^\star) \text{\quad where }\\
&M^\star = \argmax_{M \in \S_{\succ 0}^m} S(W, M, Y^\star) \text{\quad where }\\
& \quad \quad \quad Y^\star =\argmin_{Y\in \R^{m \times T}} S(W, M, Y),
\end{aligned}
\eeq
where $Y\in \R^{m \times T}$, $M \in \S_{\succ 0}^m$ and $W \in \R^{m\times n}$ are the variables at the first, second, and third level, respectively. For our convergence analysis, we focus on the multilevel OP in equation~\eqref{eq:multilevel_formulation}. Before formally diving into the analysis in Section~\ref{sec:convergence}, we first illustrate the simpler scalar case to build intuition for our technical framework. 
\subsection{Illustrating the Core Ideas Through the Scalar Case}
To better illustrate our approach, we first analyze the simpler scalar case where $n = m = T = 1$. In this case, the embedded similarity matching problem~\eqref{eq:sm_offline_general} simplifies to:
\beq
\label{eq:sm_offline_scalar}
\min_{w \in \R} \max_{m \in \R_{>0}} \min_{y\in \R} \ 2w^2 - m^2 + 2 my^2 - 4xwy.
\eeq
We let $S(w, m, y) := 2w^2 - m^2 + 2 my^2 - 4xwy$ be the cost function of the above problem.
In this scenario, $X^\top X = x^2 := \lambda_1$. Therefore, according to Lemma~\ref{lem:exact_sol_sm}, the solution to the similarity matching problem~\eqref{eq:similarity matching_matrix} is
\beq
\label{eq:exact_sol_scalar}
\subscr{y^\star}{sm} = \pm (x^2)^{1/2} = \pm x.
\eeq
In this case, the similarity matching network~\eqref{eq:3_time_scales_system} becomes
\begin{subequations}
\label{eq:3_time_scales_system_scalar}
\begin{empheq}[left=\empheqlbrace]{align}
\eps_1 \eps_2 \dot y & = 4 \bigl( wx - my\bigr), \quad \quad y(0) = y_0,\label{eq:fast_scalar}\\
\eps_2 \dot m & = - 2m + 2 y^2, \quad \quad m(0) = m_0>0, \label{eq:slow_scalar}\\
\dot w & = -4w + 4yx, \quad \quad w(0) = w_0. \label{eq:super_slow_scalar}
\end{empheq}
\end{subequations}
Next, we show that: (i) $\subscr{y^\star}{sm}$ is also obtained by solving problem~\eqref{eq:sm_offline_scalar} using a multilevel optimization approach; (ii) the network~\eqref{eq:3_time_scales_system_scalar} converges to {$(\subscr{y^\star}{sm}, m^\star, w^\star)$}.

\paragraph{First level: optimization with respect to $\mathbf{y}$.}
For any fixed $w, m \in \R$, with $m >0$, the map $y \mapsto S(w, m, y)$ is strongly convex and problem $\ds \min_{y\in \R} S(w, m, y)$ has a unique solution.
By minimizing problem~\eqref{eq:sm_offline_scalar} with respect to $y \in \R$, for each $w, m \in \R$, $m >0$, we obtain the optimal point
$
y^\star = m^{-1}w x.
$
For the network the fast dynamics~\eqref{eq:fast_scalar}, converges to the equilibrium point $y^\star$. The reduced slower subsystem therefore becomes
\begin{subequations}
\label{eq:3_time_scales_system_slower1_scalar}
\begin{empheq}[left=\empheqlbrace]{align}
\eps_2 \dot m & = - 2m + 2 m^{-2}w^2 x^2 \label{eq:slow_fast_scalar}\\
\dot w & = - 4w + 4m^{-1}w x^2 \label{eq:super_slow_slow_scalar}.
\end{empheq}
\end{subequations}

\paragraph{Second level: optimization with respect to $\mathbf{m}$.}
Next, for fixed $w \in \R$, consider the function
$$m \mapsto S_2(m) =: S(w, m, y^\star) = 2w^2-m^2 - 2 (w^2 x^2)m^{-1}.$$
This function is concave and the set $\R_{>0}$ is convex. Therefore, problem $\ds \max_{m\in \R_{>0}} S(w, m, y^\star)$ admits a unique solution. The function $S(w, m, y^\star)$ is illustrated in Figure~\ref{fig_m_w_scalar} for specific values of $w$ and $x$.
Maximizing $S(w, m, y^\star)$ with respect to $m >0$, for each $w \in \R$, we obtain the optimal point
$
m^\star = \bigl(w^2 x^2 \bigr)^{1/3}.
$
For the network dynamics, at the intermediate time scale~\eqref{eq:slow_fast_scalar} converges to the equilibrium point $m^\star$. The slow system then becomes
\beq
\label{eq:fast_time_3_scalar}
\dot w = - 4w + 4\bigl(w^2 x^2 \bigr)^{-1/3} w x^2 = - 4w + 4w^{1/3}x^{4/3}.
\eeq

\paragraph{Third level: optimization with respect to $\mathbf{w}$.}
We minimize the function 
$$w \mapsto S_3(w) =: S(w, m^\star, y^\star) = 2w^2 - 3 x^{4/3}w^{4/3}.$$
This function is non-convex and has two global minima, as illustrated in Figure~\ref{fig_m_w_scalar} for a given value of $x$ (it is straightforward to verify that this property holds for any value of $x$, we omit the computation here for the sake of space).
Minimizing $S_3(w)$ with respect to $w$ we obtain the optimal points
$
w^\star = \pm x^2 = \pm \lambda_1.
$
\begin{figure}[!h]
\centering
\includegraphics[width=0.495\linewidth]{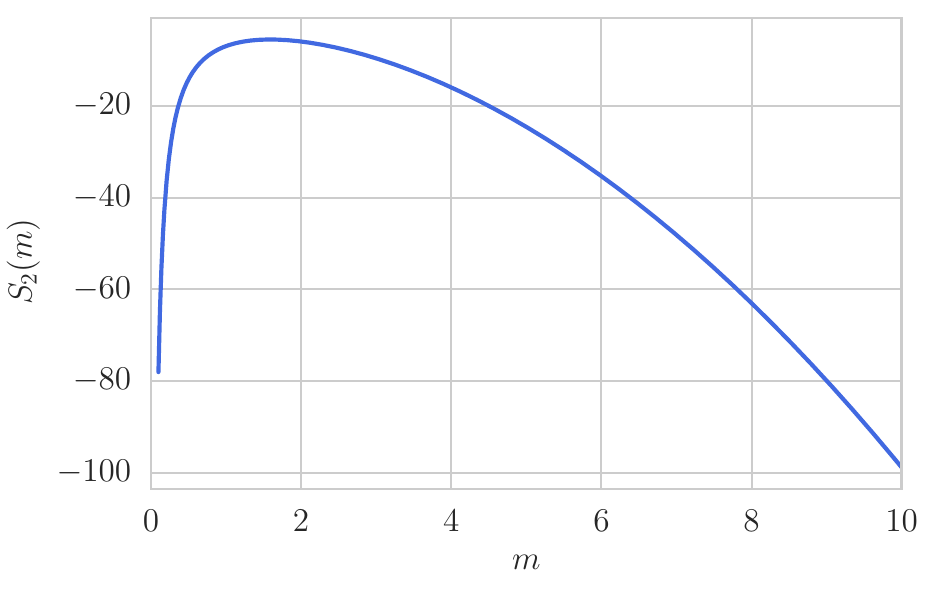}
\includegraphics[width=0.495\linewidth]{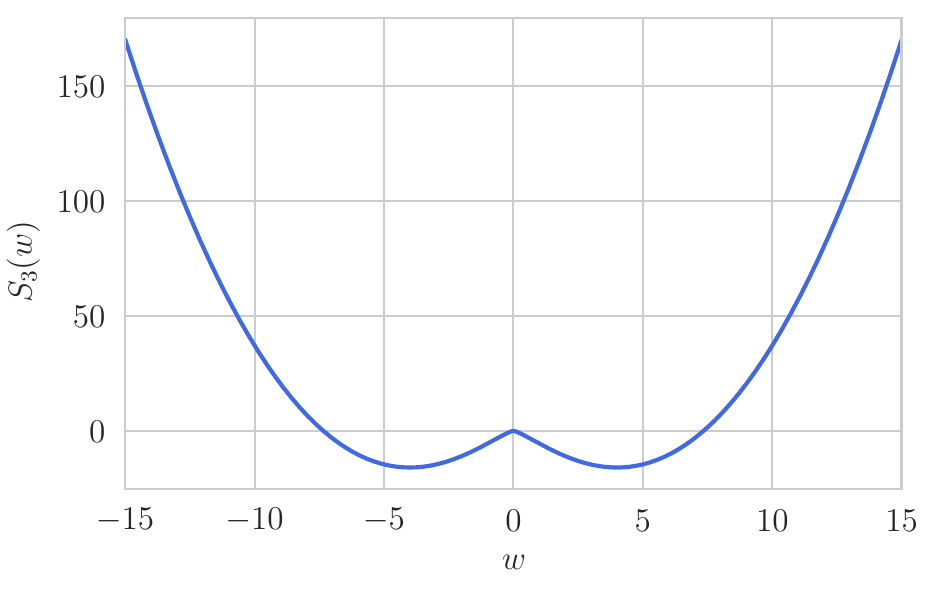}
\caption{Left Panel: Plot of the scalar function $S_2(m) = 2w^2 - m^2 - 2(w^2 x^2)m^{-1}$ for $w = 1$ and $x = 2$.
Right Panel: Plot of the scalar function $S_3(w) = 2w^2 - 3 x^{4/3}w^{4/3}$ for $x = 2$.}
\label{fig_m_w_scalar}
\end{figure}
{
The corresponding dynamics at the slow timescale~\eqref{eq:fast_time_3_scalar} has two locally stable equilibria $w^\star = \pm x^2$ and an unstable equilibrium $w^\star = 0$.
Figure~\ref{fig:scalar_dyn_w} shows the evolution of~\eqref{eq:fast_time_3_scalar} with $x=2$, starting from 1000 different initial conditions in the interval $[-15, 15]$. As shown in the figure, the dynamics~\eqref{eq:fast_time_3_scalar} converges to either $w^\star = x^2$ or $w^\star = -x^2$ for all initial conditions, except $0$.
}
\begin{figure}[!h]
\centering
\includegraphics[width=0.75\linewidth]{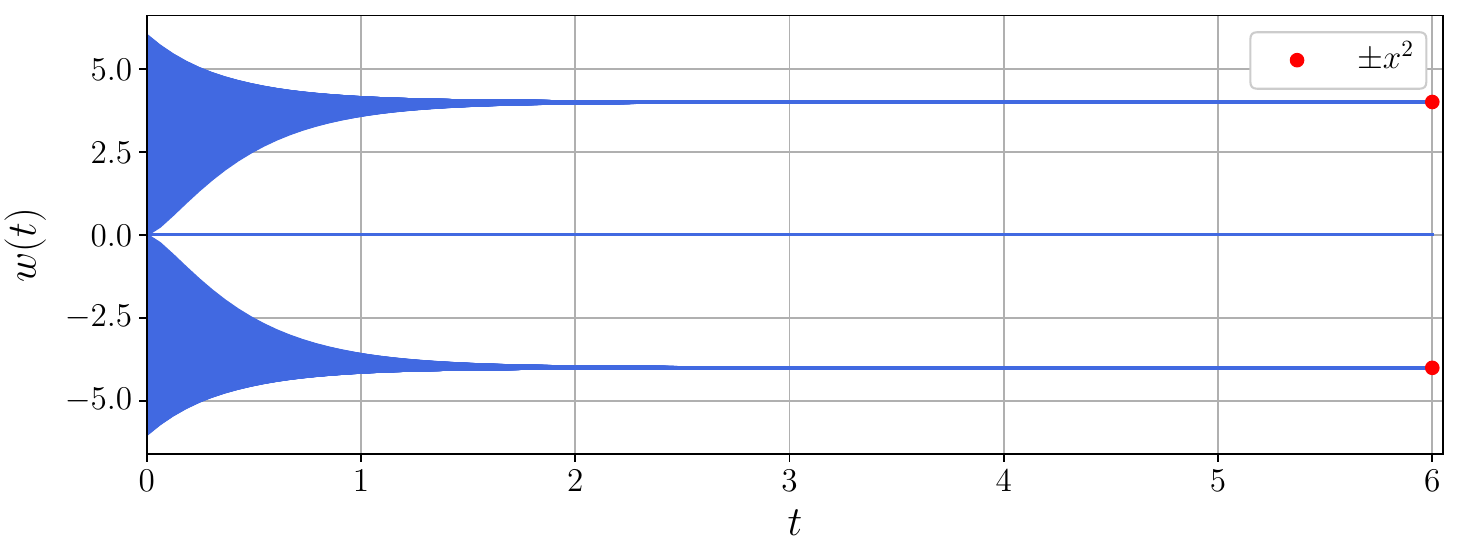}
\caption{Evolution of~\eqref{eq:fast_time_3_scalar} with $x=2$, starting from 1000 different initial conditions picked in the interval $[-15, 15]$. The dynamics converge either to $w^\star_2 = x^2$ or $w^\star_3 = -x^2$ for all initial conditions, except $0$.}
\label{fig:scalar_dyn_w}
\end{figure}

Finally, we project the optimal solutions back onto the output space to obtain the optimal $y^\star$. We compute \begin{align*}
y^\star &= (m^\star)^{-1}w^\star x = \bigl((w^\star)^2 \lambda_1\bigr)^{-1/3}\bigl( \pm \lambda_1\bigr) x =\bigl(\lambda_1^3\bigr)^{-1/3}\bigl( \pm \lambda_1\bigr) x = \pm x,
\end{align*}
which precisely matches the solution given in equation~\eqref{eq:exact_sol_scalar}.

As illustrated for this simpler case, the OPs of the first two levels benefit from having strongly convex (first level) or concave (second level) cost functions. In turn, this implies that the optimal solution in each level is unique and global and that the fast and intermediate dynamics converge to the optimal solution. Instead, the cost function of the last level is non-convex and this can yield multiple local minima. Indeed, in the scalar case considered above, the cost has two global minima and the corresponding dynamics converges to one of the minima almost for almost all initial conditions. Finally, when projecting back onto the output space, $y$ converges to the optimal  solution of the similarity matching problem.

In the multidimensional case, as we will see, we have a similar scenario. The first two levels feature strongly convex or concave cost functions but the third level's OP is non-convex, making the task of identifying the global optimum more intricate, but ultimately tractable. In summary, in the multidimensional case, we will obtain similar results as the scalar case.
\section{Main Results}
\label{sec:convergence}

We now present the main results of the paper. To streamline presentation, we first informally state the results. Then, we provide a formal convergence analysis for the similarity matching network~\eqref{eq:3_time_scales_system}. Within the analysis we characterize the behavior of each level in detail. Proofs for all results in this section are provided in Appendix~\ref{apx:proofs}.

Intuitively, the results presented can be informally summarized via the following:
\begin{metatheorem}[summary of the main results]
\label{metatheorem}
The trajectories of the similarity matching network over three time scales~\eqref{eq:3_time_scales_system} converge to an equilibrium point that is also a global optimal solution of the embedded similarity matching problem~\eqref{eq:sm_offline_general}. Specifically, the problem can be tackled with a multi-level optimization approach and:
\begin{enumerate}
\item the OP of the first (fast) level is a minimization problem with a strongly convex cost function and a unique minimizer, to which the neural dynamics globally exponentially converges;
\item the OP of the second (intermediate) level is a maximization problem with a strongly concave cost function and a unique maximizer, to which the lateral synaptic dynamics globally exponentially converges within the space of positive definite matrices. Additionally, the synaptic weights remain symmetric and positive definite for all time;
\item under two technical conjectures, the minimization problem of the third (slow) level admits multiple global minima, and the feedforward synaptic dynamics converges to one such minimizers from almost all initial conditions.
\end{enumerate}
Moreover, projecting the optimal solution back on the output space yields the optimal solution of the similarity matching problem~\eqref{eq:sm_vec}.
\end{metatheorem}
\begin{figure}[h!]
\centering
\includegraphics[width=0.9\linewidth]{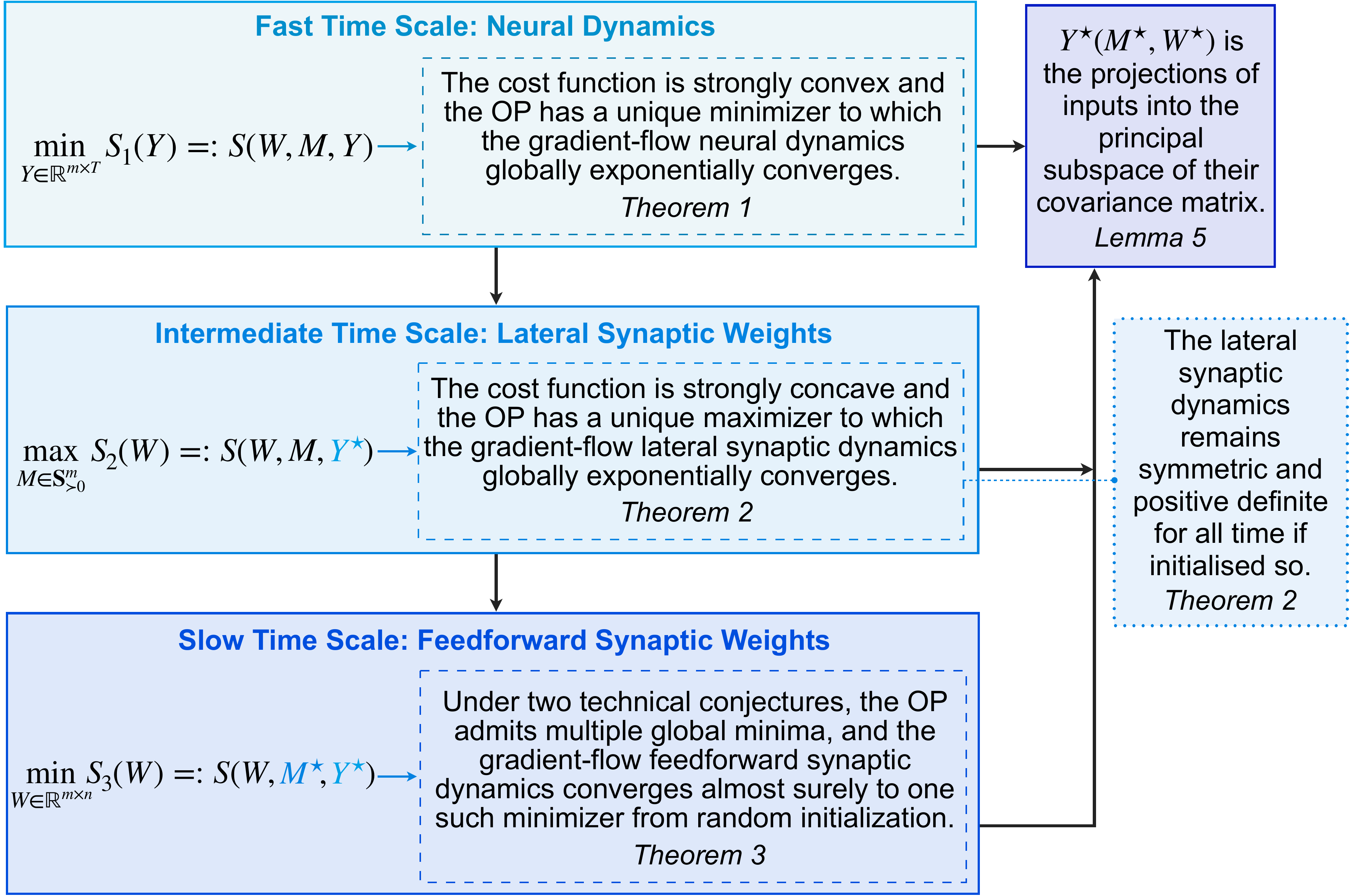}
\caption{Schematic diagram summarizing the main results in Section~\ref{sec:convergence}.}
\label{fig:schematic_results}
\end{figure}
{The assumptions, intermediate results, and their connections supporting the claims in the informal statement are summarized in Figure~\ref{fig:schematic_results}.} 
Specifically, we begin with the first level, corresponding to the fast time scale -- namely, that of the neural dynamics. For the associated minimization problem, we prove (see Theorem~\ref{lem:layer1}) that the cost function $S(W, M, Y)$ is strongly convex. This guarantees the existence of a unique optimal solution, say it $Y^\star = Y^\star(W, M)$, which we compute explicitly. We then derive the corresponding \emph{continuous-time gradient flow neural dynamics} and prove that these dynamics globally exponentially converge to $Y^\star$. We do so by showing strong contractivity of the dynamics.
Next, we analyze the second level, which operates on the intermediate time scale -- corresponding to the lateral synaptic weights. For this level's maximization problem, we show (see Theorem~\ref{lem:layer2}) that the cost function $S(W, M, Y^\star)$ is strongly concave in the space of positive definite matrices. This ensures a unique maximizer, denoted as $M^\star = M^\star(W)$, which we also compute explicitly. Moreover, we derive the corresponding \emph{continuous-time gradient flow lateral synaptic dynamics} and prove that these dynamics preserve symmetry and positive definiteness over time, if initialized so (see Lemma~\ref{item:3_second_layer}). Using contractivity arguments, we show that the dynamics converge exponentially to $M^\star$ within the space of positive definite matrices.
Finally, we consider the third level, associated with the slow time scale -- namely, that of the feedforward synaptic weights. Similar to the scalar case, the cost function at this level is non-convex, making the analysis more challenging. Nonetheless, we provide explicit expressions for the global minimizers, $W^\star$. We derive the corresponding \emph{continuous-time gradient-flow feedforward synaptic dynamics} and, under two technical conjectures (empirically validated in Appendix~\ref{apx:conjectures}), we prove almost sure convergence to $W^\star$ from random initialization.
Lastly, we project back the optimal solution onto the output space, i.e., we evaluate $Y^\star(W^\star, M^\star)$ (see Lemma~\ref{lem:final_proj}). As desired, this yields the optimal solution of the similarity matching problem~\eqref{eq:sm_vec}, as given in Lemma~\ref{lem:exact_sol_sm}.
\subsection{Fast Time Scale -- Neural Dynamics}
Given $M \in \S_{\succ 0}^m$, $W \in \R^{m\times n}$, and $X \in \R^{n \times T}$, the minimization problem of the first level, which corresponds to the fast time scale, is
\beq
\label{eq:op_layer1}
\min_{Y \in \R^{m \times T}}\frac{2}{T}\wnormF{Y}{M}^2 - \frac{4}{T}\inprodF{WX, Y} + 2\normF{W}^2 - \normF{M}^2.
\eeq
Let $\map{S_1}{\R^{m \times T}}{\R}$ denote the cost function of~\eqref{eq:op_layer1}. Next, we show that the map $S_1(Y)$ is strongly convex, we explicitly compute the minimizer of problem~\eqref{eq:op_layer1}, provide the gradient-flow dynamics, and show that this exponentially converges to the minimizer.

\bt[Analysis of the first level]
\label{lem:layer1}
Consider the minimization problem~\eqref{eq:op_layer1}. For any $M \in \S_{\succ 0}^m$, $W \in \R^{m\times n}$, and $X \in \R^{n \times T}$, we have
\begin{enumerate}
\item \label{item:1_first_layer}
the cost function $S_1$ is strongly convex with parameter ${\nu}_{Y} = \frac{4}{T}\subscr{\lambda}{min}(M)$;
\item \label{item:2_first_layer}
the global minimizer of the OP~\eqref{eq:op_layer1} is
\beq
\label{eq:ystar}
Y^\star = M^{-1}WX;
\eeq
\item
\label{item:3_first_layer}
the \emph{continuous-time gradient-flow neural dynamics} in the fast time scale
\beq
\label{eq:dyn_first_layer}
\dot Y = -\nabla_Y S(W, M, Y) = \ds - \frac{4}{T} \bigl( MY - WX \bigr)
\eeq
is strongly contracting with rate ${\nu}_{Y} = \frac{4}{T}\subscr{\lambda}{min}(M)$ with respect to the $\ell_2$-norm.
\end{enumerate}
\et
Statement~\ref{item:3_first_layer} implies that any solution of the neural dynamics~\eqref{eq:dyn_first_layer} globally exponentially converges towards the unique equilibrium point $Y^\star$. Moreover, the system exhibits all the robust properties of contracting systems (see Section \ref{sec:contraction_theory}).
\subsection{Intermediate Time Scale -- Lateral Synaptic Weights}
First, we determine the cost function for the second level, which corresponds to the intermediate time scale. We compute 
\begin{align*}
S(W, M, Y^{\star}) = 2\normF{W}^2 - \normF{M}^2 - \frac{2}{T}\tr(\trasp{W}M^{-1}WX\trasp{X})
= 2\normF{W}^2 - \normF{M}^2 - 2\inprodF{M^{-1}, WC_X\trasp{W}},
\end{align*}
where $\ds C_X := \frac{X\trasp{X}}{T}$ is the covariance matrix of the centered input data samples.
Therefore, for given $W \in \R^{m\times n}$ and $X \in \R^{n \times T}$, the OP of the second level is 
\beq
\label{eq:op_layer2}
\max_{M \in \S_{\succ 0}^m} 2\normF{W}^2 - \normF{M}^2 - 2\inprodF{M^{-1}, WC_X\trasp{W}}.
\eeq
Denote by $\map{S_2}{\S_{\succ 0}^m}{\R}$ the cost function of the OP~\eqref{eq:op_layer2}. Next, we show that the map $S_2$ is strongly concave, explicitly compute the maximizer of~\eqref{eq:op_layer2}, provide the gradient-flow dynamics for this level, and show that this exponentially converges in $\S_{\succ 0}^m$.
\bt[Analysis of the second level]
\label{lem:layer2}
Consider the maximization problem~\eqref{eq:op_layer2}.
For any full rank $W \in \R^{m\times n}$ and $X \in \R^{n \times T}$ such that $C_X \succ 0$, we have
\begin{enumerate}
\item \label{item:1_second_layer}
the cost function $S_2$ is strongly concave with parameter ${\nu}_{M} = 2$;
\item \label{item:2_second_layer}
the global maximizer of problem~\eqref{eq:op_layer2} is
\beq
\label{eq:Mstar}
M^\star = \bigl(WC_X\trasp{W}\bigr)^{1/3};
\eeq
\item \label{item:3_second_layer}
for any $\map{Y}{\R_{\geq0}}{\R^{m \times T}}$, the gradient-flow lateral synaptic dynamics 
\beq
\label{eq:dynamics_M}
\dot M(t) = - M(t) + \frac{1}{T} Y(t) \trasp{Y(t)}, \quad M_0\in \S_{\succ 0}^m,
\eeq
satisfy the following
\begin{enumerate}
\item $M(t) = \trasp{M}(t), \quad \forall t >0$;
\label{item1:lem_M}
\item $M(t) \succ 0, \quad \forall t >0$;
\label{item2:lem_M}
\end{enumerate}
\item \label{item:4_second_layer}
the \emph{continuous-time gradient-flow lateral synaptic dynamics} in the intermediate time scale
\beq
\label{eq:dyn_second_layer}
\dot M = \nabla_M S(W, M, Y^\star) = \ds - 2M + 2 M^{-1}WC_X\trasp{W}M^{-1},
\eeq
with $M_0\in\S_{\succ 0}^m$,
is strongly contracting with rate ${\nu}_{M} = 2$ with respect to the $\ell_2$-norm in $\S_{\succ 0}^m$.
\end{enumerate}
\et
Before presenting our result for the slow time scale, we make a number of comments on Theorem \ref{lem:layer2}.
First, statement~\ref{item:3_second_layer} provides a key invariance result for the analysis of problem~\eqref{eq:sm_offline_general}. While we show this property in continuous time, this invariance result holds as well in discrete-time. {As noted in Remark~\ref{rem:positive_M}, the matrix $M$ in the algorithm in Section~\ref{sec:biologically_plausible} remains positive definite if initialized so. However, to the best of our knowledge, item~\ref{item:3_second_layer} represents the first formal proof of this property.} With a slight abuse of terminology, we sometimes refer to this property as “positive in a matrix sense".
Statement~\ref{item:4_second_layer} implies that any solution of the gradient-flow lateral synaptic dynamics~\eqref{eq:dyn_second_layer} starting from initial conditions $M_0\in\S_{\succ 0}^m$ exponentially converge in $\S_{\succ 0}^m$ towards the equilibrium point $M^\star$. Moreover, the system exhibits all the robust properties characteristic of contracting systems. Finally, as a consequence of statement~\ref{item:3_second_layer}, the dynamics~\eqref{eq:dyn_second_layer} is positive in a matrix sense.

\subsection{Slow Time Scale -- Feedforward Synaptic Weights}
\label{sec:third_layer}
We now analyze the third level. To derive the corresponding cost function, we compute
\begin{align*}
S(W, M^{\star}, Y^{\star}) &= 2\normF{W}^2 - \normF{\bigl(WC_X\trasp{W}\bigr)^{1/3}}^2 - 2\inprodF{\bigl(WC_X\trasp{W}\bigr)^{-1/3}, WC_X\trasp{W}}\\
&= 2\normF{W}^2 - \normF{\bigl(WC_X\trasp{W}\bigr)^{1/3}}^2 - 2\normF{\bigl(WC_X\trasp{W}\bigr)^{1/3}}.
\end{align*}
Therefore, given $X \in \R^{n \times T}$ such that $C_X \succ 0$, the minimization problem in the third level is
\beq
\label{eq:op_layer3}
\min_{W \in \R^{m\times n}} 2\normF{W}^2 - 3\normF{\bigl(WC_X\trasp{W}\bigr)^{1/3}}^2.
\eeq
We let $\map{S_3}{\R^{m\times n}}{\R}$ denote the cost function of problem~\eqref{eq:op_layer3}. Note that $S_3(W)$ is non-convex, continuous, and not differentiable on $\R^{m\times n}$ due to the matrix fractional power. However, it is differentiable on the set of full rank matrices $\mcR_{m,n} := \setdef{W \in \R^{m\times n}}{\rank(W) = m}$.
With the next result, we show that the stationary points of~\eqref{eq:op_layer3} in $\mcR_{m,n}$ are the full rank matrices whose eigenvalues correspond to $m$ eigenvalues of $C_X$, with the associated right eigenvectors being the corresponding eigenvectors of $C_X$.
\begin{lem}[Stationary points of~\eqref{eq:op_layer3}]
\label{lem:stat_points}
Given $X \in \R^{n \times T}$ such that $C_X \succ 0$, consider the minimization problem~\eqref{eq:op_layer3} in $\mcR_{m,n}$.
The matrix $\bar{W} \in \mcR_{m,n}$ is a stationary point of the OP~\eqref{eq:op_layer3} if and only if 
$$\bar{W} \in \mathcal{S} := \bigsetdef{W \in \mcR_{m,n}}{W = U \left(P \Lambda_C P^\top \right)_{m,m} (PV_C)_m^\top,~P\in \mathcal{P}_{n},~U \in \mathcal{O}_m},$$
where $(V_C, \Lambda_C)$ is the SVD of $C_X$, $\left(P \Lambda_C P^\top \right)_{m,m}$ denotes the leading $m\times m$ principal submatrix of $P \Lambda_C P^\top$, and $(PV_C)_m$ is the submatrix of the first $m$ columns of $PV_C$.
\end{lem}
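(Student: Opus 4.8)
The plan is to work on the open set $\mcR_{m,n}$ of full-rank matrices, where $S_3$ is differentiable, write the first-order optimality condition, reduce it to an intertwining relation between $C_X$ and the symmetric positive definite matrix $(WC_X\trasp{W})^{1/3}$, and finally read off the eigenstructure from a thin singular value decomposition of $W$. Writing $A(W) := WC_X\trasp{W} \succ 0$, we have $S_3(W) = 2\tr(W\trasp{W}) - 3\tr\!\big(A(W)^{2/3}\big)$. Using $\nabla_W\tr(W\trasp{W}) = 2W$, the spectral identity $\nabla_A\tr(A^{2/3}) = \tfrac23 A^{-1/3}$ valid for $A \succ 0$, and the chain rule applied to the (symmetric) perturbation of $A(W)$, one obtains
\[
\nabla S_3(W) = 4W - 4\,\big(WC_X\trasp{W}\big)^{-1/3}\,WC_X .
\]
Hence $\bar W \in \mcR_{m,n}$ is stationary if and only if $B\bar W = \bar W C_X$, where $B := \big(\bar W C_X\trasp{\bar W}\big)^{1/3}$ is symmetric positive definite.

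For necessity, right-multiply $B\bar W = \bar W C_X$ by $\trasp{\bar W}$ to get $B\,(\bar W\trasp{\bar W}) = \bar W C_X\trasp{\bar W} = B^{3}$, hence $\bar W\trasp{\bar W} = B^{2}$. Take a thin SVD $\bar W = U\Sigma\trasp{Q}$ with $U \in \mathcal{O}_m$, $\Sigma$ diagonal with strictly positive entries, and $\trasp{Q}Q = I_m$. Then $\bar W\trasp{\bar W} = U\Sigma^{2}\trasp{U}$, and uniqueness of the positive definite square root forces $B = U\Sigma\trasp{U}$; substituting into $B\bar W = \bar W C_X$ and cancelling $U$ and one factor of $\Sigma$ gives $\Sigma\trasp{Q} = \trasp{Q}C_X$, i.e. $C_X Q = Q\Sigma$. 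Thus the columns of $Q$ form an orthonormal family of eigenvectors of $C_X$ with $\Sigma$ collecting the corresponding (necessarily positive) eigenvalues; picking a permutation $P \in \mathcal{P}_n$ that moves those $m$ eigenindices to the leading block identifies $\Sigma$ with $\big(P\Lambda_C\trasp{P}\big)_{m,m}$ and $Q$ with $(PV_C)_m$, and $U \in \mathcal{O}_m$ is free, so $\bar W \in \mathcal{S}$. Sufficiency is a direct check: for $W = U\,\big(P\Lambda_C\trasp{P}\big)_{m,m}\,\trasp{(PV_C)_m}$, set $\Sigma := \big(P\Lambda_C\trasp{P}\big)_{m,m}$ and $Q := (PV_C)_m$, so $\trasp{Q}Q = I_m$, $\Sigma \succ 0$ (whence $W$ has full rank $m$) and $C_X Q = Q\Sigma$; then $WC_X\trasp{W} = U\Sigma^{3}\trasp{U}$, so $\big(WC_X\trasp{W}\big)^{1/3}W = U\Sigma^{2}\trasp{Q} = WC_X$ and $W$ is stationary by the optimality condition above.

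I expect the main obstacle to be the gradient computation, and in particular the fractional matrix power $A \mapsto A^{2/3}$: its Fréchet derivative does not obey a naive product rule, so one should differentiate the trace $\tr(A^{2/3})$ -- for which the clean spectral rule $\nabla_A\tr(A^{2/3}) = \tfrac23 A^{-1/3}$ holds -- and only then apply the chain rule in $W$. A secondary point requiring care is the bookkeeping in the necessity step when $C_X$ (equivalently $\bar W\trasp{\bar W}$) has repeated eigenvalues: then neither the thin SVD of $\bar W$ nor the eigenbasis of $C_X$ is unique, and one must verify that the orthogonal factor $U \in \mathcal{O}_m$ together with the permutations $P \in \mathcal{P}_n$ captures all admissible pairs $(\Sigma, Q)$; this is immediate when the spectrum of $C_X$ is simple, which is the generic situation of interest here.
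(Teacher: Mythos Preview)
Your argument is correct and actually somewhat cleaner than the paper's. Both proofs start from the same gradient formula and stationarity condition $B\bar W=\bar W C_X$ with $B=(\bar W C_X\trasp{\bar W})^{1/3}$, but then diverge. The paper substitutes a \emph{full} SVD $\bar W=U_W\Sigma_W V_W^\top$ with $V_W\in\mathcal{O}_n$, cancels $U_W$, and manipulates the resulting $m\times n$ equation in $\Sigma_W$ and $V_W^\top C_X V_W$ until the leading $m\times m$ block is forced to be diagonal; it then argues that this diagonality forces $V_W^\top V_C$ to be a permutation matrix. Your route via the \emph{thin} SVD is more economical: from $B(\bar W\trasp{\bar W})=B^3$ you get $\bar W\trasp{\bar W}=B^2$ and hence $B=U\Sigma\trasp{U}$ by uniqueness of the positive square root, and substituting back yields the eigenvector relation $C_XQ=Q\Sigma$ in one line. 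This sidesteps the $n\times n$ bookkeeping entirely and makes the ``columns of $Q$ are eigenvectors of $C_X$'' conclusion immediate rather than mediated through a permutation argument on the full orthogonal factor. Your caveat about repeated eigenvalues is appropriate and matches the paper's implicit genericity assumption; your concern about differentiating $A\mapsto\tr(A^{2/3})$ is well placed, and the route you describe (spectral rule on the trace, then chain rule in $W$) is exactly the right one.
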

Another key property of the OP~\eqref{eq:op_layer3} is that its cost function is coercive. Namely, the function value tends to infinity as the norm of its input grows to infinity. We formalize this in the following lemma.
\begin{lem}
\label{lem:coercive}
Given $X \in \R^{n \times T}$ such that $C_X \succ 0$, consider the minimization problem~\eqref{eq:op_layer3} in $\mcR_{m,n}$. Then the cost function $S_3$ is coercive, that is $\lim_{\normF{W}^2 \to \infty}S_3(W) = + \infty$.
\end{lem}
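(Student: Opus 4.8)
The plan is to show that the non-smooth subtracted term $3\normF{(WC_X\trasp{W})^{1/3}}^2$ grows strictly slower in $\normF{W}$ than the leading term $2\normF{W}^2$, so that the latter dominates; concretely, I would bound the subtracted term by a fixed multiple of $\normF{W}^{4/3}$.

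\emph{Step 1 (spectral reduction).} For $W\in\mcR_{m,n}$ the matrix $A:=WC_X\trasp{W}$ is symmetric positive definite, since $W$ has full rank $m$ and $C_X\succ0$; let $\sigma_1\geq\cdots\geq\sigma_m>0$ be its eigenvalues. As $A^{1/3}$ is symmetric with eigenvalues $\sigma_i^{1/3}$, we have $\normF{(WC_X\trasp{W})^{1/3}}^2=\normF{A^{1/3}}^2=\tr(A^{2/3})=\sum_{i=1}^m\sigma_i^{2/3}$, so it suffices to control $\tr(A^{2/3})$ in terms of $\normF{W}$.

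\emph{Step 2 (two elementary estimates).} First, by concavity of $t\mapsto t^{2/3}$ on $[0,\infty)$ and Jensen's inequality, $\tfrac1m\sum_{i=1}^m\sigma_i^{2/3}\leq\bigl(\tfrac1m\sum_{i=1}^m\sigma_i\bigr)^{2/3}$, whence $\tr(A^{2/3})\leq m^{1/3}(\tr A)^{2/3}$. Second, from $C_X\preceq\subscr{\lambda}{max}(C_X)\,I_n$ and $\trasp{W}W\succeq0$ one gets $\tr A=\tr(C_X\trasp{W}W)\leq\subscr{\lambda}{max}(C_X)\,\tr(\trasp{W}W)=\subscr{\lambda}{max}(C_X)\,\normF{W}^2$. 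Combining, $\normF{(WC_X\trasp{W})^{1/3}}^2\leq m^{1/3}\,\subscr{\lambda}{max}(C_X)^{2/3}\,\normF{W}^{4/3}$.

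\emph{Step 3 (conclusion).} Substituting into the definition of $S_3$ gives $S_3(W)\geq 2\normF{W}^2-3\,m^{1/3}\,\subscr{\lambda}{max}(C_X)^{2/3}\,\normF{W}^{4/3}$. Writing $t:=\normF{W}^2\geq0$, the right-hand side equals $2t-c\,t^{2/3}$ with $c:=3\,m^{1/3}\,\subscr{\lambda}{max}(C_X)^{2/3}>0$ fixed, and this tends to $+\infty$ as $t\to\infty$ since the term linear in $t$ dominates the $t^{2/3}$ term; hence $\lim_{\normF{W}^2\to\infty}S_3(W)=+\infty$, i.e.\ $S_3$ is coercive. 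I do not expect any genuine obstacle here: the only points requiring (minor) care are the passage to the eigenvalue description of the matrix fractional power in Step 1 and the Loewner-order trace bound $\tr(C_X\trasp{W}W)\leq\subscr{\lambda}{max}(C_X)\normF{W}^2$ in Step 2, both of which are standard. (An alternative, equivalent route avoids Jensen by applying the scalar estimate $2t_i-3\subscr{\lambda}{max}(C_X)^{2/3}t_i^{2/3}\geq t_i-\text{const}$ eigenvalue-by-eigenvalue to the $t_i:=\lambda_i(\trasp{W}W)$, but the argument above is cleaner.)
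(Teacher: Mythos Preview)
Your proof is correct and follows essentially the same approach as the paper: the paper's argument is a one-line sketch observing that the first term grows as $\bigO(\normF{W}^2)$ while the second grows only as $\bigO(\normF{W}^{4/3})$, and your Steps~1--3 simply supply the explicit constants and inequalities (Jensen on $t\mapsto t^{2/3}$ and the Loewner trace bound) that make this rigorous.
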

The above property implies that any sublevel set of $S_3$ is bounded. This is key for the analysis of the gradient flow, as it ensures that trajectories cannot escape to infinity.
Following the analysis of the previous two levels, we are interested in the gradient-flow dynamics at the slow time scale, that is,
\beq
\label{eq:dyn_third_layer}
\dot W = - \nabla_W S(W, M^{\star}, Y^{\star}) = - 4W + 4\bigl(WC_X\trasp{W}\bigr)^{-1/3} WC_X, \quad W(0) = W_0.
\eeq
The RHS of~\eqref{eq:dyn_third_layer} is well defined only if $W(t)\in \mcR_{m,n}$, for all $t$. 
\begin{quote} \emph{Heuristic motivation for Conjecture~\ref{conj_full_row_rank}.}
Being $\dot W$ the gradient of the cost $S_3$ (and therefore, the cost function decreases along the flow trajectories), it is reasonable that trajectories initialized in $\mcR_{m,n}$ remain full rank. Indeed, as the flow approaches a rank-deficient matrix, the cost increases while its gradient goes to infinity. 
\end{quote}
This heuristic analysis motivates the following conjecture, that we empirically validate in Appendix~\ref{apx:conjecture1}.
\begin{conj}[Full-rank invariance]
\label{conj_full_row_rank}
Given $X \in \R^{n \times T}$ such that $C_X \succ 0$, consider the \emph{continuous-time gradient-flow feedforward synaptic dynamics}~\eqref{eq:dyn_third_layer}. If $W_0 \in \mcR_{m,n}$, then $W(t)\in \mcR_{m,n}$ for all $t > 0$.
\end{conj}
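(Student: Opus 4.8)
The plan is to reformulate the full-rank condition as a positivity condition and then control a scalar surrogate along the flow. Since $C_X \succ 0$, one has $\rank(W C_X \trasp{W}) = \rank(W)$ for every $W$, and $WC_X\trasp{W} \succeq 0$; hence $W \in \mcR_{m,n}$ if and only if $P := W C_X \trasp{W} \in \S_{\succ 0}^m$, i.e. $\det P > 0$. Because the right-hand side of~\eqref{eq:dyn_third_layer} is smooth on the open set $\mcR_{m,n}$, for $W_0 \in \mcR_{m,n}$ there is a unique maximal solution on some interval $[0, T_{\max})$ with $W(t) \in \mcR_{m,n}$ throughout; I would prove $T_{\max} = +\infty$, which is exactly the asserted forward invariance (with global existence as a bonus).

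The key computation is a differential inequality for $\log\det P$. Differentiating $P = WC_X\trasp{W}$ along~\eqref{eq:dyn_third_layer} and setting $Q := WC_X^2\trasp{W}$ (which lies in $\S_{\succ 0}^m$ on $[0,T_{\max})$, since $W$ is full row rank and $C_X^2 \succ 0$), one obtains $\dot P = -8P + 4\bigl(P^{-1/3}Q + QP^{-1/3}\bigr)$. Then Jacobi's formula, trace cyclicity, and commutativity of powers of $P$ give
\[
\frac{d}{dt}\log\det P(t) \;=\; -8m + 8\,\tr\!\bigl(P^{-4/3}Q\bigr) \;\geq\; -8m ,
\]
where the inequality holds because $\tr(P^{-4/3}Q) = \normF{P^{-2/3}Q^{1/2}}^2 \geq 0$. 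Integrating, $\det P(t) \geq \det P(0)\,\e^{-8mt} > 0$ for every $t \in [0,T_{\max})$; in particular, on any finite time window the trajectory stays uniformly bounded away from the rank-deficient boundary.

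It remains to exclude escape to infinity. Since~\eqref{eq:dyn_third_layer} is the negative gradient flow of $S_3$ on $\mcR_{m,n}$, the value $S_3(W(t))$ is non-increasing, so $W(t)$ stays in the sublevel set $\{W : S_3(W) \leq S_3(W_0)\}$, which is bounded by coercivity of $S_3$ (Lemma~\ref{lem:coercive}). Consequently, were $T_{\max}$ finite, the trajectory $\{W(t): t \in [0,T_{\max})\}$ would remain in a compact subset of $\mcR_{m,n}$ — bounded in Frobenius norm, with $\det(WC_X\trasp{W}) \geq \det P(0)\,\e^{-8mT_{\max}} > 0$ — contradicting the standard escape-time characterization of maximal solutions. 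Hence $T_{\max} = +\infty$ and $W(t) \in \mcR_{m,n}$ for all $t > 0$.

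\textbf{Main obstacle.} The delicate point is the apparent circularity: global well-posedness of~\eqref{eq:dyn_third_layer} is itself what is in question, so the whole argument must be carried out inside the maximal interval of existence and only then upgraded to global existence via a continuation/escape-time lemma. The conceptual crux is the choice of surrogate: the Lyapunov decrease of $S_3$ by itself cannot preclude loss of rank, because $S_3$ stays bounded as $W$ tends to a rank-deficient matrix; it is $\det(WC_X\trasp{W})$ that carries the relevant information, and the one-line inequality $\tfrac{d}{dt}\log\det(WC_X\trasp{W}) \ge -8m$ does the job. Everything else — the explicit form of $\dot P$ and the cancellation $\tr(P^{-1}P^{-1/3}Q) = \tr(P^{-1}QP^{-1/3}) = \tr(P^{-4/3}Q)$ — is routine.
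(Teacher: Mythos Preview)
The paper does not prove this statement: it is presented as a \emph{conjecture}, motivated only by the heuristic that the gradient blows up near rank-deficient matrices, and supported solely by numerical experiments in Appendix~\ref{apx:conjecture1} (tracking the rank of $W(t)W(t)^\top$ over $10{,}000$ random initializations for several $(m,n)$ pairs). There is no analytical argument in the paper to compare your proposal against.

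Your proposal, by contrast, is an actual proof, and as far as I can see it is correct. The pieces fit together cleanly: the equivalence $W\in\mcR_{m,n}\Leftrightarrow P:=WC_X\trasp{W}\succ 0$ holds because $C_X\succ 0$; the computation $\dot P=-8P+4(P^{-1/3}Q+QP^{-1/3})$ with $Q=WC_X^2\trasp{W}$ is straightforward; Jacobi's formula together with trace cyclicity and the commutation of powers of $P$ gives $\tfrac{d}{dt}\log\det P=-8m+8\tr(P^{-4/3}Q)\ge -8m$, since $\tr(P^{-4/3}Q)=\tr(Q^{1/2}P^{-4/3}Q^{1/2})\ge 0$; and this yields $\det P(t)\ge\det P(0)\,\e^{-8mt}>0$ on the maximal interval. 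Boundedness in norm via the Lyapunov decrease of $S_3$ and Lemma~\ref{lem:coercive} then confines the trajectory to a compact subset of $\mcR_{m,n}$ on any finite window, and the standard continuation lemma forces $T_{\max}=+\infty$. You also correctly identify and dispatch the potential circularity by running the whole estimate on the maximal interval first.

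In short, rather than matching the paper, your argument \emph{resolves} what the authors left as Conjecture~\ref{conj_full_row_rank}. The key idea the paper is missing is precisely your choice of surrogate: the decrease of $S_3$ alone cannot rule out rank loss (since $S_3$ remains bounded at the rank-deficient boundary), but the one-line differential inequality for $\log\det(WC_X\trasp{W})$ does.
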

\begin{quote} \emph{Heuristic motivation for Conjecture~\ref{conj_full_strict_saddle}.}
Let $\bar{W} \in \mathcal{S}$ and compute
\begin{align*}
S_3(\bar{W}) &= 2\normF{\bar{W}^2} - 3\normF{\bigl(\bar{W} C_X \bar{W}^\top\bigr)^{1/3}}^2 \\
&= 2\left\|\left( P \Lambda_C P^\top \right)_{(m,m)}\right\|_{\textup{F}}^2 - 3\left\|\left(\left(P \Lambda_C P^\top \right)_{(m,m)} \left(P \Lambda_C P^\top \right)_{(m,m)} \left( P \Lambda_C P^\top \right)_{(m,m)}\right)^{1/3}\right\|_{\textup{F}}^2 \\
&= - \left\|\left( P \Lambda_C P^\top \right)_{(m,m)}\right\|_{\textup{F}}^2.
\end{align*}
The above norm is minimized when the leading $m \times m$ principal submatrix contains the $m$ largest eigenvalues of $\Lambda_C$. Therefore, the global minima of problem~\eqref{eq:op_layer3} are the points $W^\star = U \Lambda_C^m (V_C^m)^\top$, for any $U \in \mathcal{O}_m$ and $P \in \mathcal{P}_n$ where $\Lambda_C^m = \diag{\lambda_1^C, \dots, \lambda_m^C}$, and $V_C^m = (v^C_1, \dots, v^C_m)\in \R^{n\times m}$. 
\end{quote}
The above heuristic analysis motivates the following conjecture, that we empirically validate in Appendix~\ref{apx:conjecture2}.
\begin{conj}[Global minima and strict saddles]
\label{conj_full_strict_saddle}
Given $X \in \R^{n \times T}$ such that $C_X \succ 0$, consider the continuous-time gradient-flow feedforward synaptic dynamics~\eqref{eq:dyn_third_layer}. The global minima of $S_3(W)$ are the points $W^\star \in \mathcal{S}$ such that $W^\star = U \Lambda_C^m (V_C^m)^\top$, with $U \in \mathcal{O}_m$. Any other $\bar W \in \mathcal{S}$ is a strict saddle or a maximum.
\end{conj}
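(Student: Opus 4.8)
The plan is to treat the two assertions separately: (A) identifying the global minimizers of $S_3$, and (B) showing that every other stationary point in $\mathcal{S}$ is a strict saddle. Both parts rest on the substitution $B := WC_XW^\top\succeq 0$, in terms of which
\[
S_3(W) = 2\tr(WW^\top) - 3\normF{B^{1/3}}^2 = 2\tr(WW^\top) - 3\sum_{i=1}^m\mu_i^{2/3},
\]
where $\mu_1\ge\dots\ge\mu_m\ge 0$ denote the eigenvalues of $B$.

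For (A), I would first prove the sharp bound $S_3(W)\ge -\sum_{i=1}^m(\lambda_i^C)^2$ for all $W\in\R^{m\times n}$. Setting $\tilde W := WC_X^{1/2}$, one has $B = \tilde W\tilde W^\top$ and $\tr(WW^\top) = \tr\bigl(C_X^{-1}\,\tilde W^\top\tilde W\bigr)$; since $\tilde W^\top\tilde W$ is positive semidefinite with nonzero eigenvalues $\mu_1,\dots,\mu_m$ (those of $B$) and the remaining $n-m$ eigenvalues zero, the von Neumann / Ruhe trace inequality yields $\tr(WW^\top)\ge\sum_{i=1}^m\mu_i/\lambda_i^C$, hence $S_3(W)\ge\sum_{i=1}^m\bigl(2\mu_i/\lambda_i^C - 3\mu_i^{2/3}\bigr)$. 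Each scalar term $\mu\mapsto 2\mu/\lambda_i^C - 3\mu^{2/3}$ attains its minimum over $\mu\ge 0$ uniquely at $\mu = (\lambda_i^C)^3$, where it equals $-(\lambda_i^C)^2$; this proves the bound, and a direct substitution shows it is attained at $W^\star = U\Lambda_C^m(V_C^m)^\top$, $U\in\mathcal{O}_m$. For the uniqueness of the minimizers: a rank-deficient $W$ has $\mu_m = 0$, so its $m$-th scalar term is $0$ instead of $-(\lambda_m^C)^2$, forcing $S_3(W) > -\sum_{i=1}^m(\lambda_i^C)^2$ (here $\lambda_m^C>0$ since $C_X\succ 0$); hence every global minimizer is full rank, so a stationary point, so lies in $\mathcal{S}$ by Lemma~\ref{lem:stat_points}, on which $S_3$ takes the value $-\sum_{i\in I}(\lambda_i^C)^2$ for the index set $I$ of the selected eigenvalues; minimality together with distinctness of the $\lambda_i^C$ (a genericity assumption on $X$) then forces $I = \{1,\dots,m\}$, i.e. $W = U\Lambda_C^m(V_C^m)^\top$.

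For (B), let $\bar W\in\mathcal{S}$ be stationary but not a global minimizer, with index set $I\neq\{1,\dots,m\}$. Distinctness of the $\lambda_i^C$ and $|I|=m$ give indices $l\in I$, $k\notin I$ with $\lambda_k^C>\lambda_l^C$. By the invariance $S_3(RW) = S_3(W)$ for $R\in\mathcal{O}_m$, we may assume $\bar W = \Lambda_I V_I^\top$, with $\Lambda_I = \diag{\lambda_i^C : i\in I}$ and $V_I$ the matrix with columns $(v_i^C)_{i\in I}$; then $v_l^C$ is one of the right singular vectors of $\bar W C_X^{1/2}$, while $v_k^C$ is orthogonal to all of them. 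Consider the smooth curve $W(\cdot)$ in $\mcR_{m,n}$ with $W(0)=\bar W$ obtained by rotating that singular vector inside the eigenplane $\spn\{v_l^C,v_k^C\}$ of $C_X$ — replacing it by $v_l^C\cos\theta + v_k^C\sin\theta$ — while leaving the singular values and the remaining right singular vectors of $\bar W C_X^{1/2}$ unchanged. Because the rotation stays inside one eigenplane of $C_X$, the product $W(\theta)C_XW(\theta)^\top$ is independent of $\theta$, so $S_3(W(\theta)) = \mathrm{const} + 2\normF{W(\theta)}^2$ with $\normF{W(\theta)}^2 = \mathrm{const} + (\lambda_l^C)^3\bigl(\cos^2\theta/\lambda_l^C + \sin^2\theta/\lambda_k^C\bigr)$. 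Hence $\tfrac{d}{d\theta}S_3(W(\theta))|_{\theta=0} = 0$ (consistent with stationarity) while $\tfrac{d^2}{d\theta^2}S_3(W(\theta))|_{\theta=0} = 4(\lambda_l^C)^3\bigl(1/\lambda_k^C - 1/\lambda_l^C\bigr) < 0$. Since $\nabla S_3(\bar W) = 0$, this second derivative equals $\langle\Hess S_3(\bar W)[W'(0)],W'(0)\rangle$ with $W'(0)\neq 0$, so $\Hess S_3(\bar W)$ has a strictly negative eigenvalue: $\bar W$ is a strict saddle.

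I expect the main obstacle to be structural rather than computational. First, the statement is literally correct only under a genericity assumption on the spectrum of $C_X$: with repeated eigenvalues some $\bar W\in\mathcal{S}$ with $I\neq\{1,\dots,m\}$ are themselves global minima, which is presumably why the conjecture hedges ``or a maximum''; a rigorous proof must either assume $\lambda_1^C > \dots > \lambda_n^C$ or handle the degenerate strata carefully (in the generic case the curve above always produces a strict saddle, never a maximum). Second, the conjecture looks hard because of the temptation to differentiate $S_3$ directly, which requires the Fr\'echet derivative of the matrix cube root on the positive-definite cone (a Sylvester-type object); the reduction above — and, for (B), the observation that one may move along a curve on which $WC_XW^\top$ is frozen, collapsing $S_3$ to $2\normF{W}^2$ — is what makes the sign of the Hessian accessible. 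A minor technical point, if one wants uniqueness in (A) without invoking Lemma~\ref{lem:stat_points}, is the equality case of the rearrangement trace inequality, which follows from a short argument with the doubly substochastic (Birkhoff-type) polytope.
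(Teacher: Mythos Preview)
Your proposal is essentially correct and, crucially, goes further than the paper itself: in the paper this statement is left as a \emph{conjecture}, supported only by a heuristic computation of $S_3$ on the stationary set $\mathcal{S}$ and by numerical experiments (Appendix~\ref{apx:conjectures}); the alternative route in the final appendix (Theorem~\ref{lem:layer3_alternative}) handles only the minimizer part and still rests on a further unproven Conjecture~\ref{conj_lin_prog}. No argument for the strict-saddle part appears anywhere in the paper.

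On part (A), the paper's heuristic simply evaluates $S_3$ on $\mathcal{S}$ to get $-\|(P\Lambda_C P^\top)_{(m,m)}\|_{\textup F}^2$ and reads off which permutation minimizes it; the alternative appendix instead writes $W=r\tilde W$ with $\|\tilde W\|_{\textup F}=1$, reduces to a trace maximization over orthogonal $V$ (Lemma~\ref{lem:pita}, contingent on Conjecture~\ref{conj_lin_prog}), and finishes with H\"older. Your substitution $\tilde W=WC_X^{1/2}$ together with the von~Neumann/Ruhe trace inequality is cleaner: it gives the sharp global lower bound $S_3(W)\ge -\sum_{i=1}^m(\lambda_i^C)^2$ over \emph{all} $W$ (not just stationary ones), rules out rank-deficient minimizers directly, and needs no auxiliary conjecture. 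This is a genuine improvement.

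On part (B), your curve construction---rotating one right singular vector of $WC_X^{1/2}$ from $v_l^C$ toward $v_k^C$ with $k\notin I$ so that $W(\theta)C_XW(\theta)^\top$ stays frozen and $S_3$ collapses to $2\|W(\theta)\|_{\textup F}^2+\text{const}$---is exactly the kind of reduction that makes the Hessian sign accessible without differentiating the matrix cube root; the paper has no counterpart to this. One small wording issue: the span $\operatorname{span}\{v_l^C,v_k^C\}$ is not an ``eigenplane'' of $C_X$ (the two eigenvalues differ); the real reason $\tilde W(\theta)\tilde W(\theta)^\top$ is constant is simply that $v(\theta)$ remains unit-norm and orthogonal to the other $v_{i_j}^C$, $i_j\in I\setminus\{l\}$, which follows from $k\notin I$. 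Your computation of the second derivative and the conclusion that $\Hess S_3(\bar W)$ has a strictly negative eigenvalue are correct; note that this already establishes ``strict saddle or maximum'' as stated, so you need not separate the two cases. Your caveat about distinct eigenvalues of $C_X$ is well taken and is implicitly assumed throughout the paper (cf.\ the hypotheses of Lemma~\ref{lem:pita} and the simulation setup).
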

We can now give the following result on the convergence of the {gradient-flow dynamics}~\eqref{eq:dyn_third_layer}.
\bt[Analysis of the third level]
\label{lem:layer3}
Under Conjectures~\ref{conj_full_row_rank} and~\ref{conj_full_strict_saddle}, for almost all initial conditions $W_0 \in \mcR_{m,n}$ and for any input data $X \in \R^{n \times T}$ such that $C_X \succ 0$, the \emph{continuous-time gradient-flow feedforward synaptic dynamics} in the slow time scale~\eqref{eq:dyn_third_layer} converges to a global minimizer $W^\star = U \Lambda_C^m (V_C^m)^\top$, with $U \in \mathcal{O}_m$.
\et
\subsection{Optimal Final Solution}
Finally, we project the optimal solutions back into the output space to obtain the optimal $Y^\star$. The following result shows that, as expected, $Y^\star$ is equal to the optimal solution of the similarity matching OP~\eqref{eq:similarity matching_matrix} given in Lemma~\ref{lem:exact_sol_sm}.

\begin{lem}
\label{lem:final_proj}
Let $Y^\star(M, W)$, $M^\star(W)$, and $W^\star$ be the optimal solutions of the first, second, and third levels given in Theorems~\ref{lem:layer1}
,~\ref{lem:layer2}, and~\ref{lem:layer3}, respectively. Then, $Y^\star(M^\star, W^\star)$ is the projection of input data into the principal subspace of their covariance matrix.
\end{lem}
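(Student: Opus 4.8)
The plan is to prove the statement by pure composition: substitute the closed-form optimizers produced by Theorems~\ref{lem:layer1},~\ref{lem:layer2}, and~\ref{lem:layer3} into one another, simplify using the spectral decomposition of $C_X$, and then identify the resulting $Y^\star(M^\star,W^\star)$ with the minimizer of the similarity matching problem~\eqref{eq:similarity matching_matrix} given in Lemma~\ref{lem:exact_sol_sm}. No new analytic idea is needed; the whole content is algebraic bookkeeping of the optimizers already in hand.

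First I would evaluate the second-level optimizer at $W^\star$. By Theorem~\ref{lem:layer3} we may write $W^\star = U\Lambda_C^m (V_C^m)^\top$ for some $U\in\mathcal{O}_m$, where $(V_C,\Lambda_C)$ is the SVD of $C_X$, so $C_X = V_C\Lambda_C V_C^\top$. Since the columns of $V_C^m$ are the first $m$ orthonormal columns of $V_C$, one has $(V_C^m)^\top V_C = [\,I_m\ \ 0_{m,n-m}\,]$ and hence $(V_C^m)^\top C_X V_C^m = \Lambda_C^m$; plugging this into formula~\eqref{eq:Mstar} from Theorem~\ref{lem:layer2} gives $W^\star C_X (W^\star)^\top = U(\Lambda_C^m)^3 U^\top$, and therefore $M^\star(W^\star) = \bigl(U(\Lambda_C^m)^3 U^\top\bigr)^{1/3} = U\Lambda_C^m U^\top$, using that the matrix cube root of a symmetric positive definite matrix acts on eigenvalues. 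Next I would substitute this, together with $W^\star$, into the first-level optimizer~\eqref{eq:ystar}: since $\bigl(M^\star(W^\star)\bigr)^{-1} = U(\Lambda_C^m)^{-1}U^\top$, the factor $\Lambda_C^m$ cancels and we obtain $Y^\star(M^\star,W^\star) = M^\star(W^\star)^{-1} W^\star X = U(V_C^m)^\top X$. The columns of $(V_C^m)^\top X$ are exactly the coordinates of the input samples in the orthonormal basis $v_1^C,\dots,v_m^C$ of the principal subspace of $C_X$, and $U$ is an immaterial orthogonal rotation of that representation, so this already establishes that $Y^\star(M^\star,W^\star)$ is the projection of the input data into the principal subspace of their covariance.

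To reproduce the explicit form of Lemma~\ref{lem:exact_sol_sm} verbatim, I would finally translate between the spectral data of the $n\times n$ covariance $C_X$ (used at levels two and three) and those of the $T\times T$ Gram matrix $X^\top X$ appearing in that lemma, via the SVD $(U_X,\Sigma_X,V_X)$ of $X$: then $C_X = \tfrac1T U_X\Sigma_X\Sigma_X^\top U_X^\top$, so $V_C = U_X$ and $\lambda_j^C = (\sigma_j^X)^2/T$, while $X^\top X = V_X\Sigma_X^\top\Sigma_X V_X^\top$ has eigenvalues $\lambda_j=(\sigma_j^X)^2$ with eigenvectors $v_j^X$. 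A short block computation with the rectangular factor $\Sigma_X$ then gives $(V_C^m)^\top X = (U_X^m)^\top U_X\Sigma_X V_X^\top = \diag{\sigma_1^X,\dots,\sigma_m^X}(V_X^m)^\top = \Sigma^m V_m^\top$ in the notation of Lemma~\ref{lem:exact_sol_sm}, whence $Y^\star(M^\star,W^\star) = U\Sigma^m V_m^\top$ — precisely the general minimizer stated there (the hypothesis $k>m$ of that lemma follows from $C_X\succ0$, which forces $\rank(X)=n>m$). I do not expect a genuine obstacle: the only care required is in handling the non-square SVD blocks and the $n\times n$ versus $T\times T$ identification above; once $\Lambda_C^m$ is cancelled against its inverse, the claim drops out.
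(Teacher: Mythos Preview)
Your proposal is correct and follows essentially the same route as the paper's proof: both compute $W^\star C_X(W^\star)^\top = U(\Lambda_C^m)^3 U^\top$, take the cube root to obtain $M^\star(W^\star)$, cancel $\Lambda_C^m$ against its inverse in $(M^\star)^{-1}W^\star X$, and then pass to the SVD of $X$ (using $V_C=U_X$) to identify the result with $U\Sigma^m V_m^\top$ from Lemma~\ref{lem:exact_sol_sm}. Your exposition is slightly more explicit about the intermediate form $Y^\star(M^\star,W^\star)=U(V_C^m)^\top X$ and its direct interpretation as a principal-subspace projection, but the argument is the same.
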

\section{Simulations}
\label{sec:simulations}
We illustrate the effectiveness of the similarity matching network~\eqref{eq:3_time_scales_system} in solving a principal subspace projection problem via a numerical example\footnote{The code to replicate all the simulations in this section is available at the GitHub \url{https://shorturl.at/l1qVf}.}.
This example, built upon the one in~\cite{CP-AMS-DBC:17}, serves three main purposes: (i) to validate the theoretical results presented in Section~\ref{sec:convergence}; (ii) to compare our method with the algorithm proposed in~\cite{CP-AMS-DBC:17}; and (iii) to investigate the influence of the time-scale parameters $\eps_1$ and $\eps_2$ in equation~\eqref{eq:3_time_scales_system}.

To this end, we consider an $n = 10$ dimensional dataset composed by $T = 2000$ centered input data samples and simulate the system~\eqref{eq:3_time_scales_system} to reduce this dataset to $m = 3$ dimensions.
For each simulation in this section, following~\cite{CP-AMS-DBC:17}, we generate the input data matrix $X \in \R^{10 \times 2000}$ from its SVD. Specifically, the top $m$ singular values are set to $\sqrt{3T}, \sqrt{2T}$, and $\sqrt{T}$ while the remaining singular values are uniformly sampled from the range $[0, 0.1\sqrt{T}]$. The left and right singular vectors are randomly generated. The initial conditions are set as follows: $Y_0 \in \R^{m \times T}$ and $W_0 \in \R^{m \times n}$ are initialized with entries drawn from a standard normal distribution, while $M_0 \in \R^{m \times m}$ is a diagonal matrix with positive entries sampled from a standard normal distribution. Simulations of the dynamics~\eqref{eq:3_time_scales_system} are performed with Python using the ODE solver \emph{solve\_ivp}.
Finally, we let $Y^\star$ denote the optimal solution of the similarity matching OP~\eqref{eq:similarity matching_matrix} given in Lemma~\ref{lem:exact_sol_sm}, and use the subscript “$\textup{P}$" and “$\textup{dyn}$" to denote the numerical solutions obtained via the algorithm in~\cite{CP-AMS-DBC:17} and our dynamics~\eqref{eq:3_time_scales_system}, respectively.

\paragraph{Comparison with the algorithm in~\cite{CP-AMS-DBC:17}.}
We run 500 simulations of both the similarity matching network over three time scales~\eqref{eq:3_time_scales_system} and the offline algorithm in~\cite{CP-AMS-DBC:17} (see Section~\ref{sec:biologically_plausible} for a self-contained review). Both methods are initialized with the same randomly generated initial conditions. 
While the full rank assumption is needed for the offline algorithm in~\cite{CP-AMS-DBC:17} to converge, and it is also theoretically needed in Theorem~\ref{lem:layer3} to ensure that the dynamics~\eqref{eq:dyn_third_layer} is well-defined, time-scale separation empirically guarantees convergence even when $W_0$ is not full rank. To verify this, we initialize $W_0$ as not non-full-rank matrix in three selected simulations. Specifically, in simulation $100$ and $200$, $W_0$ is initialized with rank equal to $2$ and $1$, respectively, while in simulation $300$, $W_0$ is set to the zero $m\times n$ matrix.
For the algorithm in~\cite{CP-AMS-DBC:17} we set $\tau = 0.5$, and $\eta = 0.01$, while for the coupled dynamics~\eqref{eq:3_time_scales_system} we set $\eps_1 = 0.01$ and $\eps_2 = 0.5$.
Figure~\ref{fig:comparison_cost_function} shows the cost difference $\operatorname{SM}(Y) - \operatorname{SM}(Y^\star)$ for both methods, where $Y$ is either $\subscr{Y}{P}$ or $\subscr{Y}{dyn}$. The results empirically show that, under this setup, the dynamics~\eqref{eq:3_time_scales_system} converge to the optimal solution $Y^\star$, achieving an error of the same order of magnitude across all simulations. In contrast, the offline algorithm in~\cite{CP-AMS-DBC:17} fails to converge when $W_0 \notin \mcR_{m,n}$.
Moreover, the simulations empirically confirm that each dynamic of the system~\eqref{eq:3_time_scales_system} converges to the respective theoretical optimal solutions, as established in Theorems~\ref{lem:layer1}--\ref{lem:layer3}. For brevity, we do not include these plots.
\begin{figure}[!h]
\centering 
\includegraphics[width=.8\linewidth]{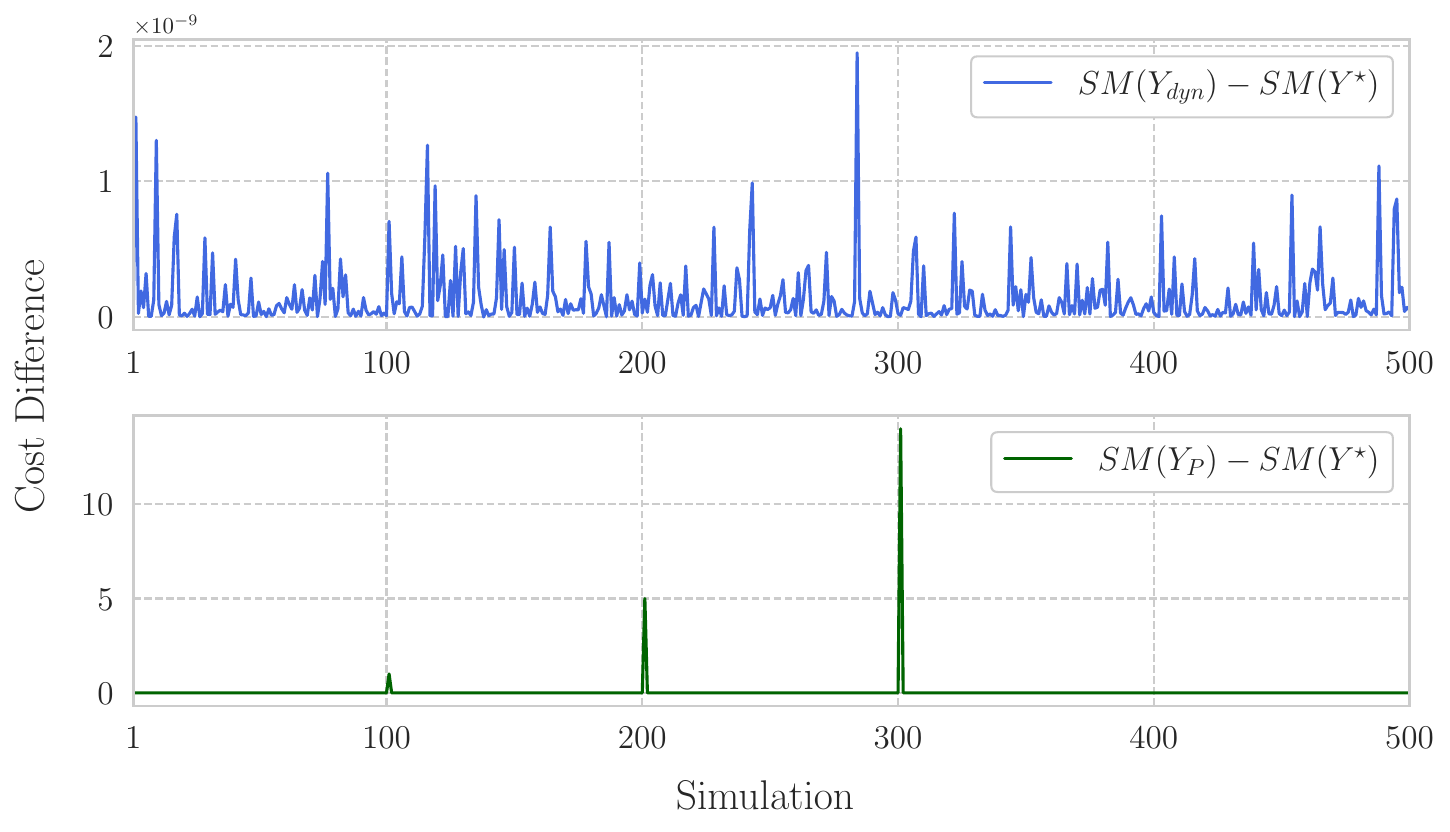}
\caption{Cost difference $\operatorname{SM}(Y) - \operatorname{SM}(Y^\star)$ for the proposed dynamics~\eqref{eq:3_time_scales_system} (up) and the algorithm in~\cite{CP-AMS-DBC:17} (bottom) over 500 simulations. Both methods start from the same randomly generated initial condition. In simulation $100$ and $200$, $W_0$ is initialized with rank equal to $2$ and $1$, respectively, while in simulation $300$, $W_0$ is set to the zero $m\times n$--matrix.
We empirically observe that the dynamics~\eqref{eq:3_time_scales_system} consistently converge to the optimal theoretical solution $Y^\star$ in each simulation. In contrast, the algorithm in~\cite{CP-AMS-DBC:17} fails in the simulations where $W_0$ is not full rank.}
\label{fig:comparison_cost_function}
\end{figure}

\paragraph{Robustness to noise.}
In Theorems~\ref{lem:layer1} and~\ref{lem:layer2}, we demonstrate that the dynamics of $Y$ and $M$ are strongly contracting. This property, in turn, implies several desirable properties, including robustness to noise.
To test this robustness, we run 10 simulations of system~\eqref{eq:3_time_scales_system}, introducing Gaussian noise into the dynamics of $Y$ and $M$. We set $\eps_1 = 0.01$, $\eps_2 = 0.2$ and plot the cost difference $\operatorname{SM}(\subscr{Y}{dyn}) - \operatorname{SM}(Y^\star)$ in Figure~\ref{fig:cost_noise}. Consistently with the contractivity results in Theorems~\ref{lem:layer1} and~\ref{lem:layer2}, the system~\eqref{eq:3_time_scales_system} converges to the theoretical optimal solution, even in the presence of noise.
\begin{figure}[!h]
\centering 
\includegraphics[width=.95\linewidth]{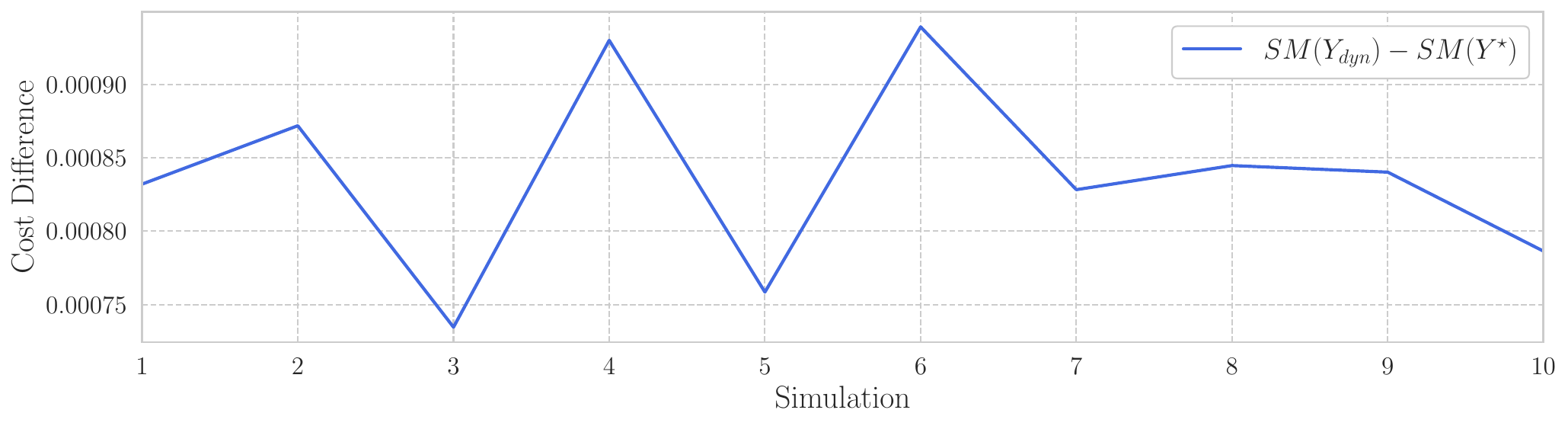}
\caption{Cost difference $\operatorname{SM}(\subscr{Y}{dyn}) - \operatorname{SM}(Y^\star)$ over 10 simulations of system~\eqref{eq:3_time_scales_system} with Gaussian noise added to the contracting dynamics of the first and second level. In line with the contractivity results in Theorems~\ref{lem:layer1} and~\ref{lem:layer2}, the system converges to the theoretical optimal solution.}
\label{fig:cost_noise}
\end{figure}

\paragraph{Different time-scale parameters.}
To investigate the effect of the time-scale parameters $\eps_1$ and $\eps_2$, we run simulations across ten distinct $(\eps_1, \eps_2)$ pairs: $(0.001, 0.05)$, $(0.005, 0.1)$, $(0.01, 0.1)$, $(0.02, 0.15)$, $(0.02, 0.2)$, $(0.03, 0.25)$, $(0.03, 0.3)$, $(0.05, 0.4)$, $(0.05, 0.5)$, and $(0.2, 0.5)$. For each pair, we run 100 simulations of the dynamics~\eqref{eq:3_time_scales_system}. The results empirically confirm that, for every $(\eps_1, \eps_2)$ pair, system~\eqref{eq:3_time_scales_system} consistently converges to the theoretical optimal solution $Y^\star$. For the sake of space, Figure~\ref{fig:comparison_cost_function} shows the cost difference $\operatorname{SM}(\subscr{Y}{dyn}) - \operatorname{SM}(Y^\star)$ for six selected $(\eps_1, \eps_2)$ pairs.
\begin{figure}[!h]
\centering 
\includegraphics[width=.95\linewidth]{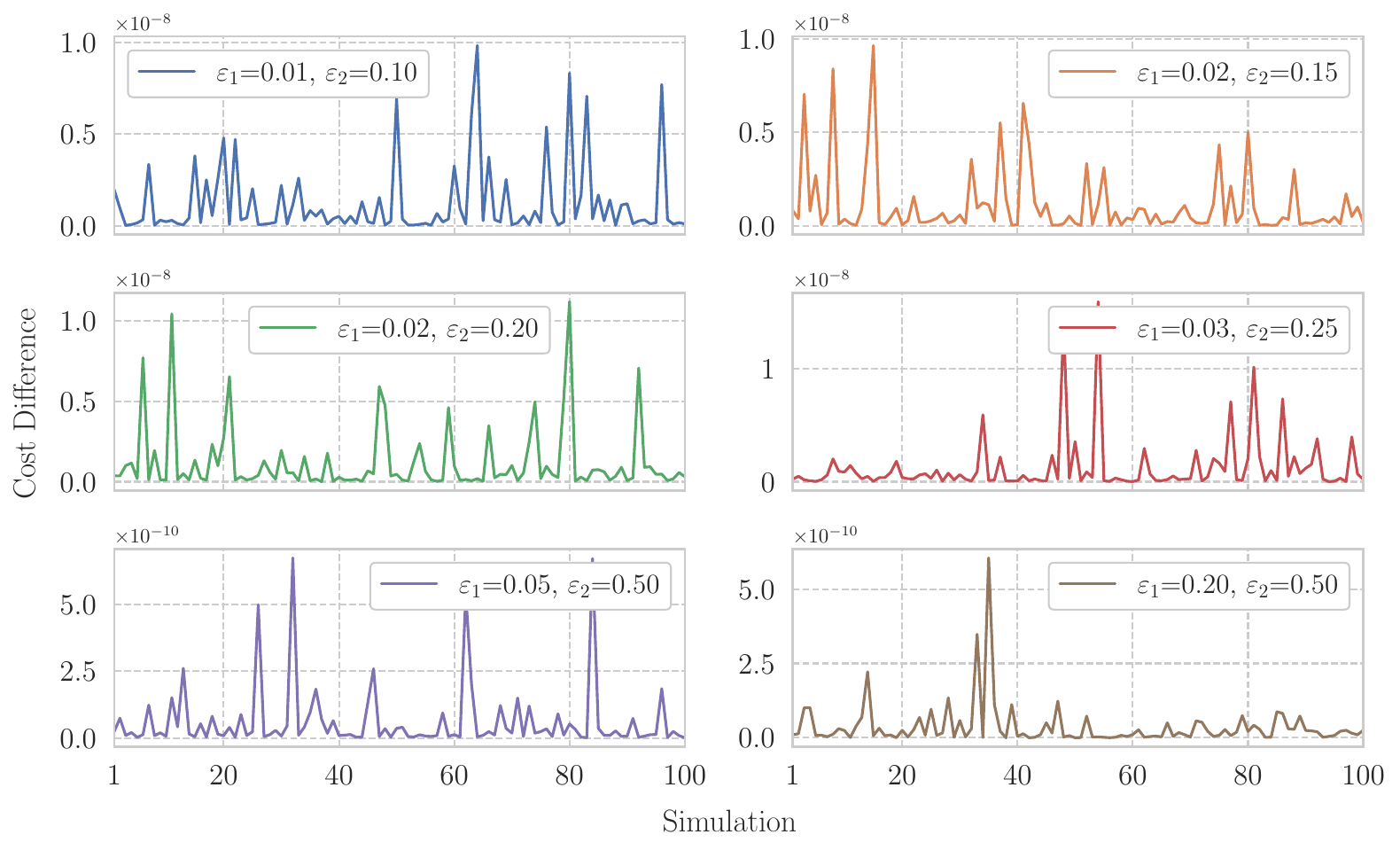}
\caption{Cost difference $\operatorname{SM}(\subscr{Y}{dyn}) - \operatorname{SM}(Y^\star)$ over 100 simulations of system~\eqref{eq:3_time_scales_system} for six different $(\eps_1, \eps_2)$ pairs. In each case, the dynamics~\eqref{eq:3_time_scales_system} converge to the theoretical optimal solution $Y^\star$.}
\label{fig:diff_eps}
\end{figure}
\section{Conclusion}
\label{sec:conclusions}
In this paper, we considered the similarity matching network over three time scales: a continuous-time biologically plausible neural network solving the similarity matching problem for principal subspace projection.
Building on the framework proposed in~\cite{TH-CP-DBC:14, CP-AMS-DBC:17}, our network is derived optimizing a min-max-min principled objective that embeds the similarity matching problems into a higher-dimensional space, that is the space of neural dynamics, lateral synaptic dynamics, and feedforward synaptic dynamics.
Our model consists of three continuous-time dynamical systems evolving at three different time scales: neural dynamics, lateral synaptic dynamics, and feedforward synaptic dynamics at the fast, intermediate, and slow time scales, respectively.

By leveraging a multilevel optimization framework, we prove convergence of the dynamics in the offline setting.
Specifically, at the first level (fast time scale), in Theorem~\ref{lem:layer1} we showed strongly convexity of the cost function, explicitly computed the expression of the global minimum $Y^\star$ in equation~\eqref{eq:ystar}, and derived the corresponding continuous-time gradient-flow neural dynamics. By using contractivity we prove global exponential convergence to $Y^\star$ of the dynamics, along with other desirable properties of contracting dynamics.
Next, we derived similar results at the second level (intermediate time scale). Specifically, in Theorem~\ref{lem:layer2} we showed that the cost function at this level is strongly concave, derived the explicit expression of the global maximizer $M^\star$ in equation~\eqref{eq:Mstar} and the corresponding continuous-time gradient-flow lateral synaptic dynamics. We demonstrated that these dynamics exponentially converge to $M^\star$ in the space of positive definite matrices, leveraging contractivity arguments. Notably, we also proved a key invariance result for lateral synaptic dynamics: they remain positive definite and symmetric if initialized so.
Then, we analyzed the third and final level (slow time scale), which addressed the OP with respect to feedforward synaptic weights and involved a non-convex, non-smooth cost function. In Lemma~\ref{lem:stat_points}, we characterized the stationary points of the OP~\eqref{eq:op_layer3} within the set of full rank matrices. Next, in Theorem~\ref{lem:layer3}, we considered the corresponding continuous-time gradient-flow feedforward synaptic dynamics and showed almost everywhere convergence of this dynamics to a global minimum $W^\star$. These results relied upon two mild technical conjectures: (i) the forward invariance of the full rank set, ensuring the well-posedness of the dynamics~\eqref{eq:dyn_third_layer}; and (ii) the classification of the stationary points of~\eqref{eq:dyn_third_layer} as either minimizers or strict saddles, a property analytically challenging to verify. We empirically validated both conjectures in Appendix~\ref{apx:conjectures}.
We concluded the analysis by projecting the optimal solutions of the three levels back on the output space to obtain the final optimal $Y^\star$.
Finally, we validated the effectiveness of our results via numerical experiments.

In future work, we plan to extend this methodology of proof and our results to similarity-based settings for other dimensionality reduction problems~\cite{CP-DBC:19}. These including non-negative similarity matching for applications such as clustering and non-negative independent component analysis~\cite{CP-DBC:14, CP-SM-DBC:17, DL-CP-DBC:22}, as well as similarity matching with sparsity regularizer for, e.g., sparse dictionary learning~\cite{TH-CP-DBC:14}.
Additionally, with convergence in the offline setting established, a natural direction for further analysis is to extend our study to the online setting, aligning more closely with real-time neural computations.


\appendix
\section{Derivation of the Multi-Variable Optimization Problem~\eqref{eq:sm_offline}}
\label{apx:eqW_M}
For completeness, using the methodology from~\cite{CP-AMS-DBC:17}, here we provide the derivation of the multi-variable OP~\eqref{eq:sm_offline} starting from the similarity matching problem~\eqref{eq:similarity matching_matrix}. For this purpose, we give the following result.
\begin{lem}
\label{lem:eqW_M}
Given two matrices $Y \in \R^{m\times T}$ and $X \in \R^{n\times T}$ we have
\begin{align}
\label{eq:sm_W}
\min_{W \in \R^{m\times n}}
\normF{W}^2 -\frac{2}{T}\inprodF{WX, Y} & = -\frac{1}{T^2} \inprodF{\trasp{X}X, \trasp{Y}Y},\\
\label{eq:sm_M}
\max_{M \in \R^{m\times m}} 
\frac{2}{T} \inprodF{MY, Y} - \normF{M} &= \frac{1}{T^2} \normF{\trasp{Y}Y}^2.
\end{align}
\end{lem}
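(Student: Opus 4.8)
The plan is to treat both identities in Lemma~\ref{lem:eqW_M} as unconstrained quadratic optimization problems over a Euclidean matrix space and solve each one by completing the square with respect to the Frobenius norm. The only algebraic facts needed are the cyclic invariance of the trace and the self-adjointness of the linear maps $W \mapsto WX$ and $M \mapsto MY$ with respect to the Frobenius inner product; in particular $\normF{A}^2 = \tr(A\trasp{A}) = \tr(\trasp{A}A)$ will be used repeatedly.

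For equation~\eqref{eq:sm_W}, I would first rewrite the cross term as $\inprodF{WX, Y} = \tr(\trasp{X}\trasp{W}Y) = \tr(\trasp{W}Y\trasp{X}) = \inprodF{W, Y\trasp{X}}$. The objective then reads
\[
\normF{W}^2 - \frac{2}{T}\inprodF{W, Y\trasp{X}} = \normF{W - \tfrac{1}{T}Y\trasp{X}}^2 - \frac{1}{T^2}\normF{Y\trasp{X}}^2 .
\]
Since $W$ ranges over all of $\R^{m\times n}$, the minimum is attained at $W^\star = \tfrac{1}{T}Y\trasp{X}$ and equals $-\tfrac{1}{T^2}\normF{Y\trasp{X}}^2$. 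It then remains to identify $\normF{Y\trasp{X}}^2 = \tr(X\trasp{Y}Y\trasp{X}) = \tr(\trasp{X}X\trasp{Y}Y) = \inprodF{\trasp{X}X, \trasp{Y}Y}$, using cyclicity of the trace and the symmetry of $\trasp{X}X$; this gives the claimed value.

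For equation~\eqref{eq:sm_M} the argument is entirely parallel: write $\inprodF{MY, Y} = \tr(\trasp{Y}\trasp{M}Y) = \tr(\trasp{M}Y\trasp{Y}) = \inprodF{M, Y\trasp{Y}}$, so (reading the $M$-term as $\normF{M}^2$, which is what the completion of the square requires) the objective becomes
\[
\frac{2}{T}\inprodF{M, Y\trasp{Y}} - \normF{M}^2 = -\normF{M - \tfrac{1}{T}Y\trasp{Y}}^2 + \frac{1}{T^2}\normF{Y\trasp{Y}}^2 .
\]
Maximizing over $M \in \R^{m\times m}$ yields $M^\star = \tfrac{1}{T}Y\trasp{Y}$ and optimal value $\tfrac{1}{T^2}\normF{Y\trasp{Y}}^2$; finally $\normF{Y\trasp{Y}}^2 = \tr(Y\trasp{Y}Y\trasp{Y}) = \tr(\trasp{Y}Y\trasp{Y}Y) = \normF{\trasp{Y}Y}^2$ by cyclicity, giving the stated right-hand side.

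There is no genuine obstacle here: the result is essentially a two-fold completion of the square, and the optimum of an unconstrained strictly convex (resp. concave) quadratic is its unique critical point. The only points requiring care are bookkeeping of the scalar factors $2/T$ and $1/T^2$, the (cosmetic) observation that the $M$-term on the left-hand side of~\eqref{eq:sm_M} must be $\normF{M}^2$ for the identity to hold, and verifying the two trace manipulations $\tr(\trasp{X}X\trasp{Y}Y) = \normF{Y\trasp{X}}^2$ and $\tr(\trasp{Y}Y\trasp{Y}Y) = \normF{Y\trasp{Y}}^2$, both of which follow immediately from $\normF{A}^2 = \tr(A\trasp{A}) = \tr(\trasp{A}A)$ together with the cyclic property of the trace.
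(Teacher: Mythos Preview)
Your proof is correct and follows essentially the same approach as the paper: identify the unique optimizer of each quadratic objective and evaluate the cost there, then use cyclicity of the trace to massage the optimal value into the stated form. The only cosmetic difference is that the paper finds $W^\star=\tfrac{1}{T}Y\trasp{X}$ (and analogously $M^\star$) by setting the matrix derivative to zero, whereas you complete the square; for an unconstrained strictly convex quadratic these are of course equivalent, and your version has the minor advantage of exhibiting optimality directly without appealing to first-order conditions. Your observation that the $M$-term must be $\normF{M}^2$ rather than $\normF{M}$ is also correct.
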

\begin{proof}
Let $F(W) := \normF{W}^2 -\frac{2}{T}\inprodF{WX, Y}  = -\frac{2}{T}\tr(\trasp{X}\trasp{W}Y) + \tr(\trasp{W}W)$ be the cost function of the minimization problem in equation~\eqref{eq:sm_W}. This function is continuous and strongly convex, so the OP~\eqref{eq:sm_W} admits a unique solution. To prove equality~\eqref{eq:sm_W}, we first find the optimal point of $F(W)$.
We have
\begin{align*}
\der{F(W)}{W} = 0 &\iff -\frac{2}{T}\der{\tr(\trasp{X}\trasp{W}Y)}{W} + \der{\tr(\trasp{W}W)}{W} = 0 \\
&\iff -\frac{2}{T} Y \trasp{X} + 2 W = 0 \iff W = \frac{1}{T} Y \trasp{X} := W^\star.
\end{align*}
Finally, we compute
\begin{align*}
{F(W^\star)} &= -\frac{2}{T^2}\tr(\trasp{X}X\trasp{Y}Y) + \frac{1}{T^2}\tr(X\trasp{Y} Y \trasp{X}) = -\frac{2}{T^2}\tr(\trasp{X}X\trasp{Y}Y) + \frac{1}{T^2}\tr(\trasp{X} X\trasp{Y} Y),
\end{align*}
where in the last equality we used the trace cyclic property. The proof of equality~\eqref{eq:sm_M} follows similar reasoning.
This concludes the proof.
\end{proof}
Note that the minimization problem in equation~\eqref{eq:sm_W} can be understood as minimizing the norm of the feedforward synaptic matrix $W$ while maximizing the similarity between the matrices $WX$ and $Y$. Similarly, the maximization problem in equation~\eqref{eq:sm_W} corresponds to minimizing the norm of the lateral synaptic matrix $M$ while maximizing the similarity between the matrices $MY$ and $Y$.
Next, using Lemma~\ref{lem:eqW_M}, the similarity matching optimization problem~\eqref{eq:sm_vec} can be embedded in a larger space as follows:
\begin{align*}
\min_{Y\in \R^{m \times T}} \frac{\tr(\trasp{Y}Y\trasp{Y}Y-2\trasp{X}X\trasp{Y}Y)}{T^2} =&\min_{Y\in \R^{m \times T}} \min_{W \in \R^{m\times n}} 2\normF{W}^2 - \frac{4}{T}\inprodF{WX, Y}\\
&+\min_{Y\in \R^{m \times T}}\max_{M \in \R^{m\times m}} \frac{2}{T}\inprodF{MY, Y} - \normF{M}^2.
\end{align*}
Changing the order of minimization with respect to $Y$ and $W$ as well as the order of minimization with respect to $Y$ and maximization with respect to $M$\footnote{The $\min-\max$ is justified by the saddle point property in~\cite[Appendix A, Proposition 1]{CP-AMS-DBC:17}.}, we finally get that the similarity matching problem~\eqref{eq:similarity matching_matrix} defined  can be embedded into the higher-dimensional space $\R^{m\times n} \times \R^{m\times m} \times \R^{m \times T}$ via the multi-variable OP
$$
\min_{W \in \R^{m\times n}} \max_{M \in \R^{m\times m}} \min_{Y\in \R^{m \times T}} 2\normF{W}^2 - \normF{M}^2 - \frac{4}{T}\inprodF{WX, Y} + \frac{2}{T}\inprodF{MY, Y}.
$$

\section{Proofs of the Results in Section~\ref{sec:convergence}}
\label{apx:proofs}
This appendix provides formal and detailed proofs of the results presented in Section~\ref{sec:convergence}.
\subsection{Proof of Theorem~\ref{lem:layer1}}
\label{apx:proof_lemma_layer1}
We first prove Theorem~\ref{lem:layer1}, which analyzes the minimization problem of the first level, namely problem~\eqref{eq:op_layer1}. 
\begin{proof}
First, we show part~\ref{item:1_first_layer}, i.e., the cost function $S_1(Y) = \wnormF{Y}{M}^2 - \frac{4}{T}\inprodF{WX, Y} + 2\normF{W}^2 - \normF{M}^2$ is strongly convex with parameter ${\nu}_{Y} = \frac{4}{T}\subscr{\lambda}{min}(M)$.
To do so, we show that the vectorized form of $S_1$ is strongly convex with parameter ${\nu}_{Y}$.  Let $y := \vecto(Y) \in \R^{mT}$ and note that 
\bi
\item $\wnormF{Y}{M}^2 = \tr(\trasp{Y}MY) = \tr(\trasp{Y}M^{1/2}M^{1/2}Y) = \tr(\trasp{(M^{1/2} Y)}M^{1/2}Y):=\normF{M^{1/2}Y}^2$,
\item $\normF{M^{1/2}Y}^2 = \norm{\vecto(M^{1/2}Y)}_2^2 \overset{\ref{prop:vec3}}{=} \norm{(I_{T} \otimes M^{1/2})\vecto(Y)}_2^2$,
\item $\tr(\trasp{(WX)}Y) \overset{\ref{prop:vec3}}{=} \trasp{\vecto(WX)}\vecto(Y)$.
\ei
The vectorized form of $S_1$ is therefore
$$
\begin{aligned}
S_1(y) &= \frac{2}{T}\Bigl(\trasp{y} \trasp{(I_{m} \otimes M^{1/2})}(I_{m} \otimes M^{1/2})y  - 2\trasp{\vecto(WX)}y \Bigr) + 2 \norm{\vecto(W)}_2^2 - \norm{\vecto(M)}_2^2.
\end{aligned}
$$
We have
\begin{align*}
& \nabla S_1(y) = \frac{2}{T} \Bigl(2\trasp{(I_{m} \otimes M^{1/2})}(I_{m} \otimes M^{1/2})y - 2\trasp{\vecto(WX)} \Bigr), \text{ and }\\
&\nabla^2 S_1(y) =  \frac{4}{T} \trasp{(I_{m} \otimes M^{1/2})}(I_{m} \otimes M^{1/2}) \succ \frac{4}{T} \subscr{\lambda}{min}(M),
\end{align*}
where the last inequality follows from $M \succ 0$ and the fact that the eigenvalues of the matrix $I_{m} \otimes M^{1/2}$ are the same (but with different multiplicity) of those of $M^{1/2}$. This concludes the proof of part~\ref{item:1_first_layer}.

Next, to prove part~\ref{item:2_first_layer}, we determine the minimizer of the OP~\eqref{eq:op_layer1}. We compute
\begin{align*}
\frac{d}{dY}\bigl(\wnormF{Y}{M}^2 - 2 \tr(\trasp{X}\trasp{W}Y)\bigr) := \frac{d}{dY}\bigl(\tr(\trasp{Y}MY) - 2 \tr(\trasp{X}\trasp{W}Y)\bigr) &= 0 \iff \\
2 MY - 2 WX &= 0 \iff Y = M^{-1}WX =: Y^\star.
\end{align*}
This concludes the proof of part~\ref{item:2_first_layer}.
Finally, part~\ref{item:3_first_layer} follows directly from~\ref{item:1_first_layer} by applying Lemma~\ref{lem:gradient_flow}, after computing $\nabla_Y S(W, M, Y)$.
\end{proof}

\subsection{Proof of Theorem~\ref{lem:layer2}}
\label{apx:proof_lemma_layer2}
We now prove Theorem~\ref{lem:layer2}, which analyzes the OP of the second level, i.e., problem~\eqref{eq:op_layer2}. 
\begin{proof}
To prove part~\ref{item:1_second_layer}, i.e., the cost function $S_2$ is strongly concave with parameter ${\nu}_{M} =  2$, we show that the function $M \mapsto - S_2(M) - {\frac{2}{2}}\normF{M}^2$ is convex\footnote{A function $f$ is strongly convex with parameter $\nu$ if and only if the function $x\mapsto f(x)-{\frac {\nu}{2}}\|x\|^{2}$ is convex.}.
To this purpose, first note that the function $M \mapsto \normF{M}^2$ is convex, being a norm. Additionally, the map $M \mapsto \tr{(M^{-1}W C_X\trasp{W})}$ is convex. 
This property follows from the following reasoning. Consider a small perturbation of $M$, that is $M(t) = M + tP$, where $P \in \S^m$. Then, it suffices to prove that $\frac{d^2}{dt^2} \tr(M^{-1}W C_X\trasp{W})\big |_{t = 0} \geq 0$.
We compute
\begin{align*}
M(t)^{-1}W C_X\trasp{W} &= \bigl(M + tP\bigr)^{-1}W C_X\trasp{W} = \bigl(M(I_m + tM^{-1}P)\bigr)^{-1}WC_X\trasp{W} \\
&= \bigl(I_m + tM^{-1}P\bigr)^{-1}M^{-1}WC_X\trasp{W} \\
&= M^{-1}WC_X\trasp{W} - tM^{-1}PM^{-1}WC_X\trasp{W} + t^2 M^{-1}PM^{-1}PM^{-1}WC_X\trasp{W} + O(t^3),
\end{align*}
where in the last equality we expanded $(I_m + tM^{-1}P)^{-1}$ in power series.
We have
\[
\frac{d^2}{dt^2} \tr(M^{-1}W C_X\trasp{W})\big|_{t = 0} = 2 \tr(M^{-1}PM^{-1}PM^{-1}WC_X\trasp{W}) = \inprodF{(M^{-1}P)M^{-1}(M^{-1}P)^\top,WC_X\trasp{W}} \geq 0,
\]
where the last inequality follows from $M^{-1} \succ 0$ and $C_X \succ 0$. Recalling that any non-negative weighted sum of convex functions is convex, we have that $\inprodF{M^{-1}, WC_X\trasp{W}} $ is convex.
Then, the map
$$
M \mapsto -S_2(M) - {\frac{2}{2}}\normF{M}^2 = \normF{M}^2 + 2\inprodF{M^{-1}, WC_X\trasp{W}} - \normF{M}^2 = 2\inprodF{M^{-1}, WC_X\trasp{W}}
$$
is convex. This concludes the proof of part~\ref{item:1_second_layer}. 
To prove~\ref{item:2_second_layer}, we compute
\begin{align*}
\frac{d}{dM}\bigl({\tr(\trasp{M}M) + 2 \tr(M^{-1}W C_X\trasp{W})\bigr)} = 0 &\iff 2 M - 2 M^{-1}W C_X\trasp{W}M^{-1}  = 0\\
& \iff M^3 = W C_X\trasp{W} \iff M = \bigl(WC_X\trasp{W}\bigr)^{1/3} =: M^\star.
\end{align*}

Next, to prove part~\ref{item:3_second_layer},  let $\subscr{M}{S}(t)$ and $\subscr{M}{A}(t)$ be the symmetric and skew-symmetric components of $M(t)$, respectively.
To prove~\ref{item1:lem_M} we show that $M(t) = \subscr{M}{S}(t)$, $\forall t \geq 0$.
By using the standard decomposition $M(t) = \subscr{M}{S}(t) + \subscr{M}{A}(t)$, $\forall t \geq 0$, we can write equation~\eqref{eq:dynamics_M} as
$$
\subscr{\dot M}{S}(t) + \subscr{\dot M}{A}(t) = - \subscr{M}{S}(t) + \frac{2}{T} Y(t) \trasp{Y(t)} - \subscr{M}{A}(t).
$$
Note that the right hand side in equality~\eqref{eq:dynamics_M} is symmetric at time $t=0$,
being (i) $Y(t) \trasp{Y(t)}$ symmetric for all $t\geq0$, and (ii) $\subscr{M}{A}(0) = 0$ for assumption. Thus $\subscr{M}{A}(t) = 0$, $\forall t\geq 0$. Additionally $\ds \lim_{t \to \infty} \subscr{M}{A}(t) = 0$. This equality follows by noticing that the dynamics for the skew-symmetric component of $M(t)$ are given by $\subscr{\dot M}{A}(t) = -\subscr{M}{A}(t)$, whose solution is $\subscr{M}{A}(t) = \subscr{M}{A}(0)\e^{-t}$ = 0, for all $t \geq 0$. Thus the desired result.

To prove~\ref{item2:lem_M} we use the definition of positive definite matrix, i.e., $M\succ 0$ if and only if $\trasp{x}M x > 0$, for all vectors $x$ different from zero. 
Given $z \in \R^{m}\setminus{\0_m}$, consider the scalar dynamics 
\beq
\label{eq:dynamics_M_x}
\frac{d}{dt} \trasp{z}M(t)z = - \trasp{z}M(t)z + \trasp{z}\frac{Y(t) \trasp{Y(t)}}{T}z.
\eeq
The solution of the ODE~\eqref{eq:dynamics_M_x} is
$
\trasp{z}M(t)z = \bigl(\trasp{z}M_0z\bigr)\e^{-t} + \int_0^t \e^{\tau -t} \Bigl(\trasp{z}\frac{Y(\tau) \trasp{Y(\tau)}}{T}z\Bigr) d\tau
$
and, being $M_0\succ0$, we have $\trasp{z}M(t)z > 0$ for all $t \geq 0$. This concludes the proof.

Finally, statement~\ref{item:4_second_layer} follows by computing
$
\nabla_M S_2(M) = \nabla_M S(W, M, Y)\big|_{Y^\star} {=} - 2M + \frac{2}{T}Y^\star (Y^\star)^\top = - 2 M + 2 M^{-1}W C_X\trasp{W}M^{-1},
$
and by applying Lemma~\ref{lem:gradient_flow} within the set $\S_{\succ 0}^m$, which, by statement~\ref{item:3_second_layer} is forward invariant for the dynamics~\eqref{eq:dynamics_M}.
\end{proof}
\subsection{Proof of Lemma~\ref{lem:stat_points}}
\label{apx:proof_lemma_stat_points}
We prove that the stationary points of the minimization problem~\eqref{eq:op_layer3} are the full rank matrices whose eigenvalues and associated right eigenvectors are $m$ eigenvalues of $C_X$ with the associated eigenvectors.
\begin{proof}
For any $W \in \mcR_{m,n}$, we have $\nabla S_3(W) = - 4W + 4\bigl(WC_X\trasp{W}\bigr)^{-1/3} WC_X$. To prove our statement we show that $\nabla S_3(\bar{W}) = 0$ if and only if $\bar{W} \in \mathcal{S}$. Let $(U_W, \Sigma_W, V_W)$ be the SVD of $W \in \R^{m \times n}$. We have 
\begin{align}
&- 4W + 4\bigl(WC_X\trasp{W}\bigr)^{-1/3} WC_X = \0_{m \times n} \nonumber\\
& \iff - U_W \Sigma_W V_W^\top + \bigl(U_W \Sigma_W V_W^\top C_XV_W \Sigma_W^\top U_W^\top \bigr)^{-1/3} U_W \Sigma_W V_W^\top C_X = \0_{m \times n} \label{eq:1}\\
& \iff - U_W \Sigma_W V_W^\top + U_W\bigl(\Sigma_W V_W^\top C_XV_W \Sigma_W^\top \bigr)^{-1/3} U_W^\top U_W \Sigma_W V_W^\top C_X = \0_{m \times n}\label{eq:2}\\
& \iff \Sigma_W = \bigl(\Sigma_W V_W^\top C_XV_W \Sigma_W^\top \bigr)^{-1/3} \Sigma_W V_W^\top C_X V_W, \label{eq:3}
\end{align}
where in equality~\eqref{eq:1} we substituted the SVD composition of $W$; in equality~\eqref{eq:2} we used the definition of exponential matrix; in equality~\eqref{eq:3} we multiplied on the left by $U_W^\top$ and on the right by $V_W$ and used the fact that $U \in \mathcal{O}_m$ and $V \in \mathcal{O}_n$.
Next, being $\Sigma_W \in \mathcal{R}_{m,n}$, equality~\eqref{eq:3} is equivalent to
\begin{align}
&\Sigma_W \Sigma_W^\top = \bigl(\Sigma_W V_W^\top C_XV_W \Sigma_W^\top \bigr)^{-1/3} \Sigma_W V_W^\top C_X V_W \Sigma_W^\top \iff (\Sigma_W^m)^2 = \bigl(\Sigma_W V_W^\top C_XV_W \Sigma_W^\top \bigr)^{2/3} \label{eq:5}\\
&\iff (\Sigma_W^m)^3 = \Sigma_W^m (V_W^\top V_C \Lambda_C V_C^\top V_W)_{(m,m)}(\Sigma_W^m)^\top \label{eq:6}\\
&\iff
\left[
\begin{array}{c c c}
\sigma_1^3 & \dots & 0\\
\vdots & \ddots & 0\\
0 & \dots & \sigma_m^3
\end{array}
\right] = 
\left[
\begin{array}{c c c}
\sigma_1 & \dots & 0\\
\vdots & \ddots & 0\\
0 & \dots & \sigma_m
\end{array}
\right]
\left(
V_W^\top V_C
\left[
\begin{array}{c c c}
\lambda_1^C & \dots & 0\\
\vdots & \ddots & 0\\
0 & \dots & \lambda_n^C
\end{array}
\right]
V_C^\top V_W
\right)_{(m,m)}
\left[
\begin{array}{c c c}
\sigma_1 & \dots & 0\\
\vdots & \ddots & 0\\
0 & \dots & \sigma_m
\end{array}
\right] \nonumber\\
&\iff
\left[
\begin{array}{c c c}
\sigma_1 & \dots & 0\\
\vdots & \ddots & 0\\
0 & \dots & \sigma_m
\end{array}
\right] = 
\left(
V_W^\top V_C
\left[
\begin{array}{c c c}
\lambda_1^C & \dots & 0\\
\vdots & \ddots & 0\\
0 & \dots & \lambda_n^C
\end{array}
\right]
V_C^\top V_W
\right)_{(m,m)}, \label{eq:7}
\end{align}
where in equality~\eqref{eq:5} we multiplied on the right by $\Sigma_W^\top$; in equality~\eqref{eq:6} we raised both sides to the $2/3$ power.
We stress that equation~\eqref{eq:7} has two unknowns: the singular values $\sigma_i$, $i \in \until{m}$, and the corresponding right eigenvectors $V_W$.
Being the LHS of equality~\eqref{eq:7} diagonal, this equality is verified when $V_C V_W^\top$ is verified if and only if $V_C V_W^\top \in \mathcal{P}_n$. Then the points satisfying equality~\eqref{eq:7} are the matrices $W$ such that (i) $V_W = V_C P^\top$, (ii) $\Sigma_W^m = \left( P \Lambda_C P^\top \right)_{(m,m)}$, for $P \in \mathcal{P}_n$.
This concludes the proof.
\end{proof}
\subsection{Proof of Lemma~\ref{lem:coercive}}
\label{apx:proof_lemma_lcoercive}
We prove that the cost function $S_3(W) = 2\normF{W}^2 - 3\normF{\bigl(WC_X\trasp{W}\bigr)^{1/3}}^2$ of the OP~\eqref{eq:op_layer3} is coercive. This follows noticing that the first term grows as $\bigO(\normF{W}^2)$, while the second term grows only as $\bigO(\normF{W}^{4/3})$, being $C_X \succ 0$. Therefore, as $\normF{W}^2$ goes to infinity, the quadratic term dominates, and $S_3(W) \to + \infty$.
\subsection{Proof of Theorem~\ref{lem:layer3}}
\label{apx:proof_lemma_layer3}
We now prove Theorem~\ref{lem:layer3}, which analyzes the OP of the third and last level, namely problem~\eqref{eq:op_layer3}, characterizing convergence of the continuous-time gradient-flow feedforward synaptic dynamics in the slow time scale~\eqref{eq:dyn_third_layer}.
\begin{proof}
Under Conjecture~\ref{conj_full_row_rank}, LaSalle’s invariance principle implies that any trajectory $W(t)$ of the dynamics~\eqref{eq:dyn_third_layer} converges to the set $\mathcal{S}$ of full-rank critical points introduced in Lemma~\ref{lem:stat_points}.
Next, given the SVD decomposition of $W$, we note that (i) the dynamics is invariant with respect to the left eigenvectors $U_W$, and (ii) at fixed $U_W$, the critical points $\bar{W} \in \mathcal{S}$ are isolated. Therefore, for each trajectory $W(t)$ of the dynamics~\eqref{eq:dyn_third_layer}, there exists a unique critical point $\bar{W} \in \mathcal{S}$ to which $W(t)$ converges.
Finally, Conjecture~\ref{conj_full_strict_saddle} ensures that all non-minimizing critical points are strict saddles, while Lemma~\ref{lem:coercive} ensures that any trajectory of the gradient-flow~\eqref{eq:dyn_third_layer} is bounded. Then, convergence to a global minimizer $W^\star$ from almost all initial conditions follows by applying~\cite[Corollary 4]{ACBdO-MS-EDS:24}.
\end{proof}

\subsection{Proof of Lemma~\ref{lem:final_proj}}
\label{apx:proof_lemma_final_proj}
Finally, we show that projecting the optimal solutions of the three levels back into the output space, we obtain the optimal $Y^\star$ given in Lemma~\ref{lem:exact_sol_sm}.
\begin{proof}
Let $(U_X, \Sigma_X, V_X)$ be the SVD of $X \in \R^{n \times T}$, and $(V_C, \Lambda_C)$ be the SVD of the covariance matrix $C_X \in \R^{n \times n}$. By construction, $V_C = U_X \in \R^{n \times n}$ and $\ds T\Lambda_C = \Sigma_X \Sigma_X^\top \in \R^{n \times n}$.
Then, $W^\star =   U \Lambda_C^m (V_C^m)^\top = U \Lambda_C^m (U_X^m)^\top$, where $U\in \R^{m \times m}$ is an arbitrary orthogonal matrix and $U_X^m = (u^X_1, \dots, u^X_m)$. 
Thus 
$
W^\star C_X (W^\star)^\top = U(\Lambda_C^m)^3 U^\top.
$
Projecting back the optimal solutions of the three levels on the output space
we have
\begin{align}
Y^\star(M^\star, W^\star) &\overset{\eqref{eq:ystar}}{=}  (M^\star)^{-1} W^\star X \overset{\eqref{eq:Mstar}}{=} \bigl(U(\Lambda_C^m)^3U^\top\bigr)^{-1/3} U \Lambda_C^m(U_X^m)^\top X
\nonumber\\
&= U (\Lambda_C^m)^{-1} \Lambda_C^m (U_X^m)^\top U_X \Sigma_X V_X^\top
\label{eq:3_orth_svd}\\
&= U I_m (U_X^m)^\top
\left[
\begin{array}{c}
U_X^{m}\\
U_X^{n-m}
\end{array}
\right] \Sigma_X V_X^\top = U \Sigma_X^m (V_X^m)^\top, \label{eq:4_orth_prod}
\end{align}
where in equality~\eqref{eq:3_orth_svd} we used 
the orthogonality of $U$ and substituted the SVD of $X$, while in equality~\eqref{eq:4_orth_prod} we used the orthogonality of $U_X$.
\end{proof}
\section{Empirical Evidence for Conjectures in Section~\ref{sec:third_layer}}
\label{apx:conjectures}
This appendix presents numerical evidence in support of Conjectures~\ref{conj_full_row_rank} and~\ref{conj_full_strict_saddle}. We validate both conjectures on the same set of numerical simulations. Specifically, we consider three different $(m, n)$ pairs: $(2,10)$, $(3,30)$, and $(5,50)$. For each pair, we randomly generate a positive definite matrix $C_X$ and run $10.000$ simulations of the dynamics~\eqref{eq:dyn_third_layer} over the time interval $[0, 30]$, starting from a full rank randomly generated $W_0$ for each run. The code and the data to reproduce all the simulations in this section are available at the GitHub repository~\url{https://shorturl.at/l1qVf}.

\subsection{Empirical Evidence for Conjecture~\ref{conj_full_row_rank}}
\label{apx:conjecture1}
Conjecture~\ref{conj_full_row_rank} asserts that the trajectories $W(t)$ of the dynamics~\eqref{eq:dyn_third_layer} remain full rank for all time, if $W_0$ has full row rank.
To test this, we track the rank of $W(t) W(t)^\top$ over time across $10.000$ independent runs for each $(m,n)$ pair.
Figure~\ref{fig:rank_dynamics} shows the evolution of the rank over time for each pair. In agreement with our conjecture, the rank remains constant at $m$ throughout the entire simulation for all tested cases.
\begin{figure}[!h]
\centering
\includegraphics[width=.8\linewidth]{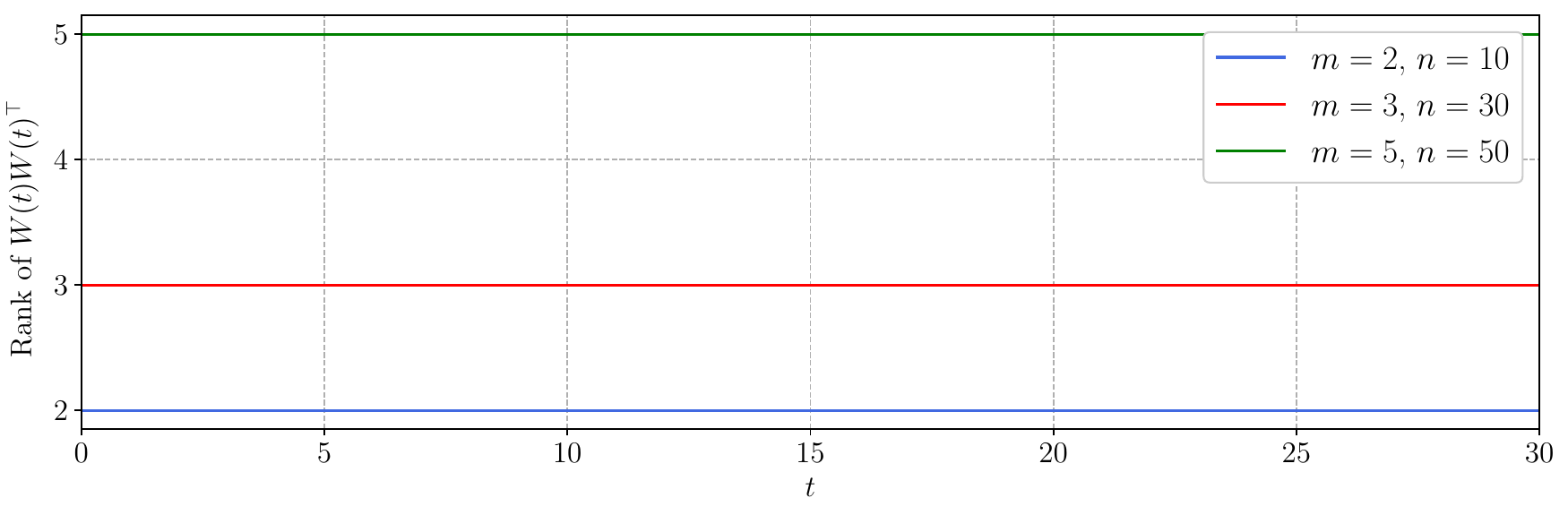}
\caption{Rank of $W(t) W(t)^\top$ over time across the $10.000$ independent runs for three different $(m, n)$ pairs: $(2,10)$, $(3,30)$, and $(5,50)$. The matrix $C_X \succ 0$ is the same for each pair. The dynamics~\eqref{eq:dyn_third_layer} are simulated over the time interval $[0, 30]$, starting from full rank randomly generated $W_0$. In agreement with our conjecture, the rank is constant at $m$ over the simulations ($t$ is simulation time). 
}
\label{fig:rank_dynamics}
\end{figure}

\subsection{Empirical Evidence for Conjecture~\ref{conj_full_strict_saddle}}
\label{apx:conjecture2}
To test Conjecture~\ref{conj_full_strict_saddle}, we compute the numerical equilibrium points $W^\star$ reached by the dynamics~\eqref{eq:dyn_third_layer} in each run. Then, we compute the average of the singular values and right eigenvectors of the matrix $W^\star$ across all 10.000 simulations. 
For the sake of space, Table~\ref{tab:check_conjecture2} reports only the average of the singular values of $W^\star$ and the largest six eigenvalues of the covariance matrix $C_X$. In accordance with our conjecture, $W^\star$ is the matrix having as singular values the largest $m$ eigenvalues of $C_X$, and as right eigenvectors the associated eigenvectors of $C_X$.
\begin{table}[!h]
\centering
\begin{tabular}{ccc}
\hline
m &  $[\sigma_1^W, \dots, \sigma_m^W]$ & $[\lambda_1^C, \dots, \lambda_6^C]$ \\
\hline
  2 & [166.57, 131.05]  & [166.57, 131.06, 120.74, 107.36, 98.86, 93.53] \\
  3 & [231.01, 199.67, 181.57]  & [231.01, 199.67, 181.59, 172.83, 157.73, 151.66]  \\
  5 & [172.97, 161.13, 157.47, 155.18, 150.15] & [172.97, 161.13, 157.47, 155.18, 150.26, 145.76] \\
\hline
\end{tabular}
\caption{Comparison between the average $m$ singular values of the matrix $W^\star$ (second column) and the top six eigenvalues of the randomly generated matrix $C_X$ (third column). Each row corresponds to a different $(m, n)$ pair: $(2,10)$, $(3,30)$, and $(5,50)$. The averages of $\sigma^W$ are computed over 10.000 simulations. In agreement with Conjecture~\ref{conj_full_strict_saddle}, the eigenvalues of $W^\star$ are the leading $m$ eigenvalues of $C_X$.}
\label{tab:check_conjecture2}
\end{table}
Next, we consider an example with $n,~m,~T = 5, 3, 10$, where $C_X \succ 0$ is randomly generated and has eigenvalues $\Lambda_C = [3.8, 2.4, 2.2, 1.5, 1.2]$. Starting from a global minimum $W^\star$ as characterized by Conjecture~\ref{conj_full_strict_saddle}, we evaluate the cost function $S_3(W)$ by varying one singular value $\sigma_i^W$ while keeping the others fixed. Figure~\ref{figure:costs} illustrates the results. In each case, the minimum is obtained when the singular value $\sigma_i^W$ matches the corresponding eigenvalue $\lambda_i^C$. Additionally, we also consider a case where the first and last rows of $V_C$ are swapped. For this case, we set $\tilde W = \Sigma_W V_C^\top$ and again evaluate $S_3(W)$ as a function of $\sigma_1^W$. In both cases, the results confirm that the global minimum of $S_3(W)$ is obtained at $W^\star$.
\begin{figure}[!h]
\centering
\includegraphics[width=.8\linewidth]{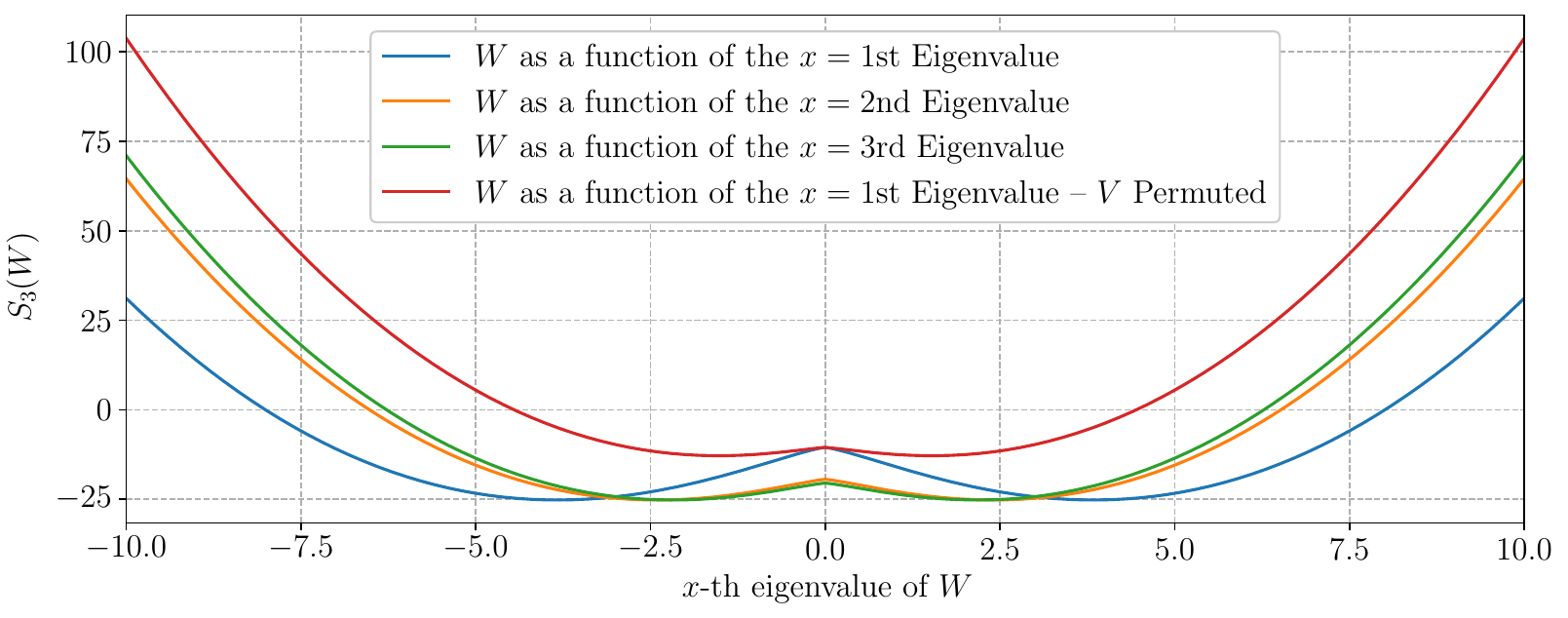}
\caption{Cost function $S_3(W)$ as a function of one singular value $\sigma_i^W$, while the others are held fixed at their values in $W^\star = \Sigma_C^m V_C^\top$.The plot also shows the case where the first and last rows of $V_C$ are permuted. In the simulations, we set $n,~m,~T = 5, 3, 10$ and $C_X\succ 0$ is randomly generated and has eigenvalues $\Lambda_C = [3.8, 2.4, 2.2, 1.5, 1.2]$. The minimum of $S_3(W)$ is reached at $W^\star$, in agreement with Conjecture~\ref{conj_full_strict_saddle}.}
\label{figure:costs}
\end{figure}
\section{Alternative Approach for Computing the Minimizer of Problem~\eqref{eq:op_layer3}}
In this section, we present an alternative method for computing the minimizer of the optimization problem corresponding to the third level, namely Problem~\eqref{eq:op_layer3}. This approach, which characterizes the landscape of the minimizers, is interesting {\em per se} and is independent on the dynamics~\eqref{eq:dyn_third_layer}.
We begin with an algebraic result on the maximization of a power of the trace of diagonal and orthonormal matrices product that is instrumental for our analysis.
It is known that, given $\Lambda_1 \in \R^{m \times n}$, and $\Lambda_2 \in \R^{n \times n}$, $\Lambda_2 \succ 0$, diagonal matrices, 
$\ds \max_{V \in \mathcal{O}_n} \tr\bigl(\Lambda_1 V \Lambda_2 V^\top \Lambda_1^\top\bigr) = \tr\bigl(\Lambda_1 \Lambda_2 \Lambda_1^\top \bigr)$. 
In Lemma~\ref{lem:pita} we extend this result to the case where the cost function is raised to the $2/3$ power. Specifically, we show that $\ds \max_{V \in \mathcal{O}_n} \tr\bigl(\bigl(\Lambda_1 V \Lambda_2 V^\top \Lambda_1^\top\bigr)^{2/3}\bigr) = \tr\bigl(\bigl(\Lambda_1 \Lambda_2 \Lambda_1^\top\bigr)^{2/3}\bigr)$. In our formal derivations, a mild technical point in the proof relies on the following conjecture:
\begin{conj}
\label{conj_lin_prog}
Consider $\Lambda_1 \in \R^{m \times n}$ and $\Lambda_2 \in \R^{n \times n}$ diagonal matrices, with $\Lambda_2 \succ 0$ with distinct diagonal entries, and let $U$ be an orthogonal matrix. If the matrix $U^\top \Lambda_1^\top\bigl(\Lambda_1 U \Lambda_2 U^\top \Lambda_1^\top\bigr)^{-1/3}\Lambda_1 U$ is diagonal, then $U$ is a permutation matrix.
\end{conj}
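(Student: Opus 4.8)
The plan is to read the hypothesis as a rank-$m$ constraint and extract rigidity from it. For the $-1/3$ power to be defined the inner matrix must be invertible, so I take $\Lambda_1=[\Sigma\mid\0_{m,n-m}]$ with $\Sigma=\diag{\sigma_1,\dots,\sigma_m}\succ0$ of full rank, and split $U$ into its top $m$ rows $U_1$ and bottom $n-m$ rows $U_2$. Setting $P:=\Sigma U_1$ and $A:=P\Lambda_2 P^\top$, orthonormality of the rows of $U$ gives $U_1 U_1^\top=I_m$, hence $P P^\top=\Sigma^2$, and the hypothesis ``$U^\top\Lambda_1^\top(\Lambda_1 U\Lambda_2 U^\top\Lambda_1^\top)^{-1/3}\Lambda_1 U$ is diagonal'' becomes ``$B:=P^\top A^{-1/3}P$ is diagonal.''

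First I would pin down the support of $U$. Since $A^{-1/3}\succ0$ one has $\ker B=\ker P$, so $\Img(B)=(\ker B)^\perp=\mathrm{row}(U_1)$; as $B$ is diagonal of rank $m$, this range is a coordinate subspace $E_S=\spn\{e_j:j\in S\}$ with $|S|=m$. Hence every row of $U_1$ is supported on $S$, the columns of $U_1$ outside $S$ vanish, and the $m\times m$ submatrix $Q_1$ formed by the columns in $S$ lies in $\mathcal{O}_m$. Column orthonormality of $U$ then forces the $S$-columns of $U_2$ to vanish, so the remaining $(n-m)\times(n-m)$ submatrix $Q_2$ of $U_2$ lies in $\mathcal{O}_{n-m}$; after reordering columns so that $S$ precedes its complement, $U=\diag{Q_1,Q_2}$. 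This restricts the diagonality hypothesis to the $S$-block: with $D:=\diag{\mu_j:j\in S}\succ0$ (distinct), the condition becomes that $C:=Q_1^\top\Sigma A^{-1/3}\Sigma Q_1$ is diagonal, where now $A=\Sigma Q_1 D Q_1^\top\Sigma$ — exactly the square version of the problem.

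Next I would convert this into an algebraic identity. Writing $C=\Delta\succ0$ and inverting, $A^{-1/3}=\Sigma^{-1}Q_1\Delta Q_1^\top\Sigma^{-1}$, so $A^{1/3}=\Sigma Q_1\Delta^{-1}Q_1^\top\Sigma$ and therefore $A=\Sigma Q_1 D Q_1^\top\Sigma=(\Sigma Q_1\Delta^{-1}Q_1^\top\Sigma)^3$. Conjugating by $\Sigma^{-1}$ and setting $F:=Q_1^\top\Sigma^2 Q_1$ — an SPD matrix with $\spec(F)=\{\sigma_1^2,\dots,\sigma_m^2\}$ — collapses this to the diagonal identity $F\Delta^{-1}F=\Delta D\Delta$. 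The goal is to show this forces $F$ diagonal: then $Q_1^\top\Sigma^2 Q_1$ is diagonal while $\Sigma^2$ has distinct entries (distinctness holds in the application by Lemma~\ref{lem:stat_points}), so the columns of $Q_1$ are eigenvectors of $\Sigma^2$ and $Q_1$ is a signed permutation; composed with the column reordering of the previous step this is the desired permutation action.

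The hard part will be exactly this final rigidity: deducing ``$F$ diagonal'' from $F\Delta^{-1}F=\Delta D\Delta$ together with $\spec(F)=\{\sigma_i^2\}$. The off-diagonal equations only say the rows of $F$ are $\Delta^{-1}$-orthogonal, which alone is insufficient; the argument must use that the $\sigma_i$ are distinct — in the degenerate case $\Sigma=cI$ the condition $C$ diagonal holds for every $Q_1$, so non-permutation stationary points genuinely appear — and the coupling $C=\Delta$ between $\Delta$ and $D$. I expect to close this by a homotopy in the exponent from $1$ to $2/3$: at exponent $1$ the problem is the classical $\max_{V\in\mathcal{O}_n}\tr(\Lambda_1 V\Lambda_2 V^\top\Lambda_1^\top)$, whose stationary points are the signed permutations, and one shows the stationary set acquires no new components as the exponent varies for $\Lambda_2$ and $\Sigma$ with distinct entries. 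Two indeterminacies must be recorded when reading the conclusion as ``$U$ is a permutation matrix'': the sign ambiguity, yielding a signed permutation normalized to a permutation by the SVD sign convention, and the block $Q_2\in\mathcal{O}_{n-m}$ acting on $\ker\Lambda_1$, which the hypothesis leaves free and which does not affect $\Lambda_1 U\Lambda_2 U^\top\Lambda_1^\top$. This residual freedom, together with the fractional-power rigidity above, is precisely what relegates the statement to a conjecture.
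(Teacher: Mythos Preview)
There is nothing to compare against: the paper does not prove this statement. It is explicitly labeled a conjecture and is supported only by Monte Carlo evidence ($10^5$ random instances for several $(m,n)$ pairs). Your proposal already goes substantially further than the paper, and you yourself flag that the argument is incomplete.

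Your reduction is correct. The step from ``$B=P^\top A^{-1/3}P$ diagonal of rank $m$'' to ``$\Img(B)=\mathrm{row}(U_1)$ is a coordinate subspace'' is clean, and the resulting block structure $U=\Pi\,\diag{Q_1,Q_2}$ with $Q_1\in\mathcal{O}_m$, $Q_2\in\mathcal{O}_{n-m}$ is exactly right. The algebraic collapse to $F\Delta^{-1}F=\Delta D\Delta$ with $F=Q_1^\top\Sigma^2 Q_1$ is also correct. You are equally right that this is where the proof stops: nothing you have written forces $F$ to be diagonal, and the homotopy idea is a strategy, not an argument.

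Two of your side remarks are not technicalities but genuine corrections to the conjecture as stated. First, when $\Sigma=cI_m$ (allowed, since only $\Lambda_2$ is assumed to have distinct entries) one gets $A=c^2Q_1DQ_1^\top$, $A^{-1/3}=c^{-2/3}Q_1D^{-1/3}Q_1^\top$, and hence $C=c^{4/3}D^{-1/3}$ is diagonal for \emph{every} $Q_1\in\mathcal{O}_m$; embedding via $U=\diag{Q_1,I_{n-m}}$ gives a bona fide counterexample to the conjecture. Second, even when $\Sigma$ has distinct entries, the block $Q_2\in\mathcal{O}_{n-m}$ is entirely unconstrained by the hypothesis, so the most one can conclude is that $\Lambda_1U=\Lambda_1P$ for some signed permutation $P$, not that $U$ itself is a permutation. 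Neither defect affects the application in Lemma~\ref{lem:pita}, because that proof only uses $\Lambda_1U\Lambda_2U^\top\Lambda_1^\top$, which is invariant under both sign changes and the $Q_2$-action; and the paper's random tests miss the first issue because random $\Lambda_1$ has distinct entries almost surely. So your analysis sharpens the paper's statement but, as you acknowledge, does not close it.
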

Note that the converse of Conjecture~\ref{conj_lin_prog} always holds. That is, if $U \in \mathcal{P}_m$, then $U^\top \Lambda_1^\top \bigl(\Lambda_1 U \Lambda_2 U^\top \Lambda_1^\top\bigr)^{-1/3} \Lambda_1 U$ is diagonal.
To empirically validate our conjecture, we conduct extensive numerical tests by running $100000$ simulations for each of five different dimension pairs $(m, n)$: $(5, 10)$, $(10, 20)$, $(10, 50)$, $(50, 100)$, and $(50, 500)$. In each simulation, we randomly generated an orthogonal matrix $U$ and diagonal matrices $\Lambda_1$ and $\Lambda_2$.
Then, we numerically computed $\Lambda_3$ and checked whether this matrix was diagonal. Even though it is unlikely for a randomly generated orthogonal matrix $U$ to be a permutation matrix, we checked it in each simulation and, indeed, found that $U$ was never a permutation matrix. The results show that, in agreement with our conjecture, the matrix $\Lambda_3$ is not diagonal when $U \notin \mathcal{P}_m$. The code to replicate the simulations is available at the GitHub~\url{https://shorturl.at/l1qVf}.
Next, we give the following instrumental result.
\begin{lem}[Maximization of the trace of diagonal and orthonormal matrices product]
\label{lem:pita}
Consider $\Lambda_1 \in \R^{m \times n}$, and $\Lambda_2 \in \R^{n \times n}$, $\Lambda_2 \succ 0$, diagonal matrices\footnote{For a matrix $\Lambda \in \R^{m \times n}$, $m\leq n$, by diagonal we mean a matrix of the form $\left[
\begin{array}{c|c}
\Lambda^m & 0_{m, n-m}
\end{array}
\right]$, with $\Lambda^m \in \R^{m \times m}$ diagonal.} with distinct diagonal entries ordered in descending order. If Conjecture~\ref{conj_lin_prog} holds, then,
\beq
\label{eq:pita}
\max_{V \in \mathcal{O}_n} \tr\bigl(\bigl(\Lambda_1 V \Lambda_2 V^\top \Lambda_1^\top\bigr)^{2/3}\bigr) = \tr\bigl(\bigl(\Lambda_1 \Lambda_2 \Lambda_1^\top\bigr)^{2/3}\bigr).
\eeq
\end{lem}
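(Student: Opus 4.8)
Set $A(V) := \Lambda_1 V \Lambda_2 V^\top \Lambda_1^\top$ and $f(V) := \tr\bigl(A(V)^{2/3}\bigr)$, so that the left-hand side of~\eqref{eq:pita} is $\max_{V \in \mathcal{O}_n} f(V)$. Since $\Lambda_2 \succ 0$ and the leading $m \times m$ block $\Lambda_1^m$ of $\Lambda_1$ is diagonal with nonzero entries (its diagonal entries being positive in the intended application, where $\Lambda_1$ collects singular values), $\Lambda_1$ has full row rank and hence $A(V) \succ 0$ for every $V$; therefore $A(V)^{2/3}$ is a real-analytic function of $A(V)$, $f$ is smooth on the compact manifold $\mathcal{O}_n$, and $f$ attains its maximum at some $\bar V$ which is necessarily a critical point of $f$ on $\mathcal{O}_n$. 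The plan is: (i) characterize the critical points and invoke Conjecture~\ref{conj_lin_prog} to conclude $\bar V \in \mathcal{P}_n$; (ii) maximize $f$ over $\mathcal{P}_n$ by a rearrangement argument, which yields exactly the right-hand side of~\eqref{eq:pita}.

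For step (i), I would parametrize the tangent space $T_V\mathcal{O}_n$ by $\delta V = V\Omega$ with $\Omega$ skew-symmetric, and use the standard identity $d\,\tr(A^{2/3}) = \tfrac{2}{3}\tr\bigl(A^{-1/3}\,dA\bigr)$, valid for $A \succ 0$. A direct computation gives $dA = \Lambda_1 V(\Omega\Lambda_2 - \Lambda_2\Omega)V^\top\Lambda_1^\top$, so, writing $B := V^\top\Lambda_1^\top A(V)^{-1/3}\Lambda_1 V$ (a symmetric matrix), cyclicity of the trace turns the first-order condition $df(V\Omega) = 0$ for all skew $\Omega$ into $\tr\bigl(\Omega(\Lambda_2 B - B\Lambda_2)\bigr) = 0$ for all skew $\Omega$. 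Since $\Lambda_2 B - B\Lambda_2$ is itself skew-symmetric, this forces $\Lambda_2 B = B\Lambda_2$; because $\Lambda_2$ has distinct diagonal entries, $B$ must then be diagonal. But $B$ is precisely the matrix $U^\top\Lambda_1^\top\bigl(\Lambda_1 U\Lambda_2 U^\top\Lambda_1^\top\bigr)^{-1/3}\Lambda_1 U$ appearing in Conjecture~\ref{conj_lin_prog} with $U = V$, so under that conjecture $\bar V \in \mathcal{P}_n$.

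For step (ii), for $P \in \mathcal{P}_n$ the matrix $P\Lambda_2 P^\top$ is diagonal, obtained by permuting the diagonal entries $\lambda_1 > \dots > \lambda_n$ of $\Lambda_2$, and $\Lambda_1 \diag{\delta_1,\dots,\delta_n}\Lambda_1^\top = \diag{(\Lambda_1)_{11}^2\delta_1,\dots,(\Lambda_1)_{mm}^2\delta_m}$; hence $f(P) = \sum_{i=1}^m\bigl((\Lambda_1)_{ii}^2\lambda_{\sigma(i)}\bigr)^{2/3}$, where $\sigma$ ranges over all injections $\{1,\dots,m\}\to\{1,\dots,n\}$, each realized by some $P \in \mathcal{P}_n$. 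Since $t\mapsto t^{2/3}$ is increasing, an optimal $\sigma$ uses only the indices $1,\dots,m$, and maximizing $\sum_{i=1}^m\bigl((\Lambda_1)_{ii}^2\bigr)^{2/3}\lambda_{\sigma(i)}^{2/3}$ over permutations $\sigma$ of $\{1,\dots,m\}$ is a rearrangement problem: as the sequences $\bigl((\Lambda_1)_{ii}^2\bigr)^{2/3}$ and $\lambda_i^{2/3}$ are both in descending order (the positive diagonal entries of $\Lambda_1$ and of $\Lambda_2$ being distinct and descending), the rearrangement inequality gives the maximum at $\sigma = \mathrm{id}$, i.e.\ $\max_{P\in\mathcal{P}_n} f(P) = \sum_{i=1}^m\bigl((\Lambda_1)_{ii}^2\lambda_i\bigr)^{2/3} = \tr\bigl((\Lambda_1\Lambda_2\Lambda_1^\top)^{2/3}\bigr)$. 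Combining the two steps, $\max_{V\in\mathcal{O}_n} f(V) = f(\bar V) \le \max_{P\in\mathcal{P}_n} f(P) = \tr\bigl((\Lambda_1\Lambda_2\Lambda_1^\top)^{2/3}\bigr)$, and the reverse inequality is immediate since $I_n \in \mathcal{O}_n$; this proves~\eqref{eq:pita}.

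The hard part is the transition inside step (i) from ``$B$ is diagonal'' to ``$V$ is a permutation matrix'': the first-order analysis only reveals that $B$ is diagonal at every critical point, and excluding non-permutation orthogonal critical points seems to require a genuinely nontrivial linear-algebraic argument — this is exactly why Conjecture~\ref{conj_lin_prog} is isolated and validated only empirically. The remaining ingredients — compactness and smoothness of $f$, the differential of $\tr(A^{2/3})$ on positive definite matrices, and the rearrangement inequality — are routine.
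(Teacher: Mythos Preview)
Your proposal is correct and follows essentially the same approach as the paper's proof: both derive the first-order optimality condition at a maximizer (you via the tangent-space parametrization $\delta V = V\Omega$, the paper via one-parameter curves $t \mapsto U e^{tS}$), deduce that $B = V^\top\Lambda_1^\top A^{-1/3}\Lambda_1 V$ commutes with $\Lambda_2$ and is therefore diagonal, invoke Conjecture~\ref{conj_lin_prog} to conclude the maximizer is a permutation, and then optimize over $\mathcal{P}_n$. Your step (ii) is slightly more careful than the paper's, which simply asserts the optimal block form of $P^\star$ without naming the rearrangement inequality.
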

\begin{proof}
First, note that the trace is a continuous function and the set of all real orthogonal matrices is compact, hence the maximum is well-defined.
Let $\map{g}{\mathcal{O}_n}{\R}$ be the objective function of the maximization problem in equation~\eqref{eq:pita}, defined by $g(V) := \tr\bigl(\bigl(\Lambda_1 V \Lambda_2 V^\top \Lambda_1^\top\bigr)^{2/3}\bigr)$. 
We recall that the matrix exponential of every skew-symmetric matrix is a real orthogonal matrix of determinant one\footnote{Recall that for any matrix $A$, the identity $\det \left(\e^{A}\right)=\e^{\tr(A)}$ holds. Additionally, the diagonal elements of any skew-symmetric matrix are zero, and therefore its trace is zero.}.
Let $U$ be a global maximizer of problem~\eqref{eq:pita}, then for any skew-symmetric matrix $S$, we have $g(U) \geq g(U e^{tS})$, for all $t$.
This implies that, defining $f(t) :=\tr \bigl(\bigl(\Lambda_1 U \e^{tS} \Lambda_2 \e^{-tS} U^\top \Lambda_1^\top\bigr)^{2/3}\bigr)$, we must have $f'(0) = 0$. Note that the trace operator is linear, so for any time-varying matrix $A(t)$, it holds $\ds \frac{d}{dt} \tr(A(t)) = \ds \tr \Bigl(\frac{d}{dt} A(t)\Bigr)$.
We compute
\begin{align}
f'(t) &= \tr\Bigl(\frac{2}{3}\bigl(\Lambda_1 U \e^{tS} \Lambda_2 \e^{-tS} U^\top \Lambda_1^\top\bigr)^{-1/3}\bigl( \Lambda_1 U S\e^{tS} \Lambda_2 \e^{-tS} U^\top \Lambda_1^\top - \Lambda_1 U \e^{tS} \Lambda_2 S\e^{-tS} U^\top \Lambda_1^\top \bigr)\Bigr)\nonumber\\
&= \tr\Bigl(\frac{2}{3}\bigl(\Lambda_1 U \e^{tS} \Lambda_2 \e^{-tS} U^\top \Lambda_1^\top\bigr)^{-1/3}\Lambda_1 U S\e^{tS} \Lambda_2 \e^{-tS} U^\top \Lambda_1^\top\Bigr)\label{eq:df_1}\\
& \quad- \tr\Bigl(\frac{2}{3}\bigl(\Lambda_1 U \e^{tS} \Lambda_2 \e^{-tS} U^\top \Lambda_1^\top\bigr)^{-1/3}\Lambda_1 U \e^{tS} \Lambda_2 S\e^{-tS} U^\top \Lambda_1^\top\Bigr)\nonumber\\
&= \tr\Bigl(\frac{2}{3}\e^{tS} \Lambda_2 \e^{-tS} U^\top \Lambda_1^\top\bigl(\Lambda_1 U \e^{tS} \Lambda_2 \e^{-tS} U^\top \Lambda_1^\top\bigr)^{-1/3}\Lambda_1 U S\Bigr) \label{eq:df_2}\\
& \quad- \tr\Bigl(\frac{2}{3}\e^{-tS} U^\top \Lambda_1^\top\bigl(\Lambda_1 U \e^{tS} \Lambda_2 \e^{-tS} U^\top \Lambda_1^\top\bigr)^{-1/3}\Lambda_1 U \e^{tS} \Lambda_2 S\Bigr),
\nonumber
\end{align}
where in equality~\eqref{eq:df_1} 
we used the trace additivity property, while in equality~\eqref{eq:df_2} we used the cyclic property of the trace. We have
\[
0 = f'(0) = \tr\Bigl(\frac{2}{3}\Bigl(\Lambda_2 U^\top \Lambda_1^\top\bigl(\Lambda_1 U \Lambda_2 U^\top \Lambda_1^\top\bigr)^{-1/3}\Lambda_1 U - U^\top \Lambda_1^\top\bigl(\Lambda_1 U\Lambda_2 U^\top \Lambda_1^\top\bigr)^{-1/3}\Lambda_1 U\Lambda_2\Bigr)S\Bigr).
\]

Now, letting $S^\top := \Lambda_2 U^\top \Lambda_1^\top\bigl(\Lambda_1 U \Lambda_2 U^\top \Lambda_1^\top\bigr)^{-1/3}\Lambda_1 U - U^\top \Lambda_1^\top\bigl(\Lambda_1 U\Lambda_2 U^\top \Lambda_1^\top\bigr)^{-1/3}\Lambda_1 U\Lambda_2$ (which is skew-symmetric), we get $\tr(S^\top S) = 0$. 
This in turn implies $S = \0_{n\times n}$, thus
\beq
\label{eq:commutes}
\Lambda_2 U^\top \Lambda_1^\top\bigl(\Lambda_1 U \Lambda_2 U^\top \Lambda_1^\top\bigr)^{-1/3}\Lambda_1 U = U^\top \Lambda_1^\top\bigl(\Lambda_1 U\Lambda_2 U^\top \Lambda_1^\top\bigr)^{-1/3}\Lambda_1 U\Lambda_2.
\eeq
Equality~\eqref{eq:commutes} implies that the matrix $U^\top \Lambda_1^\top\bigl(\Lambda_1 U \Lambda_2 U^\top \Lambda_1^\top\bigr)^{-1/3}\Lambda_1 U$ commutes with the matrix $\Lambda_2$. Note that any matrix that commutes with a diagonal matrix with distinct diagonal entries must have all off-diagonal entries equal to zero. Therefore, the matrix 
$
\Lambda_3 := U^\top \Lambda_1^\top\bigl(\Lambda_1 U \Lambda_2 U^\top \Lambda_1^\top\bigr)^{-1/3}\Lambda_1 U
$
is diagonal. Then, according to Conjecture~\ref{conj_lin_prog}, $U$ is a permutation matrix. 
In summary, we have proved that the stationary points of the objective function of problem~\eqref{eq:pita} are obtained when $U$ is a permutation matrix.

Returning to our maximization problem~\eqref{eq:pita}, consider the matrix product $\Lambda_1 V \Lambda_2 V^\top \Lambda_1^\top$. Being $V \in \R^{n \times n}$ a permutation matrix, the product $V \Lambda_2 V^\top$ is a diagonal matrix whose diagonal entries, say them $\tilde \lambda_2$, are a permutation of the diagonal elements of $\Lambda_2$. Additionally, recall that the matrix $\Lambda_1 \in \R^{m \times n}$ is of the form
\[
\Lambda_1 =
\begin{pmatrix}
\Lambda_1^m & \0_{m, n-m}
\end{pmatrix},
\]
where $\Lambda_1^m = [\lambda_1] \in \R^{m \times m}$.
Therefore the product $\Lambda_1 V \Lambda_2 V^\top \Lambda_1^\top$ is an $m\times m$ diagonal matrix with the $i$-th diagonal entry given by the scalar $\lambda_{1,i}^2 \tilde \lambda_{2,i}$, for $i \in \until{m}$.
As a consequence, we have 
$$
\tr\bigl(\bigl(\Lambda_1 V \Lambda_2 V^\top \Lambda_1^\top\bigr)^{2/3}\bigr) = \tr\bigl(\bigl(\Lambda_1P\Lambda_2 P^\top\Lambda_1^\top\bigr)^{2/3}\bigr) = \sum_{i = 1}^m (\lambda_{1,i}^2 \tilde \lambda_{2,i})^{2/3},
$$
where $P \in \mathcal{P}_n$. Since we are assuming that the elements of $\lambda_1$ and $\lambda_2$ are in descending order, then, the $n \times n$ permutation matrices that solve~\eqref{lem:pita} are those keeping the largest $m$ component of the vector $\lambda_2$ in the same positions. That is, are the matrices of the form
\[
P^\star=
\left[
\begin{array}{c|c}
I_m & \0_{m, n-m} \\
\hline
\0_{n-m, m} & P_{n-m}
\end{array}
\right],
\]
where $P_{n-m}$ is a permutation matrix of size $n-m$.
This concludes the proof.
\end{proof}

With Lemma~\ref{lem:pita} in mind, we now proceed to the analysis of the OP at the third level.
\bt[On the minima of~\eqref{eq:op_layer3}]
\label{lem:layer3_alternative}
Consider the minimization problem~\eqref{eq:op_layer3}. Assume Conjecture~\ref{conj_lin_prog} holds. For any $X \in \R^{n \times T}$ such that $C_X \succ 0$, the solution of the minimization problem~\eqref{eq:op_layer3} is
\beq
\label{eq:Wstar}
W^\star = U \Lambda_C^m (V_C^m)^\top \in \R^{m \times n},
\eeq
where $U\in \R^{m \times m}$ is an arbitrary orthogonal matrix, $(V_C, \Lambda_C)$ is the SVD of the covariance matrix $C_X \in \R^{n \times n}$, with $\Lambda_C^m = \diag{\lambda_1^C, \dots, \lambda_m^C}$, and $V_C^m = (v^C_1, \dots, v^C_m)\in \R^{n\times m}$.
\et
\begin{proof}
To solve the minimization problem~\eqref{eq:op_layer3}, we make a change of variables and analyze an equivalent problem. 
Specifically, let $W = r \tilde W$, with $r >0$, and $\tilde W \in \R^{m \times n}$ satisfying $\normF{\tilde W} = 1$. Problem~\eqref{eq:op_layer3} is then equivalent to 
\begin{align}
\label{eq:sm_offline_norms_w_a}
\min_{r>0,~\tilde W \in \R^{m\times n}} \Bigl(2r^2 - 3r^{4/3}\normF{\bigl(\tilde WC_X\trasp{\tilde W}\bigr)^{1/3}}^2\Bigr)
&= \min_{r>0} \Bigl(2r^2 - 3r^{4/3}\max_{\tilde W, \normF{\tilde W} = 1}\normF{\bigl(\tilde WC_X\trasp{\tilde W}\bigr)^{1/3}}^2\Bigr),
\end{align}
where we used the well-known equality $\min(- f(x)) = - \max (f(x))$. To solve the OP~\eqref{eq:sm_offline_norms_w_a} we proceed as follows:
\begin{enumerate}[label=(\arabic*)]
\item find $\ds \tilde W^\star = \argmax_{\tilde W, \normF{\tilde W} = 1}\normF{\bigl(\tilde W C_X\trasp{\tilde W}\bigr)^{1/3}}^2$, \label{item1:sm_multiscale_problem_1}
\item solve the scalar minimization problem $\ds \min_{r>0} \Bigl(2r^2 - 3r^{4/3}\normF{\bigl(\tilde W^\star C_X\trasp{(\tilde W^\star)}\bigr)^{1/3}}^2\Bigr)$.
\label{item2:sm_multiscale_problem_1}
\end{enumerate}
We begin by analyzing item~\ref{item1:sm_multiscale_problem_1}. We have
\begin{align}
\normF{\bigl(\tilde W C_X\trasp{\tilde W}\bigr)^{1/3}}^2 &= \normF{\bigl(U_{\tilde W} \Sigma_{\tilde W} V_{\tilde W}^\top {V}_{C} {\Lambda}_{C} {V}_{C}^\top V_{\tilde W} \Sigma_{\tilde W}^\top U_{\tilde W}^\top\bigr)^{1/3}}^2 = \normF{\bigl(\Sigma_{\tilde W} V_{\tilde W}^\top {V}_{C} {\Lambda}_{C} {V}_{C}^\top V_{\tilde W} \Sigma_{\tilde W}^\top\bigr)^{1/3}}^2
\label{item:comp1}\\
&= \tr\bigl(\bigl(\Sigma_{\tilde W} V_{\tilde W}^\top {V}_{C} {\Lambda}_{C} {V}_{C}^\top V_{\tilde W} \Sigma_{\tilde W}^\top\bigr)^{2/3}\bigr),
\label{item:comp3}
\end{align}
where in equality~\eqref{item:comp1} we substituted the SVD of the matrices $\tilde W$ and $C_X$ and used the orthogonality of $U_{\tilde W}$, and in equality~\eqref{item:comp3} we used the symmetry of the matrix $\Sigma_{\tilde W} (V_{\tilde W})^\top {V}_{C} {\Lambda}_{C} ({V}_{C})^\top V_{\tilde W} \Sigma_{\tilde W}$\footnote{$\bigl(\Sigma_{\tilde W} (V_{\tilde W})^\top {V}_{C} {\Lambda}_{C} {V}_{C}^\top V_{\tilde W} \Sigma_{\tilde W} \bigr)^\top = \Sigma_{\tilde W} V_{\tilde W}^\top {V}_{C} {\Lambda}_{C} ({V}_{C})^\top V_{\tilde W} \Sigma_{\tilde W}$.}.
Now, note that 
\begin{enumerate}
\item the matrix $V_{\tilde W}^\top {V}_{C} \in \R^{n \times n}$ is orthogonal. In fact, we have
$
V_{\tilde W}^\top {V}_{C} \bigl(V_{\tilde W}^\top {V}_{C}\bigr)^\top = V_{\tilde W}^\top {V}_{C} {V}_{C}^\top V_{\tilde W} = I_n.
$
Since $\Sigma_{\tilde W} \in \R^{m \times n}$ and $\Lambda_{C}\in\R^{n \times n}$ are diagonal, Lemma~\ref{lem:pita} implies that the maximum of~\eqref{item:comp3} occurs when $V_{\tilde W}^\top {V}_{C} = {V}_{C}^\top V_{\tilde W} = I_n$.
\item 
$
\Sigma_{\tilde W} {\Lambda}_{C} \Sigma_{\tilde W}^\top =
\left[
\begin{array}{c|c}
\Sigma_{\tilde W}^m & 0_{m, n-m}
\end{array}
\right]
\left[
\begin{array}{c|c}
{\Lambda}_{C}^m & 0_{m, n-m} \\
\hline
0_{n-m, m} & {\Lambda}_{C}^{n-m}
\end{array}
\right]
\left[
\begin{array}{c}
\Sigma_{\tilde W}^m \\
\hline 
0_{n-m, m}
\end{array}
\right] = \Sigma_{\tilde W}^m {\Lambda}_{C}^m \Sigma_{\tilde W}^m \in \R^{m \times m}.
$
\item 
$ \normF{\tilde W} = 1 \iff  \tr{\bigl(\tilde W \tilde W^\top\bigr)} = 1 \iff \tr{\bigl(U_{\tilde W} \Sigma_{\tilde W} V_{\tilde W}^\top V_{\tilde W} \Sigma_{\tilde W}^\top U_{\tilde W}^\top\bigr)} \iff \tr{\bigl(\Sigma_{\tilde W} \Sigma_{\tilde W}^\top\bigr)} = \normF{\Sigma_{\tilde W}^m}^2 = 1.$
\end{enumerate}

Therefore, the OP in item~\ref{item1:sm_multiscale_problem_1} is equivalent to the following maximization problem
\begin{align}
\ds \max_{\tilde W, \normF{\tilde W} = 1}\normF{\bigl(\tilde W C_X\trasp{\tilde W}\bigr)^{1/3}}^2
&= \max_{\Sigma_{\tilde W}^m, \normF{\Sigma_{\tilde W}^m} = 1} \sum_{i = 1}^{m}\bigl( \lambda_i\bigl(\Sigma_{\tilde W}^m {\Lambda}_{C}^m \Sigma_{\tilde W}^m \bigr)\bigr)^{2/3} = \max_{\sigma^{\tilde W} \in \R^m_{>0}, \norm{\sigma^{\tilde W}}_2 = 1} \sum_{i=1}^m (\sigma_i^{\tilde W})^{4/3} (\lambda_i^C)^{2/3}\nonumber\\
&= \max_{\sigma^{\tilde W} \in \R^m_{>0}, \norm{\sigma^{\tilde W}}_2 = 1} \Bigl((\sigma^{\tilde W})^{4/3}\Bigr)^\top (\lambda^C)^{2/3}. \label{eq:max_w_final}
\end{align}
By applying H\"older inequality (see, e.g.~\cite[Exercise 2.4]{FB:24-CTDS}) to the scalar product $\Bigl((\sigma^{\tilde W})^{4/3}\Bigr)^\top (\lambda^C)^{2/3}$ with $x = (\sigma^{\tilde W})^{4/3}$, $y = (\lambda^C)^{2/3} \in \R^m$, $p = 3/2$ and $q = 3$, we have
$$
\Bigl((\sigma^{\tilde W})^{4/3}\Bigr)^\top (\lambda^C)^{2/3} \leq \norm{(\sigma^{\tilde W})^{4/3}}_p \norm{(\lambda^C)^{2/3}}_q,
$$
with equality if and only if there exists a constant $h > 0$ such that $h ((\sigma_i^{\tilde W})^{4/3})^{(3/2)} = ((\lambda_i^C)^{2/3})^3$.
Therefore~\eqref{eq:max_w_final} is maximized at $h (\sigma_i^{\tilde W})^{2} = (\lambda_i^C)^{2}$.
By imposing the constraint $\norm{\sigma^{\tilde W}}_2 = 1$, we obtain $h = \norm{\lambda^C}_2$.

Summing up, we have proved that the OP in item~\ref{item1:sm_multiscale_problem_1} is maximized at $\tilde W^\star = U_{\tilde W} \frac{\Lambda_C^m}{\normF{\Lambda_C^m}}V_{\tilde W}^\top$, for any orthogonal matrix $U_{\tilde W} \in \R^{m \times m}$ and where $V_{\tilde W} = (v_C^1, \dots, v_C^m)$. This concludes the proof of step~\ref{item1:sm_multiscale_problem_1}.
Additionally, we note
\begin{align*}
\ds \normF{\bigl(\tilde W^\star C_X\trasp{(\tilde W^\star)}\bigr)^{1/3}}^2 = \Bignorm{\Bigl(U_{\tilde W} \frac{\bigl(\Lambda_C^m\bigr)^3}{\normF{\Lambda_C^m}^2}\trasp{U_{\tilde W}}\Bigr)^{1/3}}_{\textup{F}}^2 = \frac{1}{\normF{\Lambda_C^m}^{4/3}}\normF{\Lambda_C^m}^2 =\normF{\Lambda_C^m}^{2/3},
\end{align*}
where in the first equality we used the fact that $U_{\tilde W}$ is orthogonal.

Next, for step~\ref{item2:sm_multiscale_problem_1}, we solve the OP
\beq
\label{eq:problem_w_r}
\ds \min_{r>0} \Bigl(2r^2 - 3r^{4/3}\normF{\bigl(\tilde W^\star C_X\trasp{(\tilde W^\star)}\bigr)^{1/3}}^2\Bigr) = \ds \min_{r>0} \Bigl(2r^2 - 3r^{4/3}\normF{\Lambda_C^m}^{2/3}\Bigr).
\eeq

Let $g(r) := 2r^2 - 3r^{4/3}\normF{\Lambda_C^m}^{2/3}$ and note that 
$$
g'(r) = 0 
\iff r\bigl(1 - \normF{\Lambda_C^m}^{2/3}r^{-2/3} \bigr) = 0 
\iff r = \normF{\Lambda_C^m} := r^\star.
$$
Additionally 
$
g''(r^\star) = 4 - \frac{4}{3}\normF{\Lambda_C^m}^{2/3} (r^\star)^{-2/3} = 4 - \frac{4}{3} > 0.
$
Therefore we have
\[
r_w := \argmin \Bigl(2r^2 - 3r^{4/3}\normF{\Lambda_C^m}^{2/3}\Bigr) = \normF{\Lambda_C^m}.
\]

Finally, we can conclude that the minimization problem of the third level~\eqref{eq:op_layer3} is optimized at
$$
W^\star = r_w \tilde W^\star = \normF{\Lambda_C^m} U_{\tilde W} \frac{\Lambda_C^m}{\normF{\Lambda_C^m}}V_{\tilde W}^\top = U_{\tilde W} \Lambda_C^m V_{\tilde W}^\top.
$$
This concludes the proof.
\end{proof}
\begin{rem}[Rayleigh Quotient]
When $m=1$, the maximization problem in item~\ref{item1:sm_multiscale_problem_1} simplifies to a well-known problem called \emph{Rayleigh quotient}. Specifically, we have:
\begin{align}
\label{eq:rayleigh_quotient}
\ds \max_{\tilde W\in\R^{1\times m}, \normF{\tilde W} = 1}\normF{\bigl(\tilde W C_X\trasp{\tilde W}\bigr)^{1/3}}^2 &= \Bigl(\max_{\tilde W\in\R^{1\times m}, \lambda_1^W = 1} {\tilde W C_X\trasp{\tilde W}}\Bigr)^{1/3}.
\end{align}
The maximization problem $\ds \max_{\tilde W\in\R^{1\times m}, \lambda_1^W = 1}{\tilde W C_X\trasp{\tilde W}}$ is known as the Rayleigh quotient.
It is well-known (see, e.g.,~\cite{RAH-CRJ:12}) that the maximum value of the Rayleigh quotient is the largest eigenvalue $\lambda_1^C$ of $C_X$. Moreover, this maximum is achieved when $\tilde W$ is the corresponding eigenvector $v_1^C$, that is $\tilde W^\star = v_1^C$.
Substituting this result back into problem~\eqref{eq:rayleigh_quotient}, we get 
$$
\max_{\tilde W\in\R^{1\times m}, \normF{\tilde W} = 1} {(W C_X\trasp{\tilde W})^{1/3}} =  \bigl(\lambda_1^C\bigr)^{1/3}.
$$
\end{rem}
\bibliographystyle{plainurl+isbn}
\bibliography{alias,FB,Main,new}
\end{document}